\documentclass[11pt]{article}
\usepackage[margin=1in]{geometry}
\usepackage{amsthm}
\usepackage{amsmath,amssymb}
\usepackage[utf8]{inputenc} 
\usepackage[T1]{fontenc}    
\usepackage{hyperref}       
\usepackage{url}            
\usepackage{booktabs}       
\usepackage{amsfonts}       
\usepackage{nicefrac}       
\usepackage{microtype}      
\usepackage{palatino}
\usepackage{mathpazo}
\usepackage{graphicx}
\usepackage{color}
\usepackage[linesnumbered,ruled,vlined]{algorithm2e}
\SetKwInput{Input}{Input}
\SetKwInput{Output}{Output}
\usepackage{lipsum}

\newtheorem{theorem}{Theorem}
\numberwithin{theorem}{subsection}

\newtheorem{lemma}{Lemma}

\newtheorem{Prop}{Proposition}

\newtheorem{Cor}{Corollary}

\def\<{\langle}
\def\>{\rangle}
\def\be{\begin{equation*}}
\def\ee{\end{equation*}}
\def\bea{\begin{eqnarray*}}
\def\eea{\end{eqnarray*}}

\def\C{\mathbb{C}}

\def\eps{\varepsilon}
\newcommand{\comment}[1]{}

\newcommand{\tr}{\operatorname{Tr}}

\newcommand{\ip}[2]{\langle #1 , #2\rangle}

\newcommand{\wt}[1]{\widetilde{#1}}

\newcommand{\ket}[1]{\ensuremath{\left|#1\right\rangle}}
\newcommand{\op}[2]{|#1\rangle \langle #2|}

\title{Quantum Identity Testing: A Streaming Algorithm and Applications}

\author{
  Nengkun Yu\\
Centre for Quantum Software and Information\\
University of Technology Sydney \\
}

\begin{document}

\maketitle

\begin{abstract}
Do one or more unknown quantum states exhibit a particular property or are they $\epsilon$-far from having that property in $\ell_1$ distance? This is an important question in quantum computing, formulated as a quantum property testing problem. However, unlike classical property testing, where sampling is performed in a fairly standard way, there are several natural choices for the sampling procedure in the quantum setting. The three most pertinent to this paper are the joint measurement, the independent measurement, and the local measurement. The independent measurement approach to quantum property testing is very difficult to implement with current technology but may become easier in the short-to-medium-term future. Joint measurement has a relatively efficient sample complexity, but it is even harder to implement and so unlikely to be realisable in the near future. Local measurement, however, is relatively easy to implement. But perhaps surprisingly, local measurement has not yet been explored as a method of quantum property testing, even for the problem of quantum state tomography.

Hence, one of the main subjects of this paper is to study quantum property testing with local measurement. In particular, we establish a novel $\ell_2$ norm connection between quantum property testing problems and the corresponding distribution testing problems. This connection opens up the potential to derive efficient testing algorithms using techniques developed for classical property testing. As the first demonstration of these possibilities, we designed two streaming algorithms: one for quantum state tomography, and the other for quantum identity testing. Each employs a fixed one-qubit measurement of each qubit regardless of the size of the system. Furthermore, their simplicity means they can be easily implemented with current technology. To the best of our knowledge, no streaming algorithm has yet been used for quantum property testing.

By using the idea of our tomography algorithm, we obtain a streaming algorithm which provide good estimations for each $k$-qubit reduced density matrice of $m$-qubit state using only $\log m$ copies for constant $k$. This is tight and exponential speedup compare with optimal tomography for each $k$-qubit reduced density matrice.

To illustrate the usefulness of our identity testing algorithm, we achieve the following: independence testing for quantum states; identity and independence testing for quantum state collections; and conditional independence testing for classical-quantum-quantum states. Following the widely believed principle ‘Quantum computer, Classical control', these results initialize the property testing problems of classical-quantum states, which could be particularly useful for the analysis of data generated by future quantum computers.
Additionally, using a dimension splitting technique, we derive a matching lower bound up to log factor for independence testing with joint measurement.

\end{abstract}
\section{Introduction}
\subsection{Background and motivation}
The ability to test whether an unknown object satisfies a hypothetical model based on observed data plays a particularly important role in science \cite{LR05}.
Initially proposed by Rubinfeld and Sudan \cite{RS92,RS96}  to test algebraic properties of polynomials, the concept of property testing has been extended to many objects: graphs, Boolean functions, and so on \cite{Goldreich:1998:PTC:285055.285060,GR11}. At the beginning of this century,
Batu $et$ $al.$ introduced the problem of testing properties associated with discrete probability distributions
\cite{BFR+00,Batu:2004:SAT:1007352.1007414}. In other words, how many samples from a collection of probability distributions
are needed to determine whether those distributions satisfy a particular property with high confidence?
Over the past two decades, this area has become an extremely well-studied and successful branch of property testing due in part to the ongoing data science revolution.
Never have computationally-efficient algorithms, a.k.a, testers, that can identify and/or classify properties using as few samples as possible been in higher demand.

A direct approach to distribution property testing is to accurately reconstruct the given distributions from sufficiently many samples.
It is well known
that, after taking $\Theta(d/\eps^2)$ samples from a $d$-dimensional probability distribution $p$,
the empirical distribution is, with high probability, $\eps$-close to $p$ in total variance distance.
Surprisingly, algorithms using less number of samples than $\Theta(d/\eps^2)$, the number of samples to reconstruct the distribution, exist for many important properties. A very incomplete list of works includes
\cite{BFR+00,Batu:2001:TRV:874063.875541,Batu:2002:CAE:509907.510005,Batu:2004:SAT:1007352.1007414,
Pan08,Valiant:2008:TSP:1374376.1374432,pmlr-v19-acharya11a,LRR11,VV11,Valiant:2011:PLE:2082752.2082882,
Indyk:2012:ATK:2213556.2213561,Daskalakis:2013:TKM:2627817.2627948,Valiant:2014:AIP:2706700.2707449,
Chan:2014:OAT:2634074.2634162,NIPS2015_5839,7354450,JVHW15,Valiant:2016:IOL:2897518.2897641,
WY16,7782980,VV17,pmlr-v65-daskalakis17a,
GOL17,diakonikolas_et_al:LIPIcs:2017:7493,Acharya:2017:SDE:3039686.3039769,Canonne2018,Daskalakis:2018:TIM:3174304.3175435,NIPS2018_7858,CDG17},
and two excellent surveys include more
\cite{Rubinfeld:2012:TBP:2331042.2331052,Canonne2015ASO}.
The equality, or identity, of distributions is a central problem in this branch of study, and one that is frequently revisited  with different approaches due to its importance. In 2014, Chan $et$ $al.$, in \cite{Chan:2014:OAT:2634074.2634162}, settled the complexity of the equality of distributions $\Theta(\max(\sqrt{d}/\epsilon^2,d^{2/3}/\epsilon^{4/3}))$.
Further, the ideas and techniques developed in studying the equality of distributions have helped to completely solve the independence testing for discrete distributions by Diakonikolas and Kane in \cite{DK16},
and have yielded an efficient form of conditional independence testing \cite{Canonne:2018:TCI:3188745.3188756}.

The concept of quantum property testing was formally introduced in Montanaro and de Wolf's comprehensive survey \cite{MdW13}.
At this stage of development in the field of quantum computation, testing the properties of new devices as they are built is a basic problem.
A standard quantum device outputs some known $d$-dimensional (mixed) state $\sigma \in \mathcal{D}(\C^{d})$
but inevitably, the results are noisy such that the actual output state $\rho \in \mathcal{D}(\C^{d})$ is not equal $\sigma$, maybe not even close to. Similar to property testing with classical distributions, properties of $\rho$ need to be verified by accessing the device, say, $n$ times, to derive $\rho^{\otimes n}$.

A significant difference between quantum property testing and classical property testing is the way the objects are sampled.
In classical property testing, each sample is output with a classical index according to the probability distribution and
given a fixed number of samples, the output string obeys the product probability distribution. However,
with quantum property testing, the sampling methods have much richer structures.

\begin{table}[h]
 \centering
\begin{tabular}{|c|c|c|c|}
\hline
\textbf{Measurement}&\textbf{Complexity} & \textbf{Dimension}& \textbf{Implementation}\\ \hline
Joint & Low & $d^k$ &Hard even in the future\\ \hline
Independent & Medium & $d$& Hard right now\\ \hline
Local & High & $2$ &Easy right now\\ \hline
\end{tabular}
\caption{Typical Quantum Sampling Methods}
\end{table}
Among the many available sampling methods for quantum property testing (given a fixed number of copies, says $n$, of the states $\rho\in\mathcal{D}(\C^{d})$), the three listed in Table 1 are of particular interests, i.e., joint measurement, independent measurement, and the local measurement. Joint measurement, the most general, allows arbitrary measurements of $\C^{d^n}$. Independent measurement only allows non-adaptive measurements of each copy of $\rho$, which results in, $n$ measurements of $\C^{d}$. And, if $\rho$ is regarded as a $\lceil\log d \rceil$ qubit state, local measurement only allows non-adaptive measurement of each qubit. Unlike independent and local measurement, joint measurement has the potential to provide the optimal number of samples, but there are two caveats. ``Optimal'' joint measurement algorithms usually require an \textit{exponential} number of copies of the quantum state to produce optimal results. They are also based on the assumption of noiseless, universal quantum computation on the \textit{exponential} number of copies of the quantum state. For instance, the optimal tomography algorithms of $k$-qubit quantum state in \cite{HHJ+16,OW16,OW17} require a measurement on $O(\frac{k2^{2k}}{\epsilon^2})$ qubits. Even in the future when quantum computers become a reality, implementing optimal joint measurement would be extremely hard given these conditions. Independent measurements are not feasible with the currently-available technology but should be relatively easy to implement once we have a quantum computer. For instance, performing an independent measurement $[\frac{1}{4}(2I^{\otimes 8}+X^{\otimes 8}+Y^{\otimes 8}),\frac{1}{4}(2I^{\otimes 8}-X^{\otimes 8}-Y^{\otimes 8})]$ on an $8$ qubit quantum system is currently very hard.

Today, most measurements are based on local measurement because, unlike joint and independent measurement, the local measurement method is fashioned after the way streaming algorithms work. To illustrate: for any $n$-qubit state $\rho$ that Alice sends to Bob, Bob only needs to perform a single-qubit measurement of each qubit he receives; the results are then stored as classical outputs. Hence, when $n$ is large, and not all  $n$ qubits can be received simultaneously, local measurement schemes do not require quantum storage whereas a general independent measurement scheme would require quantum storage for $n$ qubit.

A more direct approach to quantum property testing is to estimate $\rho$ by sampling from $\rho^{\otimes n}$, which also means one could check any property of interest. This problem has been solved in three recent papers \cite{HHJ+16,OW16,OW17}. All three show that with joint measurement $\wt{\Theta}(d^2 /\epsilon^2)$ copies of~$\rho$ are sufficient to produce a trace distance error of less than $\eps$ with high probability.
Haah $et$ $al.$ \cite{HHJ+16} proved that, $\wt{\Theta}(d^2 /\epsilon^2)$ copies are necessary. In addition, Haah $et$ $al.$ \cite{HHJ+16} proved that, with independent measurement, $\Omega(d^3/\eps^2)$ copies are needed for a good estimation. Together with the sequence of works \cite{compressed,FlammiaGrossLiuEtAl2012,Vlad13,KRT14}, we now have a complete picture that shows $\Theta(d^3/\eps^2)$ copies are optimal for estimating $\rho$.

However, like classical property testing, this idea is not optimal for a general property. For example, O’Donnell and Wright \cite{OW15} proved that, with joint measurement, $\Theta(d/\epsilon^2)$ copies are necessary and sufficient to test whether a given state is maximally mixed. Yet in the general setting, where one is given access to unknown $d$-dimensional quantum mixed states $\rho$ and $\sigma$, $\Theta(d/\epsilon^2)$ copies are necessary and sufficient to test whether they are identical or $\epsilon$-far apart \cite{BOW17}.

Acharya $et$ $al.$ \cite{JISW17} studied methods for estimating the von Neumann entropy of general quantum states, while Gross $et$ $al.$ \cite{GNW17} showed that ``stabilizerness" can be tested efficiently. In \cite{NIPS2018_8111}, the online learning of quantum states was studied, which allows adaptive measurement of each copy of $\rho$. In \cite{Aaronson:2018:STQ:3188745.3188802,AR19}, the connection between quantum learning and differential privacy was established. Testing properties of distributions using quantum algorithms were studied in \cite{GL19}.

Following the widely believed principle `Quantum computer, Classical control', the fully-fledged quantum computer will be controlled through a classical system. Therefore, the data generated by quantum computers would be modeled by classical-quantum states, e.g., classical collections of quantum states. The importance of classical-quantum states also comes from its central role in studying quantum communication complexity \cite{1238196,Anshu:2017:ESQ:3055399.3055401}. In classical property testing, Levi, Ron, and Rubinfeld initialized the study of property testing of collections of distributions in their pioneering work \cite{LRR11}. Their work would be directly applied to property testing of data generated from several locations in a large sensor-net where the samples of distributions usually come from several locations. This motivates us to study the property testing problems of classical-quantum states.

In this paper, we primarily focus on quantum property testing through local measurement, which is the most friendly approach given the currently-available experimental technology, although we do explore joint and independent measurement by way of comparison. Surprisingly, local measurement has not yet been properly considered as a quantum property testing scheme, not even for the fundamental quantum state tomography problem. We also initialize the study of the property testing problems of classical-quantum states, due to the importance of classical-quantum states.

\subsection{Our contributions}

The local measurement scheme presented in this article maintains an interesting relation between the $\ell_2$ distance
of quantum states and the $\ell_2$ distance of the generated corresponding probability distributions.
Given that $\ell_2$ distance plays a central role in classical property testing, our approach invokes an immediate connection between quantum and classical property testing. This is a conceptual contribution because previous research into quantum property testing has always been in isolation of classical property testing, whereas this scheme opens up the potential to design streaming algorithms for quantum property testing from ingenious ideas and techniques of distribution testing. Further, this is a fixed measurement scheme that does not depend on the property to be tested, which makes our algorithms a perfect fit for implementation with current experiments. Our approach is the first reduction from quantum property testing to distribution testing, this paper includes derivations of efficient quantum streaming algorithms for quantum state tomography, quantum identity testing and independence testing using local measurement, along with classical property testing algorithms. As a comparison, we have also included the results derived from both the independent and joint measurement approaches with our tester and the identity tester in \cite{BOW17}.

The streaming algorithms presented have run-times that are exponential to the
number of qubits, but are entirely \textit{classical} once the sampling complete. An exponential running time is unavoidable because, with quantum state tomography, the length of the output is already exponential in the
number of qubits which entails exponential run time. For most of the problems considered here with a joint measurement approach, a dimension splitting technique demonstrates matching lower bounds that are exponential in the
number of qubits up to a polylog factor in some cases.

In terms of specific formal demonstrations, the first is the following connection between quantum property testing and distribution testing in the
independent measurement setting.
\begin{theorem}\label{indepent-measurement-classical}
There is an independent measurement scheme that maps the quantum states in  $\mathcal{D}(\C^{d})$ into $d(d+1)$-dimensional probability distributions such that, for any pair of quantum states $\rho$ and $\sigma$
\begin{align}
||p-q||_2&=\frac{||\rho-\sigma||_2}{d+1},\\
||p||_2, ||q||_2& \leq\frac{\sqrt{2}}{d+1},
\end{align}
where $p$ and $q$ are the corresponding probability distributions of $\rho$ and $\sigma$, respectively	.
\end{theorem}
By generalizing this result, we demonstrate the following connection between quantum property testing and distribution testing in a
local measurement setting.
\begin{theorem}\label{local-measurement-classical}
For $\rho_{1,2,\dots,m}$ and $\sigma_{1,2,\dots,m}\in \mathcal{D}(\C^{d_1}\otimes \C^{d_2}\otimes\cdots\otimes\C^{d_m})$ of dimension $d_i$ are to the power of $2$. When measuring each local party using measurements that correspond to MUB of $d_i$, the resulting probability distributions
$p_{1,2,\dots,m},q_{1,2,\dots,m}\in \Delta(\times_{i=1}^m[d_i(d_i+1)])$ for any $S\subset[m]$ satisfy that
\begin{align*}
&||p_{1,2,\dots,m}-q_{1,2,\dots,m}||_2=\frac{\sqrt{\sum_{S\subset[m]}||\rho_{S}-\sigma_{S}||_2^2}}{\Pi_{i=1}^m(d_i+1)},\\
&||p_{S}||_2,||q_S||_2\leq\frac{2^{|S|/2}}{\Pi_{i\in S}(d_i+1)}.
\end{align*}	
\end{theorem}
These connections enables us to derive the upper bounds of sample complexity on quantum property testing as follows. For $m$ qubit state, we summarize these bounds and some known bounds together in Table \ref{BTP}.
\begin{table}[h]
 \centering
\begin{tabular}{|c|c|c|c|}
\hline
\textbf{Measurement}&\textbf{Local} & \textbf{Independent}& \textbf{Joint}\\ \hline
Tomography & $O(\frac{18^m}{\epsilon^2})$ & $\Theta(\frac{8^m}{\epsilon^2})$ \cite{HHJ+16} & $\Theta(\frac{4^m}{\epsilon^2})$\cite{HHJ+16,OW16}\\ \hline
Local tomography & $\Theta(\frac{\log m}{\epsilon^2})$ &  & \\ \hline

Identity & $O(\frac{(6\sqrt{2})^m}{\epsilon^2})$ & $O(\frac{4^m}{\epsilon^2})$ & $\Theta(\frac{2^m}{\epsilon^2})$ \cite{BOW17}\\ \hline
Independence & $O(\frac{(6\sqrt{2})^m}{\epsilon^2})$ & $O(\frac{4^m}{\epsilon^2})$ & $\tilde\Theta(\frac{2^m}{\epsilon^2})$\\ \hline
Identity of collection & $O(\frac{(6\sqrt{2})^m}{\epsilon^2})$ & $O(\frac{4^m}{\epsilon^2})$ & $\Theta(\frac{2^m}{\epsilon^2})$\\ \hline
Independence of collection& $O(\frac{(6\sqrt{2})^m}{\epsilon^2})$ & $O(\frac{4^m}{\epsilon^2})$ & $\tilde\Theta(\frac{2^m}{\epsilon^2})$\\ \hline
Conditional independence & & Theorem \ref{cid} & Theorem \ref{cid}  \\ \hline
\end{tabular}
\label{BTP}
\caption{Summary of sample complexity of quanutm property testing.}
\end{table}

For all these properties, optimal joint measurement protocols require to implement noiseless quantum measurements on exponential number of qubits. 

For local measurement, e.g. streaming algorithm, we want to make the following classification because this concept is confusing in the quantum setting.
\begin{itemize}
\item  Pauli measurements can not be performed in the local measurement model. For example, to implement $\{I,X,Y,Z\}^{\otimes 2}$ using local measurements, one can not directly perform
$\{I,X,Y,Z\}$ on each qubit, simply because $X,Y,Z$ are not semi-definite positive. One possible way is to perform $\{I/4,(I+X)/4,(I+Y)/4,(I+Z)/4\}$ on each qubit. Although we can compute the expectation value of the original Pauli measurement through the measurement outcome of the latter one, the statistical fluctuations are now correlated and no explicit method can be used for clarification.

\item The Swap test is not a streaming algorithm, even if we restrict the states $\rho$ and $\sigma$ are both pure. Consider $\rho=\sigma=\op{\psi}{\psi}$ with $\ket{\psi}=\frac{\ket{00}+\ket{11}}{\sqrt{2}}$. The Swap test will output `1' with probability 1, indicating they are equal, using a four-qubit measurement. However, if we are using a ``streaming'' Swap test, in the sense that preforming Swap test on each qubit independently. The Swap test of the first qubit will output `1' with probability $\frac{3}{4}$, so will that of the second qubit. It can not output `1' with probability 1, and therefore, this is not an implementation of the original Swap test.
\end{itemize}
Theorem \ref{local-measurement-classical} lets us derive the first-ever streaming algorithm for quantum state tomography.
\begin{theorem} \label{tomography}
The sample complexity of $m$-qubit quantum state tomography with a streaming algorithm is $O(\frac{18^m}{\epsilon^2})$.
\end{theorem}

The local measurement scheme has a structural advantage: It measures each qubit independently. This could be of great importance by the following observation: Understanding the relation between quantum marginals is considered to be one of the most fundamental
problems in quantum information theory and quantum chemistry because many important physical quantities, such as energy and entropy, only depend on very
small parts of a whole system only–these small parts are the marginal or reduced density matrices.
As a small example of three qubit state $\rho_{1,2,3}$, our local measurement method consists of one-qubit measurements on each qubit.
In this sense, if we want to obtain correlation informaiton about $\rho_{1,2}$, our measurement does not corrupt the correlation information about $\rho_{1,3}$ very much in expectation. Formally, 
the local measurement structure enables us to derive the $k$-local tomography whose goal is to output good estimation of all $k$-qubit reduced density matrices with good precision with high probability.
\begin{theorem} \label{localtomography}
The sample complexity of $k$-local tomography of $m$-qubit quantum state with a streaming algorithm is $$O(\frac{108^k(\log{{{m}\choose{k}}}+k\log 6)}{\epsilon^2})$$. For constant $k$, it is $\Theta(\frac{\log m}{\epsilon^2})$.
\end{theorem}
For constant $k$, this is a exponential speedup comparing with the optimal joint measurement protocol on each $k$-qubit. The reason is that optimal measurement of $k$-qubit would corrupt many other $k$-qubit state. Therefore, the measurement results can not be resued.

As a direct consequence of Theorem \ref{indepent-measurement-classical}, an $\ell_1$-identity tester can be obtained through independent measurement.
\begin{theorem}\label{iit}
For $\rho,\sigma\in\mathcal{D}(\C^d)$, $O(\frac{d^2}{\epsilon^2})$ copies are sufficient to distinguish via independent measurements, with at least a $\frac{2}{3}$ probability of success, the cases where $\rho=\sigma$ from the cases where $||\rho-\sigma||_1> \epsilon$.
\end{theorem}
This is better than directly using the Swap test which uses $O(\frac{d^2}{\epsilon^4})$ copies, although the Swap test is already a joint measurement.

Theorem \ref{local-measurement-classical} yields a streaming algorithm for quantum identity testing.
\begin{theorem}\label{lcoaliit}
For $m$-qubit quantum states $\rho,\sigma$, $O(\frac{(6\sqrt{2})^m}{\epsilon^2})$ copies are sufficient to distinguish, with at least a $\frac{2}{3}$ probability of success,  the cases where $\rho=\sigma$ from the cases where $||\rho-\sigma||_1\geq \epsilon$ using a streaming algorithm.
\end{theorem}

The above $\ell_1$ identity testers for independent and local measurement together with the $\ell_1$ identity tester of \cite{BOW17} for joint measurement let us derive algorithms for multipartite independence testing.
\begin{theorem}\label{thm2}
The sample complexity of independent testing for $\mathcal{D}(\C^{d_1}\otimes \C^{d_2}\otimes\cdots\otimes\C^{d_m})$, i.e., distinguishing, with at least a $\frac{2}{3}$ probability of success, the cases where $\rho$ is in the tensor product form $\rho_1\otimes\rho_2\otimes\cdots\otimes\rho_m$, or $||\rho-\sigma||_1>\epsilon$ for any tensor product $\sigma$ is:
\begin{itemize}
\item $\Tilde{\Theta}(\frac{\Pi_{i=1}^m d_i}{\epsilon^2})$ with joint measurement;
\item $O(\frac{\Pi_{i=1}^m d_i^2}{\epsilon^2})$ with independent measurement; and
\item $O(\frac{\Pi_{i=1}^m d_i^{1.5+\log 3}}{\epsilon^2})$ with a streaming algorithm where all $d_i$ are to the power of $2$.
\end{itemize}
\end{theorem}
With joint measurement, the lower bound of the quantum independence is derived using the dimension splitting technique and a reduction from determining whether a given state is a maximally mixed state.

Now consider a quantum setting in which one receives data that is most naturally thought of as samples of several quantum states--for example, when studying quantum systems in several geographic locations. Such data could also be generated when samples of the distributions come from various quantum sensors that are each part of a large quantum sensor-net.
Another motivation of studying the property testing of collections of quantum states is the quantum state preparation. Suppose there are different ways of generating a quantum state. We want to know whether these methods all work well. This problem can be formulated as property testing of collections of quantum states.

Motivated by \cite{LRR11} on property testing of collections of distributions, we generalize the identity tester and independence tester from quantum states into collections of quantum states in the query model using the framework in \cite{DK16}.
The result of the identity test for collections of quantum states is as follows.
\begin{theorem}\label{cidentical}
Given access to the quantum states $\rho_1$, . . . , $\rho_n$ on $\mathcal{D}(\C^{d})$ and an explicit $c_i>0$ with $C_1\geq\sum_{i} c_i\geq C_0>0$ where $C_0,C_1$ are absolute constants, the sample complexity of distinguishing, with at least a $\frac{2}{3}$ probability of success, the cases where all $\rho_i$ are identical from the cases where $\sum_i c_i ||\rho_i-\sigma||_1> \epsilon$ for any $\sigma$ is
\begin{itemize}
\item  $\Theta(\frac{d}{\epsilon^2})$ with joint measurement;
\item $O(\frac{d^2}{\epsilon^2})$ with independent measurement; and
\item $O(\frac{d^{1.5+\log 3}}{\epsilon^2})$ with a streaming algorithm where all $d_i$ are to the power of $2$.
\end{itemize}
\end{theorem}
Our result of independence tester about collections of quantum states follows.
\begin{theorem}\label{cindependent}
Given sample access to quantum states $\rho_1$, . . . , $\rho_n$ in $\mathcal{D}(\C^{d_1}\otimes \C^{d_2}\otimes\cdots\otimes\C^{d_m})$ with $d_1\geq d_2\geq\cdots\geq d_m$ and explicit $c_i>0$ with  $C_1\geq\sum_{i} c_i\geq C_0>0$ where $C_0,C_1$ are absolute constants, the sample complexity of distinguishing, with at least a $\frac{2}{3}$ probability of success, the cases where all $\rho_i$ are $m$-partite independent from the cases where $\sum_i c_i ||\rho_i-\otimes_{k=1}^m\sigma_{k,i}||_1> \epsilon$ for any $\sigma_{k,i}\in\mathcal{D}(\C^{d_k})$ is
\begin{itemize}
\item $\Tilde{\Theta}(\frac{\Pi_{i=1}^m d_i}{\epsilon^2})$ with joint measurement;
\item $O(\frac{\Pi_{i=1}^md_i^2}{\epsilon^2})$ with independent measurement; and
\item $O(\frac{d^{1.5+\log 3}}{\epsilon^2})$ with a streaming algorithm where all $d_i$ are to the power of $2$.
 \end{itemize}
\end{theorem}
Like their classical counterparts, the complexity does not depend on the number of states.

In further work, we explore the problem of testing conditional independence with classical-quantum-quantum states. This question naturally arises in studying distributed quantum computing. One typical example is environment assisted entanglement distribution. Suppose $\rho_{ABC}$ is a tripartite state. We want to reach the goal of sharing a bipartite state $\sigma_{AB}$. $C$ should perform a measurement on its system, now the state becomes classical-quantum-quantum.

This problem is a generalization of the independent testing of collections of quantum states in the sense that the prior coefficient of the $\ell_1$ distance is not given explicitly but may be approximated through sampling. One motivation for studying this problem is a simplified version of the conditional independence of general tripartite quantum states, which a fundamental concept in theoretical physics and quantum information theory.

More specifically, we modify the $\ell_2$ estimator developed in \cite{BOW17} for joint measurement and develop a finer
$\ell_2$ estimator for independent measurement. Then we plug that estimator into the classical conditional independence testing framework developed
in \cite{Canonne:2018:TCI:3188745.3188756}.
\begin{theorem}\label{cid}
Given classical-quantum-quantum state
$\rho_{ABC}\in\mathcal{D}(\C^{d_1}\otimes\C^{d_2})\otimes\Delta(C)$ with $\Delta(C)$ being the probabilistic simplex of $C$ and $n=|C|$, the sample complexity of testing whether $A$ and $B$ are conditionally independent given $C$ is
\begin{itemize}
\item $O(\max\{\frac{\sqrt{n}d_1d_2}{\epsilon^2},\min\{\frac{d_1^{\frac{4}{7}}d_2^{\frac{4}{7}}
n^{\frac{6}{7}}}{\epsilon^{\frac{8}{7}}},\frac{\sqrt{d_1d_2}n^{\frac{7}{8}}}{\epsilon}\}\})$
with joint measurement; and
\item $O(\max\{\frac{\sqrt{n}d_1^2d_2^2}{\epsilon^2},\min\{\frac{d_1^{\frac{6}{7}}d_2^{\frac{6}{7}}n^{\frac{6}{7}}}{\epsilon^{\frac{8}{7}}},\frac{d_1^{\frac{3}{4}}d_2^{\frac{3}{4}}n^{\frac{7}{8}}}{\epsilon}\}\})$ with independent measurement.
\end{itemize}
\end{theorem}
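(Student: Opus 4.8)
The plan is to read the classical register $C$ as the conditioning variable of a classical conditional‑independence test, to replace the inner per‑bucket independence test by a quantum $\ell_2^2$ estimator, and to re‑run the optimization of \cite{Canonne:2018:TCI:3188745.3188756}. First I would set up the reduction. Since $C$ is classical, measuring it on one copy of $\rho_{ABC}=\sum_c p_c\,\rho^{(c)}_{AB}\otimes|c\rangle\langle c|$ costs nothing and returns a label $c\sim p=(p_c)_c$ together with a fresh copy of $\rho^{(c)}_{AB}\in\mathcal{D}(\C^{d_1}\otimes\C^{d_2})$, so the problem has exactly the bucketed structure of classical conditional‑independence testing, with the per‑bucket joint distribution replaced by $\rho^{(c)}_{AB}$ and ``$A\perp B\mid C$'' equivalent to ``$\rho^{(c)}_{AB}=\rho^{(c)}_A\otimes\rho^{(c)}_B$ for every $c\in\mathrm{supp}(p)$'', where $\rho^{(c)}_A=\Tr_B\rho^{(c)}_{AB}$, $\rho^{(c)}_B=\Tr_A\rho^{(c)}_{AB}$; this is the common generalization of Theorem~\ref{cindependent} (case $m=2$) in which the weights $p_c$ are not given but must be estimated. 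I would then pass, losing only constants in $\epsilon$, from ``far from all products'' to ``far from $\rho^{(c)}_A\otimes\rho^{(c)}_B$'' by the usual triangle‑inequality argument, and convert the $\ell_1$ far case into an $\ell_2^2$ gap: from $\|M\|_1\le\sqrt{d_1 d_2}\,\|M\|_2$ on $\C^{d_1}\otimes\C^{d_2}$ and Cauchy--Schwarz in the weights $p_c$, the far case yields $\sum_c p_c\,\|\rho^{(c)}_{AB}-\rho^{(c)}_A\otimes\rho^{(c)}_B\|_2^2=\Omega\!\big(\epsilon^2/(d_1 d_2)\big)$, while this sum is $0$ under conditional independence.

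Next I would build the inner estimator: an essentially unbiased estimator, from a few copies of a bipartite state $\tau$, of
\[
\Phi(\tau):=\big\|\tau-\Tr_B\tau\otimes\Tr_A\tau\big\|_2^2=\Tr[\tau^2]-2\,\Tr\!\big[\tau(\Tr_B\tau\otimes\Tr_A\tau)\big]+\Tr[(\Tr_B\tau)^2]\,\Tr[(\Tr_A\tau)^2],
\]
a combination of polynomials in $\tau$ of degree $2,3,4$. For collective measurements this is the $\ell_2$ estimator of \cite{BOW17}, whose variance is controlled by the number of copies and by $\|\tau\|_2,\|\Tr_A\tau\|_2,\|\Tr_B\tau\|_2\le1$. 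For independent measurements I would use a finer estimator built from local measurements only: measuring each copy of $\tau$ with an independently drawn pair of MUBs, one on $\C^{d_1}$ and one on $\C^{d_2}$, sends $\tau$ to a classical distribution $q_\tau$ on $[d_1(d_1+1)]\times[d_2(d_2+1)]$, and the complete‑MUB $2$-design identity $\sum_{b,j}(|e^{(b)}_j\rangle\langle e^{(b)}_j|)^{\otimes2}=I+\mathrm{SWAP}$, tensored over the two systems, gives (writing $(q_\tau)_X,(q_\tau)_Y$ for the marginals of $q_\tau$)
\[
\big\|q_\tau-(q_\tau)_X\otimes(q_\tau)_Y\big\|_2^2=\frac{\Phi(\tau)}{(d_1+1)^2(d_2+1)^2},\qquad \|q_\tau\|_2\le\frac{2}{(d_1+1)(d_2+1)};
\]
so the classical unbiased $\ell_2^2$-independence statistic applied to $q_\tau$ estimates $\Phi(\tau)$ up to the fixed factor $(d_1+1)^{-2}(d_2+1)^{-2}$, and since $q_\tau$ already has tiny $\ell_2$ norm the flattening step of \cite{DK16} is unnecessary (cf.\ the Remark above), which is exactly what keeps the variance small.

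Then I would run the framework of \cite{Canonne:2018:TCI:3188745.3188756} with this inner estimator in place of the classical per‑bucket independence tester: Poissonize with parameter $N$ so that the copies in bucket $c$ form an independent $\mathrm{Poi}(N p_c)$ sample; split the conditioning values into heavy and light buckets using empirical estimates of the $p_c$; compute the $\Phi$-estimator on each bucket; aggregate the heavy and light contributions with the $\mathrm{Poi}$-count weights exactly as there; and threshold. Completeness is immediate since $\Phi(\rho^{(c)}_{AB})=0$ for all $c$ under conditional independence, and soundness follows from the gap $\sum_c p_c\,\Phi(\rho^{(c)}_{AB})=\Omega(\epsilon^2/(d_1 d_2))$ once the statistic's variance is bounded bucket by bucket. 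The variance calculation is that of \cite{Canonne:2018:TCI:3188745.3188756}, with the classical per‑bucket $\ell_2^2$-variance replaced by the variance of the quantum $\Phi$-estimator, and optimizing $N$ against the heavy/light threshold and this variance is their optimization --- now carried out with effective alphabet size $d_1 d_2$ for collective measurements (the dimension of $\rho_{AB}$) and $d_1^2 d_2^2$ for independent measurements (the support size of $q_\tau$), the per‑regime dimension dependence of the two estimators producing the stated exponents; standard amplification gives success probability $\tfrac23$.

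The hard part will be verifying the variance bounds of the inner estimators in exactly the form the plug‑in needs. For collective measurements this means re‑examining the \cite{BOW17} $\ell_2$ estimator across the whole range of per‑bucket copy counts $N_c$ (from $O(1)$ to large) and isolating a bound of the classical template shape $\big(\|\tau\|_2^2+\|\Tr_A\tau\|_2^2+\|\Tr_B\tau\|_2^2\big)/N_c^2+\Phi(\tau)\cdot(\cdots)/N_c+\cdots$; for independent measurements it means arranging the finer estimator so that the $O(1/(d_1 d_2))$ bound on $\|q_\tau\|_2$ genuinely substitutes for flattening. Secondary but still delicate: the marginals $\rho^{(c)}_A,\rho^{(c)}_B$ and the weights $p_c$ are estimated from the same copies, so the heavy/light split and the unbiasedness bookkeeping must be tracked carefully, as in \cite{Canonne:2018:TCI:3188745.3188756}, and the two reductions of the first paragraph must be applied with the weights $p_c$ kept inside the Cauchy--Schwarz so that only constants in $\epsilon$ are lost.
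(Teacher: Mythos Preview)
Your proposal is correct and takes essentially the same approach as the paper: measure $C$, build the per-bucket $\ell_2^2$ estimator (the \cite{BOW17} observable for collective measurements; local MUBs on each factor composed with the classical estimator of \cite{Canonne:2018:TCI:3188745.3188756} for independent measurements, using exactly the scaling identity $\|q_\tau-(q_\tau)_X\otimes(q_\tau)_Y\|_2^2=\Phi(\tau)/[(d_1+1)(d_2+1)]^2$ you wrote), and plug into the Poissonized framework of \cite{Canonne:2018:TCI:3188745.3188756}. One minor simplification worth noting: the paper's algorithm does not explicitly split buckets into heavy and light---it simply sets $A_c=|S_c|\cdot\widehat{\Phi}(\rho^{(c)}_{AB},|S_c|)$ when $|S_c|\ge4$ and $A_c=0$ otherwise, then thresholds $\sum_c A_c$; the heavy/light ($mp_c\gtrless1$) distinction enters only in the expectation lower-bound analysis.
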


\subsection{Organization of this paper}
In Section \ref{PT}, we review prior work on quantum tomography and spectrum estimation, as well as some relevant work on testing properties associated with discrete distributions. Section \ref{PR} recalls the basic definitions of distance with discrete distributions and quantum states and presents some formal tools from earlier work that are used here. In Section \ref{LT}, we derive technical lemmata about the independence and conditional independence of quantum states. Section \ref{connection} demonstrates Theorems \ref{indepent-measurement-classical} and \ref{local-measurement-classical}. Section \ref{SAQ} proves Theorem \ref{tomography}. Section \ref{QSC} contains the results of identity testing and, in particular, our proofs of Theorems \ref{iit} and \ref{lcoaliit}. The proof of Theorem \ref{thm2} is given in Section \ref{TI}. The results of the identity and independence testing of collections of states are provided in Section \ref{TPC}, along with the proofs of Theorems \ref{cidentical} and \ref{cindependent}. The paper concludes with the proof of Theorem \ref{cid} in Section \ref{CIT}.

\section{Prior work on property testing}
\label{PT}The purpose of this section is to review some results on learning and testing unknown quantum states, as well as the corresponding classical problem of learning and testing discrete distributions. We are not able to provide a complete description of all literature simply because this is a very active research line. However, what we have done is provide an overview of some of the best known and most recent results.

\subsection{Quantum state tomography}

Starting from the very basic problem of quantum state tomography, a fundamental problem is to decide how many copies of an unknown mixed quantum state $\rho\in\mathcal{D}(\C^d)$ is necessary and sufficient to output a good approximation of ${\rho}$ with high probability. This problem has been studied extensively since the birth of quantum information theory. The main-stream approach is an independent measurement. A sequence of work \cite{compressed,FlammiaGrossLiuEtAl2012,Vlad13,KRT14} is dedicated to showing that $O(d^3/\eps^2)$ copies are sufficient in an $\ell_1$ distance of no more than $\epsilon$. Haah $et$ $al$. \cite{HHJ+16} showed that $\Theta(d^3/\eps^2)$ is the sample complexity for independent measurement. In the same paper, we proved that $O(\frac{d^2}{\delta}\log(\frac{d}{\delta}))$ copies are sufficient to obtain an infidelity of no more than $\delta$ with a probability of $1-\exp(-\Theta(d^2))$ with joint measurement, which can be regarded as a quantum generalization of Sanov's theorem \cite{SANOV57}. Further, by combining the lower bound of \cite{HHJ+16} and upper bound of \cite{OW16}, the sample complexity of state tomography with joint measurement is $\Theta(\frac{d^2}{\epsilon^2})$ in an $\ell_1$ distance of no more than $\epsilon$.
The work of  \cite{FlammiaGrossLiuEtAl2012} includes in footnote 2 an algorithm (which also appears in the section on quantum process tomography from Nielson and Chuang \cite{Nielsen:2011:QCQ:1972505}) which uses $O(\frac{d^4 \log d}{\epsilon^2})$ copies and only uses Pauli measurements.
\subsection{Quantum identity testing}

Naturally, studying the sample complexity of specific problems in different measurement settings is attractive--for instance, quantum state identification, spectrum estimation, entanglement testing. To do this, easily implementable algorithms that use fewer copies of quantum state than tomography is desirable.
One problem that has received much attention is quantum state identification. Suppose we are given query access to two states $\rho,\sigma\in\mathcal{D}(\C^d)$, and we want to test whether they are equal or have a large $\ell_1$ distance. For practical purposes, the results from cases where $\sigma$ is a known pure state have been extensively studied, in independent measurement setting, and completely characterized \cite{PhysRevLett.106.230501,PhysRevLett.107.210404,AGKE15}, show that $O(\frac{d}{\epsilon^2})$ copies are sufficient.
\cite{OW15} solved the problem, in joint measurement setting, where $\sigma$ is a maximally mixed state case by showing that $\Theta(\frac{d}{\epsilon^2})$ copies are necessary and sufficient. Importantly, the sample complexity of the general problem was proven to be $\Theta(\frac{d}{\epsilon^2})$ in \cite{BOW17} by providing an efficient $\ell_2$ distance estimator between two unknown quantum states.
\subsection{Independence testing}

A multipartite pure state $\ket{\psi}\in(\C^{d})^{\otimes m}$ is called a product state if it can be written as $\ket{\psi}=\otimes_{j=1}^m\ket{\psi_j}$ for some $\ket{\psi_j}\in\C^d$. This type of pure state is called entangled if it is not a product. Entanglement is a ubiquitous phenomenon in quantum information theory. Pure entanglement testing was first discussed by Mintert $et.al$ \cite{PhysRevLett.95.260502}. Harrow and Montanaro \cite{Harrow:2013:TPS:2432622.2432625} subsequently proved that $O(\frac{1}{\epsilon^2})$ copies are sufficient.
A mixed state $\rho\in\mathcal{D}((\C^{d})^{\otimes m})$ is called entangled if it can not be written as a convex combination of the density matrices of product states. That said, the problem of entanglement testing for the general mixed state has barely been touched, even in the simplest case of independence testing the tensor product of a general mixed state.
\subsection{Classical identity testing and independence testing}It is a widely known that $\Theta(\frac{d}{\epsilon^2})$ samples are sufficient and necessary to provide a good approximation of $d$-dimensional probability distribution \cite[pages 10 and 31]{DL01}. Therefore, finding algorithms that use $o(d)$ samples for testing problems is highly desirable.
\subsubsection{Identity testing}

\textbf{Uniformity testing:} In an important work \cite{GR11}, Goldreich and Ron found that the $\ell_2$ norm can be estimated from $O(\frac{\sqrt{d}}{\epsilon^2})$ samples. This led to an algorithm for uniformity testing, $i.e.$, to determine whether a probability is a uniform distribution using $O(\frac{\sqrt{d}}{\epsilon^4})$ samples. Here, both Paninski \cite{Pan08} and Valiant and Valiant \cite{VV11} showed that the complexity is $\Theta(\frac{\sqrt{d}}{\epsilon^2})$.\\
\textbf{Identity testing to a known distribution:} In their seminal work, Batu $et al.$ \cite{BFR+00,Batu:2001:TRV:874063.875541} presented an $\ell_2$-identity tester and used it to build an $\ell_1$ estimator using $O(\frac{\sqrt{d}\log d}{\epsilon^2})$ samples in cases where one distribution is known. In \cite{Valiant:2014:AIP:2706700.2707449}, Valiant and Valiant improved this result to show the sample complexity is $\Theta(\frac{\sqrt{d}}{\epsilon^2})$.\\
\textbf{Identity testing between unknown distributions:} Identity testing between unknown distributions was studied by Batu $et.al.$ \cite{BFR+00} in which they provided a tester using $O(\frac{d^{2/3}\log d}{\epsilon^{8/3 }})$ samples. Chan $et$ $al.$ \cite{Chan:2014:OAT:2634074.2634162} showed the complexity of this problem is $\Theta(\max\{\frac{d^{2/3}}{\epsilon^{4/3}},\frac{\sqrt{d}}{\epsilon^2}\})$. Diakonikolas and Kane  demonstrated a unified approach based on a standard $\ell_2$ tester to resolve the sample complexity of a wide variety of testing problems, including an alternative proof for identity testing in their important work \cite{DK16}.\\\subsubsection{Independence testing}In \cite{Batu:2001:TRV:874063.875541}, Batu $et.al$ presented an independence tester for bipartite independence testing over $[d_1]\times[d_2]$ with a sample complexity of $\tilde{O}(d_1^{2/3}d_2^{1/3})\cdot \mathrm{Poly}(\frac{1}{\epsilon})$, for $d_1\geq d_2$. Levi, Ron and Rubinfeld in \cite{LRR11} showed a lower bound for the sample complexity of $\Omega(\sqrt{d_1d_2})$ for all $d_1\geq d_2$ and $\Omega(d_1^{2/3}d_2^{1/3})$ for $d_1=\Omega(d_2\log d_2)$, while Acharya $et$ $al.$ \cite{NIPS2015_5839} introduced a tester for multipartite independence testing over $\times_{j=1}^m[d_j]$ with sample complexity $O(\frac{\sqrt{\Pi_{j=1}^m d_j}+\sum_{j=1}^m d_j}{\epsilon^2})$ . In their seminal work \cite{DK16}, Diakonikolas and Kane resolved this problem by showing that the sample complexity is $\Theta(\max_k\{\frac{\sqrt{\Pi_{j=1}^m d_j}}{\epsilon^2},\frac{d_k^{1/3}\Pi_{j=1}^m d_j^{1/3}}{\epsilon^{4/3}}\})$. These works motivate our Theorem \ref{thm2}.

\subsubsection{Equivalence testing for collections of discrete distributions}
Levi, Ron and Rubinfeld \cite{LRR11} initialized the study of equivalence testing for collections of discrete distributions in query model. The complexity of this problem is proved to be $\Theta(\max\{\frac{\sqrt{n}}{\epsilon^2},\frac{n^{2/3}}{\epsilon^{4/3}}\})$ by Diakonikolas and Kane in \cite{DK16}. We use these techniques to conduct Theorems \ref{cidentical} and \ref{cindependent}.

\subsubsection{Conditional independence testing} Canonne $et$ $al.$ \cite{Canonne:2018:TCI:3188745.3188756} initiated the study of conditional independence within property testing framework. Given a probability distribution over $[d_1]\times[d_2]\times[n]$ with $d_1\geq d_2$, they developed a tester using $O(\max\{\min\{\frac{n^{7/8}d_1^{1/4}d_2^{1/4}}{\epsilon},\frac{n^{6/7}d_1^{2/7}d_2^{2/7}}{\epsilon^{8/7}}\},\frac{n^{3/4}d_1^{1/2}d_2^{1/2}}{\epsilon} , \frac{n^{2/3}d_1^{2/3}d_2^{1/3}}{\epsilon^{4/3}},\frac{n^{1/2}d_1^{1/2}d_2^{1/2}}{\epsilon^2}\})$ samples, which was proven to be optimal for constant $d_1,d_2$. This framework was used to derive our result for Theorem \ref{cindependent}.

\section{Preliminaries}\label{PR}

This section begins with some standard notations and definitions used throughout the paper.

\subsection{Basic facts for probability distributions}
For $m \in\mathbb{N}$, $[m]$ denotes the set $\{1,\cdots,m\}$, and log denotes the binary logarithm.

A probability distribution over discrete domain ${\Omega}$ is a function
$p:\Omega\mapsto  [0, 1]$
such that $$\sum_{\omega\in\Omega}p(\omega)=1.$$
$|\Omega|$ is the cardinality of set $\Omega$.

$\Delta(\Omega)$ denotes the set of probability distributions over $\Omega$, i.e., the probability simplex of $\Omega$.

The marginal distributions $p_1\in\Delta(A)$ and $p_2\in\Delta(B)$ of a bipartite distribution $p_{1,2}\in\Delta(A\times B)$ can be defined as
\begin{align*}
p_1(a)=\sum_{b\in B} p_{1,2}(a,b),\\
p_{1}(b)=\sum_{a\in A}p_{1,2}(a,b).
\end{align*}
The product distribution $q_1\otimes q_2$ of distributions $q_1\in\Delta(A)$ and $q_2\in\Delta(B)$ can be defined as
$$
[q_1\otimes q_2](a,b)=q_1(a)q_2(b),
$$
for every $(a,b)\in A\times B$.

The $\ell_1$ distance between two distributions $p,q\in\Delta(\Omega)$ is
$$
||p-q||_1=\sum_{\omega\in\Omega}|p(\omega)-q(\omega)|,
$$
and their $\ell_2$ distance is
$$
||p-q||_2=\sqrt{\sum_{\omega\in\Omega}(p(\omega)-q(\omega))^2}.
$$
The Possion distribution $\mathrm{Poi}(\lambda)$ for $\lambda>0$ is
$$
p(k)=e^{-\lambda}\frac{\lambda^k}{k!}
$$
where $k\in\mathbb{N}\cup\{0\}$.

We have employed the standard ``Poissonization" approach, which is widely used in property testing for classical distributions. Namely, we have assumed that, rather than drawing $k$ independent samples for a fixed $k$,
the first step is to select $k'$ according to $\mathrm{Poi}(k)$, and then draw $k'$ samples.
This technique means the number of times the different elements occur becomes independent, which significantly
simplifies the analysis.
Note that if $\mathrm{Poi}(k)$ were tightly concentrated about $k$, this trick would
only lose sub constant factors in terms of sample complexity.

\subsection{Basic quantum mechanics}
An isolated physical system is associated with a
Hilbert space, which is called the {\it state space}. A {\it pure state} of a
quantum system is a normalized vector in its state space, and a
{\it mixed state} is represented by a density operator on the state
space. Here, a density operator $\rho$ on $d$-dimensional Hilbert space $\C^d$ is a
semi-definite positive linear operator such that $\tr(\rho)=1$.
We let
$$
\mathcal{D}({\C^{d}})=\{\rho: \rho~\mathrm{is}~d\mathrm{-dimensional~density~operator~of~}\C^d\}
$$
denote the set of quantum states.

The Pauli matrices in one-qubit are given as
$$I_2=\begin{bmatrix}1 &0\\0&1\end{bmatrix}, X=\begin{bmatrix}0 &1\\1&0\end{bmatrix},  Z=\begin{bmatrix}1 &0\\0&-1\end{bmatrix}, Y=\begin{bmatrix}0 &i\\-i&0\end{bmatrix}.$$
$I_{d}$ denotes the identity operator of $\mathcal{D}(\C^{d})$, and $\frac{I_{d}}{d}$ denotes the maximally mixed states of $\mathcal{D}(\C^{d})$.

\subsection{The tensor product of Hilbert space}

The state space of a composed quantum system is the tensor product of the state spaces of its component systems.
Let $\C^{d_k}$ be a Hilbert space, and $\{\ket{\varphi_{i_k}}\}$  be the corresponding orthonormal basis for $1\leq k\leq m$.
One can define a Hilbert space $\bigotimes_{k=1}^{m}\C^{d_k}$ as the tensor product of Hilbert spaces $\C^{d_k}$ by defining its orthonormal basis as
 $\{|\varphi_{i_1}\rangle\cdots|\varphi_{i_m}\rangle\}$ where the tensor product of vectors is defined by the following new
vector
$$\bigotimes_{k=1}^m\left(\sum_{i_k} \lambda_{i_k} |\psi_{i_k}\>\right)=\sum_{i_1,\cdots,i_m} \lambda_{i_1}\cdots\lambda_{i_m}
|\psi_{i_1}\>\otimes\cdots\otimes |\psi_{i_m}\>.$$
Thus, $\bigotimes_{k=1}^{m}\C^{d_k}$ is also a
Hilbert space where the inner product is defined as follows:
For any $|\psi_{k}\>,|\phi_{k}\>\in\C^{d_k}$
$$\<\psi_1\otimes\cdots\otimes \psi_m|\phi_1\otimes\cdots\otimes\phi_m\>=\<\psi_1|\phi_1\>\cdots\<
\psi_m|\phi_n\>.$$
The following notation denotes the quantum state on multipartite system $\bigotimes_{k=1}^{m}\C^{d_k}$,
\begin{align*}
\mathcal{D}(\otimes_{i=1}^m\C^{d_i})=\mathcal{D}(\C^{\Pi_{i=1}^m d_i}).
\end{align*}

The reduced quantum state of a bipartite mixed state $\rho_{1,2}\in\mathcal{D}(\C^{d_1}\otimes\C^{d_2})$ on the second system
is the
density operators $\rho_2:=\tr_{1}\rho_{1,2}=\sum_i \<i|A|i\>$, where $\{|i\>\}$ is the orthonormal
basis of $\C^{d_1}$. The partial trace of $\rho_1:=\tr_{2}\rho_{1,2}$ can be similarly defined, note that
partial trace functions are also
independent of the selected orthonormal basis.
This definition can be directly generalized into multipartite quantum states.

\subsection{Quantum measurement}

A positive-operator valued measure (POVM) is a measure whose values are non-negative self-adjoint operators in a Hilbert space $\C^{d}$, which is
described by a collection of matrices $\{M_i\}$ with $M_i\geq 0$ and
$$\sum_{i}M_i=I_{d}.$$
If the state of a quantum system was $\rho$
immediately before measurement $\{M_i\}$ was performed on it, then
the probability of that result $i$ recurring is
$$p(i)=\tr(M_i\rho).$$
A Hermitian $O=\sum_j \lambda_j \op{\psi_j}{\psi_j}$ (that is $O=O^{\dag}$) on $\mathcal{H}$ with orthonormal basis $\ket{\psi_j}$ and $\lambda_j\in\mathbb{R}$ always corresponds to the following measurement protocol.
Suppose the state of a quantum system is $\rho$, when measured in basis $\op{\psi_j}{\psi_j}$. If the outcome is $j$, the observed result is $\lambda_j$.
Directly, the output corresponds to a random variable $X$ such that
$$
p(X=\lambda_j)=\tr(\rho\op{\psi_j}{\psi_j}).
$$
Thus,
\begin{align*}
&\mathbb{E}~X=\tr(\rho O)\\
&\mathrm{Var} ~X=\tr(\rho O^2)-\tr^2(\rho O).
\end{align*}
Now suppose we are given $n$ copies of a quantum state $\rho\in\mathcal{D}(\C^{d_1}\otimes\C^{d_2}\otimes\cdots\otimes\C^{d_m})$. The most general way of obtaining information is to perform a measurement $\mathcal{M}=\{M_j\}$ of $\rho^{\otimes n}$,
\begin{align*}
\sum_j M_j=I_{\Pi_{i=1}^m d_i^n}.
\end{align*}
An $M_i$ with the dimension $\Pi_{j=1}^m d_j^n$ is called a joint measurement, which usually does not in tensor product form, not even within the convex cone of the tensor product of $n$ semi-definite positive matrices. Although joint measurement can usually provide an optimal learning, implementing this scheme is usually hard.
For example, the optimal tomography protocols and state certification protocols
given in \cite{OW15,HHJ+16,OW16,OW17,BOW17} for $n$-qubit system require noiseless measurement on an \textit{exponential} number of copies of the quantum state.

A more restricted set of measurements is the set of independent measurement for elements of the form $\mathcal{M}=\otimes_{k=1}^n\mathcal{M}^{(k)}$, where $\mathcal{M}^{(k)}=\{M_{j}^{(k)}\}$ is the measurement of $\rho$,
\begin{align*}
\sum_j M_j^{(k)}=I_{\Pi_{i=1}^m d_i}.
\end{align*}
Compared to joint measurement, independent measurement is much easier to implement, although the information gain is usually not as efficient as an joint measurement for a fixed $n$.

A proper subset of independent measurement is the set of local measurement, whose elements are in the form $\mathcal{M}=\otimes_{k=1}^n\otimes_{t=1}^m\mathcal{M}^{(k,t)}$, where $\mathcal{M}^{(k,t)}=\{M_{j}^{(k,t)}\}$ is the measurement of $\C^{d_t}$,
\begin{align*}
\sum_j M_j^{(k,t)}=I_{d_t}.
\end{align*}
In other words, local measurement $\mathcal{M}^{(k,t)}=\{M_{j}^{(k,t)}\}$ is performed on a local system $\C^{d_t}$. The main advantage of this measurement scheme is that it is easy to implement, even easier than independent measurement. Usually, we choose $d_t=2$.

\subsection{$\ell_1$ distance}
$\ell_1$ distance is used to characterize the difference between quantum states. The $\ell_1$ distance between $\rho$ and $\sigma$ is defined as
\begin{equation*}
||\rho-\sigma||_1\equiv\mathrm{Tr}|\rho-\sigma|
\end{equation*}
where $|A|\equiv\sqrt{A^\dag A}$ is the positive square root of $A^\dag A$.

Given a general operator $A$, the $\ell_1$ norm is defined as
$$
||A||_1=\mathrm{Tr}|A|.
$$
And Lemma \ref{pt1norm} always applies:
\begin{lemma}\cite{Nielsen:2011:QCQ:1972505}\label{pt1norm}
The $\ell_1$ distance is decreasing under partial trace. That is
$$||\rho_1-\sigma_1||_1,||\rho_2-\sigma_2||_1\leq ||\rho_{1,2}-\sigma_{1,2}||_1.$$
\end{lemma}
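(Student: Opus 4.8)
The statement is the standard monotonicity of the trace distance under the partial-trace channel; I would prove it directly rather than invoking the general contraction theorem for completely positive trace-preserving maps, using the variational characterization of the trace norm. The main ingredient is the elementary fact that for any Hermitian operator $A$ on a finite-dimensional space, $||A||_1=\max\{\tr(MA): M=M^\dag,\ -I\le M\le I\}$, the maximum being attained at $M=P_+-P_-$ where $P_\pm$ are the spectral projectors onto the positive and negative eigenspaces of $A$; this is immediate from the spectral decomposition of $A$. The second ingredient is the defining adjoint property of the partial trace, $\tr\big((M\otimes I_{d_2})X\big)=\tr\big(M\,\tr_2 X\big)$ for every $X$ on $\C^{d_1}\otimes\C^{d_2}$ and every $M$ on $\C^{d_1}$.

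Given these, the argument is short. By the symmetric roles of the two systems it suffices to bound $||\rho_1-\sigma_1||_1$. Put $\Delta=\rho_{1,2}-\sigma_{1,2}$, so that $\tr_2\Delta=\rho_1-\sigma_1$ by linearity of the partial trace, and choose an optimal test operator $M$ on $\C^{d_1}$, i.e. $-I_{d_1}\le M\le I_{d_1}$ with $\tr\big(M(\rho_1-\sigma_1)\big)=||\rho_1-\sigma_1||_1$. Since $-I_{d_1d_2}\le M\otimes I_{d_2}\le I_{d_1d_2}$, applying the variational characterization to $\Delta$ with the test operator $M\otimes I_{d_2}$ yields
$$||\rho_1-\sigma_1||_1=\tr\big(M\,\tr_2\Delta\big)=\tr\big((M\otimes I_{d_2})\Delta\big)\le ||\Delta||_1=||\rho_{1,2}-\sigma_{1,2}||_1,$$
and the same computation with the roles of the two systems exchanged gives the bound for $||\rho_2-\sigma_2||_1$.

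Alternatively, one can bypass the variational lemma by arguing from the Jordan decomposition $\Delta=\Delta_+-\Delta_-$ with $\Delta_\pm\ge 0$ of orthogonal support, so $||\Delta||_1=\tr\Delta_++\tr\Delta_-$. As the partial trace is positive and trace-preserving, $\tr_2\Delta_\pm\ge 0$ and $\tr(\tr_2\Delta_\pm)=\tr\Delta_\pm$, whence by the triangle inequality for $||\cdot||_1$,
$$||\rho_1-\sigma_1||_1=||\tr_2\Delta_+-\tr_2\Delta_-||_1\le ||\tr_2\Delta_+||_1+||\tr_2\Delta_-||_1=\tr\Delta_++\tr\Delta_-=||\Delta||_1.$$
There is no genuine obstacle in either route; the only points to watch are the precise normalization of the variational formula (or, for the second route, the fact that partial trace preserves positivity, which follows from $\tr_2X=\sum_i (I_{d_1}\otimes\langle i|)\,X\,(I_{d_1}\otimes|i\rangle)$), and of course this is nothing but the partial-trace instance of the general statement that trace distance does not increase under any CPTP map, as recorded in \cite{Nielsen:2011:QCQ:1972505}.
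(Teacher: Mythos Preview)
Your proposal is correct; both routes you give (the variational characterization of $||\cdot||_1$ and the Jordan decomposition argument) are standard and valid proofs of the monotonicity of trace distance under partial trace. The paper does not actually supply a proof of this lemma: it is stated as a known fact with a citation to \cite{Nielsen:2011:QCQ:1972505} and used as a black box throughout, so there is no ``paper's own proof'' to compare against.
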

Their $\ell_2$ distance is defined as
$$
||\rho-\sigma||_2=\sqrt{\mathrm{Tr}(\rho-\sigma)^2}.
$$
For $\rho,\sigma\in\mathcal{D}(\C^{d})$, we have the following relation between $\ell_1$ and $\ell_2$ distances,
$$
||\rho-\sigma||_2\leq ||\rho-\sigma||_1\leq \sqrt{d}||\rho-\sigma||_2.
$$
Given a subset $\mathcal{P}\subsetneq \mathcal{D}(\C^{d})$, the $\ell_1$ distance between $\rho$ and $\mathcal{P}$ is defined as
$$
||\rho-\mathcal{P}||_1=\inf_{\sigma\in\mathcal{P}}||\rho-\sigma||_1.
$$
If $||\rho-\mathcal{P}||_1>\epsilon$, we say that $\rho$ is $\epsilon$-far from $\mathcal{P}$; otherwise, it is $\epsilon$-close.

\subsection{Classical quantum state}
For classical system $C$ and quantum system $B$, the following set of state is usually called classical-quantum state.
$$
\mathcal{T}_{BC}=\{\sum_{i\in C}p_c \rho^{c}_{B}\otimes\op{c}{c}: p_c\geq 0,~\sum_{c\in C}p_c=1,~\rho^{c}_{B}\in\mathcal{D}(\C^{d_1})\},
$$
where $\ket{c}$ are fixed orthonormal bases of system $C$.
\subsection{Mutually unbiased bases}

In quantum information theory, mutually unbiased bases (MUB) in $d$-dimensional Hilbert space are two orthonormal bases $\{|e_1\rangle, \dots, |e_d\rangle\}$ and $\{|f_1\rangle, \dots, |f_d\rangle\}$ such that the square of the magnitude of the inner product between any basis states $|e_j\rangle$ and $|f_k\rangle$ equals the inverse of the dimension $d$:
$$
    |\langle e_j|f_k \rangle|^2 = \frac{1}{d}, \quad \forall j,k \in \{1, \dots, d\}.
$$
These bases are unbiased in the following sense: if a system is prepared in a state belonging to one of the bases, then all outcomes of the measurement with respect to the other basis will occur with equal probability. It is known that, for $d=p^n$ with prime $p$, there exists $d+1$ MUBs \cite{DEBZ10}.

Note that there are numerous papers that analyze the performance of MUB POVMs, for instance \cite{Roy_2007}.

\subsection{Quantum property testing}
Let $\mathcal{D}(\C^d)$ denote the set of mixed states in Hilbert space $\C^d$, and let a known $\mathcal{T}\subset\mathcal{D}(\C^d)$ be the working domain of the quantum states.
In a standard of property testing scenario, a testing algorithm for a property
$\mathcal{P}\subset \mathcal{T}$ would be an algorithm that, when granted access to independent samples from an unknown quantum state
$\rho\in \mathcal{T}$ as well as an $\ell_1$ distance parameter of $0<\epsilon\leq 1$, outputs either "Yes" or "No", with the following
guarantees:
\begin{itemize}
\item If $\rho\in\mathcal{P}$, then it outputs "Yes" with a probability of at least $\frac{2}{3}$.
\item If $\rho$ is $\epsilon$-far from $\mathcal{P}$, then it outputs "No" with a probability of at least $\frac{2}{3}$.
\end{itemize}
Our interest is in designing computational efficient algorithms with the smallest sample complexity (i.e., the smallest number of samples drawn of $\rho$.).

Confidence of $\frac{2}{3}$ is not essential here, it could be replaced by any constant greater than $\frac{1}{2}$. This would only change the sample complexity by a multiplicative constant. According to the Chernoff bound, the probability of success becomes $1-2^{-\Omega(k)}$, after repeating the algorithm $k$ times.

\subsection{Tools from earlier work}
The following results were established in earlier work, and are used within this paper.

\begin{theorem}
\cite{BBRV02}\label{MUBs}
The Pauli group $\mathcal{P}_k=\{I,X,Y,Z\}^{\otimes k}$ of order $4^k$ can be divided into $2^k+1$ Abelian subgroups with an order of $2^k$, say, $G_0,\dots,G_{2^k}$ such that $G_i\bigcap G_j=\{I_2^{\otimes k}\}$ for $i\neq j$. Each subgroup can be simultaneously diagonalizable by a corresponding basis. All these $2^k+1$ bases form $2^k+1$ MUBs.
\end{theorem}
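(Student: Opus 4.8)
The plan is to recast everything in terms of symplectic linear algebra over $\mathbb{F}_2$ and then exhibit an explicit \emph{symplectic spread}. Up to the choice of a Hermitian representative, each element of $\mathcal{P}_k=\{I,X,Y,Z\}^{\otimes k}$ has the form $P_{(a,b)}:=i^{a\cdot b}X^{a}Z^{b}$ with $X^aZ^b:=X^{a_1}Z^{b_1}\otimes\cdots\otimes X^{a_k}Z^{b_k}$ and $a,b\in\mathbb{F}_2^{k}$; this identifies $\mathcal{P}_k$, as a set of $4^k$ operators (equivalently, as a group modulo phases), with $\mathbb{F}_2^{2k}$. A one-line computation gives $P_{(a,b)}P_{(c,d)}=(-1)^{a\cdot d+b\cdot c}P_{(c,d)}P_{(a,b)}$, so two such operators commute exactly when the symplectic form $\omega\big((a,b),(c,d)\big):=a\cdot d+b\cdot c$ vanishes. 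Hence a set of Pauli strings that pairwise commute (and is therefore simultaneously diagonalizable) corresponds to an $\omega$-isotropic subspace of $\mathbb{F}_2^{2k}$, and it has size $2^k$ iff that subspace is maximal isotropic, i.e.\ Lagrangian (dimension $k$). The theorem thus reduces to producing $2^k+1$ Lagrangians with pairwise intersection $\{0\}$ --- a spread --- and checking the associated joint eigenbases are mutually unbiased; the count is consistent since $(2^k+1)(2^k-1)+1=4^k$.

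For the spread I would use the field structure: identify $\mathbb{F}_2^{k}$ with $\mathbb{F}_{2^k}$ and $\mathbb{F}_2^{2k}$ with $\mathbb{F}_{2^k}\oplus\mathbb{F}_{2^k}$, and note that after a linear change of coordinates $\omega$ becomes $(x_1,y_1),(x_2,y_2)\mapsto \mathrm{Tr}_{\mathbb{F}_{2^k}/\mathbb{F}_2}(x_1y_2+x_2y_1)$, since all non-degenerate alternating forms on $\mathbb{F}_2^{2k}$ are equivalent. Put $L_\alpha:=\{(x,\alpha x):x\in\mathbb{F}_{2^k}\}$ for $\alpha\in\mathbb{F}_{2^k}$ and $L_\infty:=\{(0,y):y\in\mathbb{F}_{2^k}\}$. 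Each is a $k$-dimensional subspace; each is isotropic, since $\mathrm{Tr}(x_1\alpha x_2+x_2\alpha x_1)=\mathrm{Tr}(2\alpha x_1x_2)=0$ in characteristic $2$ (and $L_\infty$ trivially so); distinct members meet only in $0$ because $\alpha x=\beta x$ with $\alpha\neq\beta$ forces $x=0$; and together they exhaust $\mathbb{F}_2^{2k}$, as any nonzero $(x,y)$ lies in $L_{yx^{-1}}$ when $x\neq 0$ and in $L_\infty$ when $x=0$. This yields exactly $2^k+1$ Lagrangians, and by the dictionary above the corresponding subgroups $G_0,\dots,G_{2^k}$ have order $2^k$ and satisfy $G_i\cap G_j=\{I^{\otimes k}\}$ for $i\neq j$.

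Next, for each Lagrangian $L$ the operators $\{P_v:v\in L\}$ are Hermitian involutions, pairwise commuting (isotropy) and trace-orthogonal ($\Tr(P_vP_w)=2^k\delta_{v,w}$), and since $L$ is a subspace they are closed under multiplication up to sign; so they span a maximal commutative subalgebra of the $2^k\times 2^k$ matrices, whose minimal projections are rank one and determine a unique orthonormal basis $B_L$, the stabilizer-type basis attached to $G$. For mutual unbiasedness fix $G_i\neq G_j$, $\ket{e}\in B_i$, $\ket{f}\in B_j$, and expand the rank-one projectors in that algebra: $\op{e}{e}=\tfrac{1}{2^k}\sum_{v\in L_i}\chi_e(v)P_v$ and $\op{f}{f}=\tfrac{1}{2^k}\sum_{w\in L_j}\chi_f(w)Q_w$, where $\chi_e(v)=\bra{e}P_v\ket{e}=\pm1$ and similarly $\chi_f$. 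Then
\begin{align*}
|\bracket{e}{f}|^2=\Tr\big(\op{e}{e}\,\op{f}{f}\big)=\frac{1}{4^k}\sum_{v\in L_i,\ w\in L_j}\chi_e(v)\chi_f(w)\,\Tr(P_vQ_w),
\end{align*}
and $\Tr(P_vQ_w)=0$ unless $v$ and $w$ are the same point of $\mathbb{F}_2^{2k}$; since $L_i\cap L_j=\{0\}$ the only surviving term is $v=w=0$, giving $|\bracket{e}{f}|^2=2^{-k}\cdot2^{-k}\cdot\Tr(I^{\otimes k})=2^{-k}$, which is exactly the MUB condition.

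The step I expect to demand the most care is the bookkeeping around the \emph{projective} nature of the statement: fixing the Hermitian representatives $P_v$ consistently, confirming that "commuting as matrices" is captured precisely by $\omega=0$ on the unphased strings, and arguing that the $2^k$ commuting operators genuinely have one-dimensional joint eigenspaces rather than merely that some common eigenbasis exists. All of this is routine stabilizer-formalism manipulation and is independent of the field construction; the genuinely nontrivial ingredient is the symplectic spread, which the family $\{L_\alpha\}\cup\{L_\infty\}$ supplies for every $k$. The same scheme goes through mutatis mutandis over $\mathbb{F}_p$ for prime $p$ (with Weyl operators in place of Paulis), recovering the existence of $p^n+1$ MUBs in dimension $d=p^n$ mentioned earlier.
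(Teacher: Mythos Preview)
The paper does not actually prove this theorem: it is listed under ``Tools from earlier work'' and cited directly from \cite{BBRV02} without any argument. So there is no ``paper's own proof'' to compare against; your proposal is a self-contained proof of a result the paper simply imports.

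That said, your argument is correct and is in fact the standard construction that the cited reference (and the MUB literature more broadly) uses: identify $\mathcal{P}_k$ modulo phases with the symplectic space $(\mathbb{F}_2^{2k},\omega)$, observe that maximal Abelian subgroups correspond to Lagrangian subspaces, build a spread via the field model $\mathbb{F}_{2^k}\oplus\mathbb{F}_{2^k}$ with the lines $L_\alpha$ and $L_\infty$, and verify unbiasedness by the trace computation $\Tr(P_vQ_w)=2^k\delta_{v,w}$. The only places to tighten are the ones you already flagged: making precise that the $G_i$ are genuine subgroups (not just commuting sets) once Hermitian representatives are fixed, and that the $2^k$ commuting involutions separate the joint eigenspaces down to dimension one. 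Both follow from the fact that the $P_v$ for $v\in L$ are $2^k$ linearly independent commuting operators spanning a maximal commutative $*$-subalgebra, hence their simultaneous eigenspaces are one-dimensional. Your remark on the extension to odd primes via Weyl operators is also accurate and matches the general $d=p^n$ statement the paper quotes just before introducing this theorem.
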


\begin{theorem}\cite{OW15} \label{mixness}
$100\frac{d}{\epsilon^2}$ copies are sufficient and $0.15\frac{d}{\epsilon^2}$ copies are necessary to test whether $\rho\in\mathcal{D}(\C^d)$ is the
maximally mixed state $\frac{I_d}{d}$ or $||\rho-\frac{I_d}{d}||_1>\epsilon$ with at least a $2/3$ probability of success.
\end{theorem}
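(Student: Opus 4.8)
The plan is to prove the two halves separately. For the upper bound ($100\frac{d}{\epsilon^2}$ copies suffice, via a collective measurement) I would first reduce testing maximal mixedness to estimating the purity $\Tr(\rho^2)$: since $\|\rho-\tfrac{I_d}{d}\|_1\le\sqrt{d}\,\|\rho-\tfrac{I_d}{d}\|_2$ and $\|\rho-\tfrac{I_d}{d}\|_2^2=\Tr(\rho^2)-\tfrac1d$, a maximally mixed $\rho$ has $\Tr(\rho^2)=\tfrac1d$ while an $\epsilon$-far $\rho$ has $\Tr(\rho^2)>\tfrac1d+\tfrac{\epsilon^2}{d}$, so it is enough to estimate the purity within additive error $\tfrac{\epsilon^2}{2d}$ and threshold. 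To do that I would use weak Schur sampling: measure $\rho^{\otimes n}$ with the projections onto the Schur--Weyl blocks $V_\lambda^{GL_d}\otimes V_\lambda^{S_n}$ of $(\C^d)^{\otimes n}$, obtaining a Young diagram $\lambda\vdash n$ with at most $d$ rows, distributed as $\Pr[\lambda]=f^\lambda s_\lambda(\mathrm{spec}\,\rho)$, and output $\widehat P:=c_\lambda/\binom{n}{2}$, where $c_\lambda=\sum_i\binom{\lambda_i}{2}-\sum_j\binom{\lambda'_j}{2}$ is the content sum of $\lambda$, equivalently the scalar by which the class sum $\sum_{i<j}(i\,j)\in\C[S_n]$ acts on $V_\lambda^{S_n}$.

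The mean and second moment of $\widehat P$ can be computed by pushing the class-sum operators through Schur--Weyl duality: since $\sum_{i<j}\mathrm{SWAP}_{ij}$ acts as $c_\lambda\cdot I$ on the $S_n$-part of block $\lambda$ while $\rho^{\otimes n}$ acts there as $\pi_\lambda(\rho)\otimes I$, one gets $\mathbb E[c_\lambda]=\Tr\!\big(\rho^{\otimes n}\sum_{i<j}\mathrm{SWAP}_{ij}\big)=\binom{n}{2}\Tr(\rho^2)$ and $\mathbb E[c_\lambda^2]=\Tr\!\big(\rho^{\otimes n}(\sum_{i<j}\mathrm{SWAP}_{ij})^2\big)$, the latter evaluating, after splitting the square into identity, disjoint-transposition and $3$-cycle contributions, to the exact formula $\mathrm{Var}(c_\lambda)=\binom{n}{2}\big[\,1+(3-2n)(\Tr\rho^2)^2+2(n-2)\Tr\rho^3\,\big]$. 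Writing $\rho=\tfrac{I_d}{d}+\tau$ with $\Tr\tau=0$ and using $\|\tau\|_3\le\|\tau\|_2$ gives $\Tr\rho^3-\tfrac1{d^2}\le\tfrac3d\|\tau\|_2^2+\|\tau\|_2^3$, hence $\mathrm{Var}(\widehat P)=O\!\big(\tfrac1{n^2}+\tfrac{s}{nd}+\tfrac{s^{3/2}}{n}\big)$ with $s:=\Tr\rho^2-\tfrac1d$. Running the test "accept iff $\widehat P<\tfrac1d+\tfrac{\epsilon^2}{2d}$" and applying Chebyshev, one checks that at $n=100\tfrac{d}{\epsilon^2}$ both error probabilities fall below $\tfrac13$: in the soundness case $s\ge\epsilon^2/d$, and each of the three variance terms is then $\ll s^2$ using $\epsilon\le 1$. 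The subtle point here is precisely this uniform control of $\mathrm{Var}(\widehat P)$ over \emph{all} $\epsilon$-far states — bounding the $\Tr\rho^3$ contribution by the "signal" $s$ — rather than the (standard) moment computation itself.

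For the lower bound ($0.15\tfrac{d}{\epsilon^2}$ copies are necessary) I would first symmetrize: averaging any tester over conjugation of $\rho$ by a Haar-random $U$ preserves correctness, since both "$\rho=\tfrac{I_d}{d}$" and its $\epsilon$-far complement are unitarily invariant, so the tester may be taken $U^{\otimes n}$-invariant and hence to factor through weak Schur sampling; and since the post-sampling $V_\lambda^{S_n}$-register is maximally mixed and the $V_\lambda^{GL_d}$-register is only scaled (Schur's lemma), the tester is a function of $\lambda\sim\mathrm{SW}^n(\mathrm{spec}\,\rho)$ alone. It then suffices to exhibit a far spectrum whose Schur--Weyl distribution is $o(1)$-close in total variation to that of the uniform spectrum; take $\rho_\star$ with eigenvalues $\tfrac{1+\epsilon}{d}$ and $\tfrac{1-\epsilon}{d}$, each of multiplicity $d/2$, so $\|\rho_\star-\tfrac{I_d}{d}\|_1=\epsilon$ (a harmless constant rescaling makes the inequality strict). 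Bounding $d_{\mathrm{TV}}\le\tfrac12\sqrt{\chi^2}$ with $\chi^2=\mathbb E_{\lambda\sim\mathrm{SW}^n(\mathrm{unif})}\big[(s_\lambda(\mathrm{spec}\,\rho_\star)/s_\lambda(\mathrm{unif}))^2\big]-1$, and expanding via $s_\lambda=\sum_\mu z_\mu^{-1}\chi^\lambda_\mu p_\mu$ together with $p_k(\mathrm{spec}\,\rho_\star)/p_k(\mathrm{unif})=\tfrac12\big((1+\epsilon)^k+(1-\epsilon)^k\big)=1+\binom{k}{2}\epsilon^2+O(\epsilon^4)$, the linear-in-$\epsilon$ part cancels and the leading correction is carried by the transposition class, giving $s_\lambda(\mathrm{spec}\,\rho_\star)/s_\lambda(\mathrm{unif})=1+\tfrac{\epsilon^2}{d}c_\lambda+(\text{higher order})$ in the bulk of the measure (where $s_\lambda(\mathrm{unif})\approx f^\lambda/n!$). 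Feeding in the variance formula above at $\rho=\tfrac{I_d}{d}$ yields $\chi^2=O\!\big(\tfrac{\epsilon^4}{d^2}\mathbb E_{\mathrm{SW}^n(\mathrm{unif})}[c_\lambda^2]\big)=O\!\big(\tfrac{n^2\epsilon^4}{d^2}\big)$, so $d_{\mathrm{TV}}=O(n\epsilon^2/d)<\tfrac13$ once $n\le 0.15\tfrac{d}{\epsilon^2}$.

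I expect the main obstacle to be the rigorous version of this last expansion. One must bound the higher-order-in-$\epsilon$ terms of $s_\lambda(\mathrm{spec}\,\rho_\star)/s_\lambda(\mathrm{unif})$ and justify $s_\lambda(\mathrm{unif})\approx f^\lambda/n!$ uniformly over the typical diagrams $\lambda$, and — the genuinely hard regime — handle $n\gg d$ (i.e.\ small $\epsilon$), where these simplifications degrade and one must instead show that all the moments $p_2,p_3,\dots$ of the two spectra are \emph{jointly} indistinguishable under $\mathrm{SW}^n$; this needs the full moment machinery for the Schur--Weyl distribution (as developed in \cite{OW15}), of which the $c_\lambda$ computation used above is only the first, $p_2$-level, instance. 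The upper-bound direction is comparatively routine once the exact variance formula is in hand, and the case $\epsilon=\Theta(1)$ on both sides — together with $\epsilon$ near $1$, where a projector onto a $d/2$-dimensional subspace serves as the hard instance — is handled by the same arguments with cruder estimates.
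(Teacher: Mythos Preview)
The paper does not prove this statement: Theorem~\ref{mixness} is listed under ``Tools from earlier work'' and is simply quoted from \cite{OW15}, with no accompanying argument. So there is no proof in the paper to compare your proposal against.

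That said, your sketch is a faithful outline of the \cite{OW15} approach. The upper bound via weak Schur sampling and the content statistic $c_\lambda$ as an unbiased estimator of $\binom{n}{2}\Tr(\rho^2)$, together with the exact variance computation through the class-sum algebra, is exactly how O'Donnell and Wright obtain their tester; your identification of the key step (controlling $\Tr\rho^3$ in terms of the signal $s=\Tr\rho^2-\tfrac1d$) is on point. For the lower bound, the symmetrization to a $\lambda$-only tester and the $\chi^2$ bound against the balanced-spectrum alternative $\tfrac{1\pm\epsilon}{d}$ is again their strategy, and you have correctly flagged the genuinely delicate part: the rigorous control of the full power-sum expansion of $s_\lambda(\mathrm{spec}\,\rho_\star)/s_\lambda(\mathrm{unif})$, especially in the regime $n\gg d$, requires the joint moment bounds for the Schur--Weyl distribution developed in \cite{OW15} and is not a two-line argument. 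Your proposal is a correct high-level plan, but since the present paper treats the result as a black box, no comparison beyond ``this is the \cite{OW15} proof'' is possible.
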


\begin{algorithm}[H]\label{mix-OW}
	\Input{$100\frac{d}{\epsilon^2}$ copies of $\rho\in \mathcal{D}(\C^d)$}	.
    \Output{"Yes" with a probability of at least $\frac{2}{3}$ if $\rho=\frac{I_d}{d}$; and "No" with a probability of at least $\frac{2}{3}$ if $||\rho-\frac{I_d}{d}||_1>\epsilon$.}
	\caption{\textsf{A Mixness Test}}
	\label{algo:Mixness-Test}
\end{algorithm}

\begin{theorem}\cite{BOW17}\label{BOW}
For the mixed states $\rho,\sigma\in \mathcal{D}(\C^d)$ and $\epsilon> 0$, there is an algorithm that, given
$\Theta(\frac{d}{\epsilon^2})$ copies of $\rho$ and $\sigma$, distinguishes between cases where $\rho=\sigma$ and cases where $||\rho-\sigma||_1>\epsilon$ with high probability.
And, in particular, for $n\geq 4$, there is an observable $M$ such that
\begin{align*}
\mathbb{E}(M)&=\mathbb{E}[M(\rho^{\otimes n}\otimes\sigma^{\otimes n})]=||\rho-\sigma||_2^2,\\
\mathrm{Var}(M)&=\tr[M^2(\rho^{\otimes n}\otimes\sigma^{\otimes n})]-||\rho-\sigma||_2^4=O(\frac{1}{n^2}+\frac{||\rho-\sigma||_2^2}{n}).
\end{align*}
\end{theorem}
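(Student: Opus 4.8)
The plan is to exhibit a fixed observable $M$ (depending only on $n$ and $d$) on $\rho^{\otimes n}\otimes\sigma^{\otimes n}$ that is an unbiased estimator of the squared Hilbert--Schmidt distance
\[
\|\rho-\sigma\|_2^2 \;=\; \tr(\rho^2) - 2\tr(\rho\sigma) + \tr(\sigma^2),
\]
and then to convert it into an $\ell_1$ identity tester via the crude comparison $\|\rho-\sigma\|_1 \le \sqrt{d}\,\|\rho-\sigma\|_2$ valid for states in $\mathcal{D}(\C^d)$.

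First I would recall the swap trick: letting $\mathbb{S}$ denote the (Hermitian, unitary) swap operator on $\C^d\otimes\C^d$, one has $\tr(AB)=\tr\bigl((A\otimes B)\,\mathbb{S}\bigr)$ for all operators $A,B$, so each of $\tr(\rho^2)$, $\tr(\sigma^2)$, $\tr(\rho\sigma)$ is the expectation of $\mathbb{S}$ measured on a suitable pair of copies (two $\rho$'s, two $\sigma$'s, or one of each). To shrink the variance I would average over all admissible pairs among the $2n$ available copies: with $A_1,\dots,A_n$ indexing the $\rho$-copies and $B_1,\dots,B_n$ the $\sigma$-copies, set
\[
M \;=\; \binom{n}{2}^{-1}\!\!\sum_{1\le i<j\le n}\!\mathbb{S}_{A_iA_j} \;+\; \binom{n}{2}^{-1}\!\!\sum_{1\le i<j\le n}\!\mathbb{S}_{B_iB_j} \;-\; \frac{2}{n^2}\sum_{i,j=1}^n \mathbb{S}_{A_iB_j},
\]
a $U$-statistic-type combination. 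Linearity of the trace and the swap identity give $\mathbb{E}[M(\rho^{\otimes n}\otimes\sigma^{\otimes n})] = \tr(\rho^2)-2\tr(\rho\sigma)+\tr(\sigma^2) = \|\rho-\sigma\|_2^2$ directly.

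The main obstacle is the variance bound $\mathrm{Var}(M)=O(1/n^2 + \|\rho-\sigma\|_2^2/n)$. I would expand $\tr[M^2(\rho^{\otimes n}\otimes\sigma^{\otimes n})]$ as a double sum over pairs of index-pairs and classify the terms by how many copies the two pairs share: pairs built from disjoint copies reproduce exactly $(\mathbb{E} M)^2$ and cancel in the variance, while pairs overlapping in one or two copies are the genuinely new contributions. Using $\mathbb{S}^2=I$ and the composition rule for products of swap/permutation operators, each such term collapses to the trace of a low-degree product of $\rho$'s and $\sigma$'s --- quantities like $\tr(\rho^3)$, $\tr(\rho^2\sigma)$, or $\tr\bigl((\rho-\sigma)\rho(\rho-\sigma)\bigr)$ --- which I would bound using $0\preceq\rho,\sigma\preceq I$, $\tr\rho=\tr\sigma=1$, and $\|\rho-\sigma\|_2$ itself. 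Counting the configurations of each type against the appropriate normalizations ($\binom{n}{2}^{-2}$, $\binom{n}{2}^{-1}n^{-2}$, $n^{-4}$) produces an $O(1/n^2)$ term and an $O(\|\rho-\sigma\|_2^2/n)$ term; keeping the second dependence explicit, rather than a blanket $O(1/n)$, is precisely what makes the ensuing sample complexity tight and is the delicate part of the bookkeeping. The hypothesis $n\ge4$ enters only to keep all combinatorial coefficients well defined and the bound clean.

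Finally I would assemble the tester. If $\rho=\sigma$ then $\mathbb{E}[M]=0$ and $\mathrm{Var}(M)=O(1/n^2)$; if $\|\rho-\sigma\|_1>\epsilon$ then $\|\rho-\sigma\|_2^2 > \epsilon^2/d$, so $\mathbb{E}[M] > \epsilon^2/d$ while $\mathrm{Var}(M) = O(1/n^2 + \epsilon^2/(dn))$. Choosing $n=\Theta(d/\epsilon^2)$ with a sufficiently large implicit constant forces both standard deviations below $\epsilon^2/(4d)$, so Chebyshev's inequality shows that the test ``declare $\rho=\sigma$ iff $M<\epsilon^2/(2d)$'' succeeds with probability at least $2/3$ (and repetition with a majority vote boosts this to $1-2^{-\Omega(k)}$); this uses $\Theta(d/\epsilon^2)$ copies of each of $\rho$ and $\sigma$.
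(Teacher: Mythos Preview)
The paper does not prove this theorem: it is quoted verbatim as a tool from prior work (\cite{BOW17}), with the only added remark being that the condition $n\ge 4$ is implicit in the proof of Proposition~5.6 there. So there is no in-paper proof to compare against.

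Your sketch is nonetheless the right one, and it is essentially the argument of \cite{BOW17}: build a $U$-statistic out of pairwise swap observables so that the expectation is exactly $\tr(\rho^2)-2\tr(\rho\sigma)+\tr(\sigma^2)$, then analyse the variance by classifying pairs of swap terms according to how many copies they share. The crucial point you correctly flag---that the one-overlap contributions must be shown to be $O(\|\rho-\sigma\|_2^2/n)$ rather than merely $O(1/n)$---is exactly the content of the careful calculation in \cite{BOW17}; it relies on grouping the overlapping terms so that the relevant traces organise into expressions like $\tr\!\bigl((\rho-\sigma)^2\rho\bigr)$ and $\tr\!\bigl((\rho-\sigma)^2\sigma\bigr)$, which are bounded by $\|\rho-\sigma\|_2^2$. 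Your final Chebyshev step and the conversion from $\ell_2$ to $\ell_1$ via $\|\rho-\sigma\|_1\le\sqrt{d}\,\|\rho-\sigma\|_2$ are standard and correct. Note, incidentally, that the paper's Section~\ref{sec:independence} contrasts its independent-measurement tester with a \emph{naive} (non-averaged) swap test that only achieves $O(1/\epsilon^4)$ in $\ell_2$; your averaged construction is the sharper one and is not the object of that comparison.
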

The requirement of $n\geq 4$ is not explicitly implied in the proof of Proposition 5.6 in \cite{BOW17}.

\begin{algorithm}[H]
	\Input{$O(\frac{d}{\epsilon^2})$ copies of $\rho\in \mathcal{D}(\C^d)$ and $O(\frac{d}{\epsilon^2})$ copies of $\sigma\in \mathcal{D}(\C^d)$}	
    \Output{"Yes" with a probability of at least $\frac{2}{3}$ if $\rho=\sigma$; and "No" with a probability of at least $\frac{2}{3}$ if $||\rho-\sigma||_1>\epsilon$.}
	\caption{\textsf{A Identity Test with Joint Measurement}}
	\label{algo:Identity-Test-Joint-Measurements}
\end{algorithm}

\begin{theorem}\cite{Chan:2014:OAT:2634074.2634162}\label{classical}
For $n$-dimensional probability distributions of $p$ and $q$, $O(\frac{b}{\epsilon^2})$ samples are sufficient to distinguish, with at least a $\frac{2}{3}$ probability, the cases where $p=q$ from the cases where $||p-q||_2> \epsilon$, where $b\geq ||p||_2,||q||_2$.
\end{theorem}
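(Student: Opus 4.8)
The plan is to prove this classical $\ell_2$-closeness tester by constructing an unbiased estimator of $||p-q||_2^2$ from Poissonized samples, controlling its variance, and separating the two cases with Chebyshev's inequality. First I would Poissonize: draw $\mathrm{Poi}(m)$ samples from $p$ and, independently, $\mathrm{Poi}(m)$ samples from $q$, where $m=O(b/\epsilon^2)$ is to be fixed at the end. Writing $X_i$ and $Y_i$ for the number of occurrences of symbol $i$ in the two samples, Poissonization makes all the $X_i\sim\mathrm{Poi}(mp_i)$ and $Y_i\sim\mathrm{Poi}(mq_i)$ mutually independent, which is exactly what renders the variance tractable.

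Next I would introduce the statistic $Z=\sum_i\big[(X_i-Y_i)^2-(X_i+Y_i)\big]$ and verify that it is unbiased for $m^2||p-q||_2^2$. Rewriting the summand as $X_i(X_i-1)+Y_i(Y_i-1)-2X_iY_i$ and applying the Poisson factorial-moment identity $\mathbb{E}[X_i(X_i-1)]=(mp_i)^2$ together with the independence of $X_i$ and $Y_i$ gives $\mathbb{E}[Z]=\sum_i(mp_i-mq_i)^2=m^2||p-q||_2^2$. This is the standard chi-squared-type estimator, and the unbiasedness calculation is routine.

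The heart of the argument is bounding $\mathrm{Var}(Z)$. By independence across coordinates the variance decomposes as $\sum_i\mathrm{Var}(T_i)$ with $T_i=(X_i-Y_i)^2-X_i-Y_i$; evaluating $\mathbb{E}[T_i^2]$ requires the Poisson moments up to fourth order, and after simplification each term collapses to the clean form $\mathrm{Var}(T_i)=4(\alpha_i-\beta_i)^2(\alpha_i+\beta_i)+2(\alpha_i+\beta_i)^2$ with $\alpha_i=mp_i$ and $\beta_i=mq_i$. Summing, and using $\sum_i(p_i+q_i)^2\le 2(||p||_2^2+||q||_2^2)\le 4b^2$ together with the Cauchy--Schwarz bound $\sum_i(p_i-q_i)^2(p_i+q_i)\le||p-q||_2^2\,||p+q||_2\le 2b\,||p-q||_2^2$, yields $\mathrm{Var}(Z)=O\big(m^3 b\,||p-q||_2^2+m^2 b^2\big)$. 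I expect this variance computation to be the main obstacle: the bookkeeping of the fourth-order Poisson moments is delicate, and, more importantly, it is exactly the step of isolating the cross term and bounding it by $O(b)\,||p-q||_2^2$ (rather than by the domain size) that produces the dependence on $b$ instead of $n$, hence the advertised sample complexity.

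Finally I would set the threshold $\tau=\tfrac12 m^2\epsilon^2$ and declare ``equal'' iff $Z\le\tau$. When $p=q$ we have $\mathbb{E}[Z]=0$ and $\mathrm{Var}(Z)=O(m^2 b^2)$, so Chebyshev bounds the error by $O\big(b^2/(m^2\epsilon^4)\big)$; when $||p-q||_2>\epsilon$ we have $\mathbb{E}[Z]>m^2\epsilon^2=2\tau$, whence $\mathbb{E}[Z]-\tau>\tfrac12\mathbb{E}[Z]$ and Chebyshev bounds the error by $O\big(b/(m\epsilon^2)+b^2/(m^2\epsilon^4)\big)$. Choosing $m=Cb/\epsilon^2$ for a sufficiently large absolute constant $C$ drives both errors below $\tfrac13$. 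Since $\mathrm{Poi}(m)$ concentrates tightly about $m$, the analysis transfers to a fixed sample budget of $O(b/\epsilon^2)$, completing the proof.
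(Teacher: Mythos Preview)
The paper does not prove this statement: it is listed under ``Tools from earlier work'' and quoted as a black box from \cite{Chan:2014:OAT:2634074.2634162}, so there is no in-paper proof to compare against. Your proposal is correct and is precisely the argument of the cited reference: Poissonize, use the unbiased statistic $Z=\sum_i\big[(X_i-Y_i)^2-(X_i+Y_i)\big]$, compute $\mathrm{Var}(T_i)=4(\alpha_i-\beta_i)^2(\alpha_i+\beta_i)+2(\alpha_i+\beta_i)^2$, bound the two resulting sums by $O(m^3b\,\|p-q\|_2^2)$ and $O(m^2b^2)$, and finish with Chebyshev at threshold $\tfrac12 m^2\epsilon^2$. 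The only step worth making explicit is your ``Cauchy--Schwarz'' bound $\sum_i(p_i-q_i)^2(p_i+q_i)\le\|p-q\|_2^2\,\|p+q\|_2$: it is Cauchy--Schwarz together with the elementary inequality $\big(\sum_i(p_i-q_i)^4\big)^{1/2}\le\sum_i(p_i-q_i)^2$, i.e.\ $\|v\|_2\le\|v\|_1$ applied to $v_i=(p_i-q_i)^2$.
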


\begin{algorithm}[H]
	\Input{$O(\frac{b}{\epsilon^2})$ copies of $p$ and $O(\frac{b}{\epsilon^2})$ copies of $q$}	
    \Output{"Yes" with probability at least $\frac{2}{3}$ if $p=q$, "No" with probability at least $\frac{2}{3}$ if $||p-q||_2>\epsilon$.}
	\caption{\textsf{An $\ell_2$ norm Identity Test}}
	\label{algo:2-norm-Identity-Test}
\end{algorithm}

\begin{theorem}\cite{Canonne:2018:TCI:3188745.3188756}\label{classical-estimate}
For probability distributions $p_{AB}\in\Delta(A\times B)$ with $b\geq ||p_{AB}||_2,||p_A\otimes p_B||_2$ there is an estimator $Q:(A\times B)^n\mapsto \mathbb{R}$ such that $n\geq 4$ samples of $p_{AB}\in \Delta(A\times B)$, say $X$,
\begin{align*}
&\mathbb{E}Q(X)=||p_{AB}-p_A\otimes p_B||_2^2,\\
&\mathrm{Var}[Q(X)]=O(\frac{b||p_{AB}-p_A\otimes p_B||^2_2}{n}+\frac{b^2}{n^2}).
\end{align*}
\end{theorem}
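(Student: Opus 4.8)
To prove Theorem~\ref{classical-estimate} the plan is to exhibit an explicit unbiased estimator as a degree-four $U$-statistic and then control its variance via the Hoeffding decomposition. Write $r:=p_{AB}-p_A\otimes p_B$ and $\theta:=\|r\|_2^2$, and note $\sum_b r(a,b)=0$ and $\sum_a r(a,b)=0$. Expanding the square,
\[
\theta=\sum_{a,b}p_{AB}(a,b)^2-2\sum_{a,b}p_{AB}(a,b)p_A(a)p_B(b)+\|p_A\|_2^2\|p_B\|_2^2,
\]
a collision term, a cross term, and a product-of-marginal-collisions term. For iid draws $X_1=(A_1,B_1),\dots,X_4=(A_4,B_4)\sim p_{AB}$ these equal, respectively, $\mathbb{E}\,\mathbb{1}[X_1=X_2]$, $\mathbb{E}\,\mathbb{1}[A_1=A_2,\,B_1=B_3]$, and $\mathbb{E}\,\mathbb{1}[A_1=A_2,\,B_3=B_4]$, the last because the index sets $\{1,2\}$ and $\{3,4\}$ are disjoint; this is exactly why $n\ge4$ samples are needed. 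Taking the kernel
\[
h(x_1,x_2,x_3,x_4)=\mathbb{1}[x_1=x_2]-2\,\mathbb{1}[a_1=a_2,\,b_1=b_3]+\mathbb{1}[a_1=a_2,\,b_3=b_4],
\]
I would let $Q(X)$ be the average of $h$ over all ordered quadruples of distinct indices among the $n$ samples, equivalently the $U$-statistic with symmetrized kernel $\bar h$; then $\mathbb{E}\,Q(X)=\theta$ is immediate by linearity of expectation.

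For the variance I would invoke the standard identity $\mathrm{Var}[Q]=\binom{n}{4}^{-1}\sum_{c=1}^{4}\binom{4}{c}\binom{n-4}{4-c}\zeta_c$ with $\zeta_c=\mathrm{Var}\big[\mathbb{E}[\bar h\mid X_1,\dots,X_c]\big]$, which yields $\mathrm{Var}[Q]=O(\zeta_1/n+\zeta_2/n^2+\zeta_3/n^3+\zeta_4/n^4)$ together with $\zeta_1\le\zeta_2\le\zeta_3\le\zeta_4=\mathrm{Var}[\bar h]$. The heart of the matter is the leading projection $g(x):=\mathbb{E}[\bar h\mid X_1=x]$. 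Averaging $h$ over the four slot placements of $x=(a,b)$ and simplifying with $\sum_b r(a,b)=\sum_a r(a,b)=0$ collapses $g$, up to an additive constant, into a combination of $r(a,b)$, the row contraction $u(a):=\sum_{b'}r(a,b')p_B(b')$, and the column contraction $v(b):=\sum_{a'}r(a',b)p_A(a')$. Consequently $\zeta_1\le\mathbb{E}[g(X)^2]=O\big(\mathbb{E}[r(A,B)^2]+\mathbb{E}[u(A)^2]+\mathbb{E}[v(B)^2]+\theta^2\big)$, and each term is $O(b\theta)$: indeed $\mathbb{E}[r(A,B)^2]\le\|p_{AB}\|_\infty\|r\|_2^2\le\|p_{AB}\|_2\,\theta\le b\theta$; by Cauchy--Schwarz $u(a)^2\le\|r(a,\cdot)\|_2^2\|p_B\|_2^2$, so $\mathbb{E}[u(A)^2]\le\|p_B\|_2^2\big(\max_a p_A(a)\big)\|r\|_2^2\le(\|p_A\|_2\|p_B\|_2)\|p_B\|_2\,\theta\le b\theta$, symmetrically for $v$; and $\theta\le2\|p_{AB}\|_2^2+2\|p_A\otimes p_B\|_2^2\le4b^2$ gives $\theta^2\le4b\theta$. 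Hence $\zeta_1=O(b\theta)$.

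For the remaining orders, $\zeta_2,\zeta_3,\zeta_4\le\mathrm{Var}[\bar h]\le\mathbb{E}[\bar h^2]$, and $\mathbb{E}[\bar h^2]$ expands into finitely many expectations of products of equality-indicators over at most eight iid samples. Each such expectation is a sum of monomials in the entries of $p_{AB},p_A,p_B$ in which at least two samples are forced to coincide, hence is bounded by a product of $\ell_\infty$-norms (each $\le1$) times an inner-product-type quantity such as $\|p_{AB}\|_2^2$ or $\langle p_{AB},p_A\otimes p_B\rangle$, both $\le b^2$; thus $\mathbb{E}[\bar h^2]=O(b^2)$ and $\zeta_2,\zeta_3,\zeta_4=O(b^2)$. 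Collecting everything, $\mathrm{Var}[Q]=O(b\theta/n+b^2/n^2)$, which is the asserted bound (with $N=n$).

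I expect the main obstacle to be precisely this last bookkeeping, together with the explicit evaluation of $g$: there are many monomials, and for each one must choose correctly which sample indices to absorb into an $\ell_\infty\le1$ factor and which pair to keep tied so as to extract a $b^2$ (respectively, for $g$, a $b\theta$), so that every contribution lands at $O(b^2)$ rather than the naive $O(b)$ or $O(1)$. A cleaner alternative avoiding the explicit kernel is to simulate samples from $p_A\otimes p_B$ by pairing the first coordinate of one draw with the second coordinate of an independent draw, split the $n$ samples into a ``genuine'' and a ``product'' stream so the two are independent, and feed them to the $\ell_2$ closeness estimator underlying Theorem~\ref{classical}; this reduces the statement to that estimator's known variance guarantee.
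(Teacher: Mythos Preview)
The paper does not prove this statement; it is quoted as a black-box tool from \cite{Canonne:2018:TCI:3188745.3188756} in the ``Tools from earlier work'' subsection and never revisited. So there is no in-paper argument to compare against.

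Your plan is essentially the one in the cited reference: build a degree-four $U$-statistic from the three indicator terms you list, and control the variance through the Hoeffding decomposition with $\zeta_1=O(b\theta)$ and $\zeta_c=O(b^2)$ for $c\ge 2$. The unbiasedness is correct as written, and your bounds on $\mathbb{E}[r(A,B)^2]$, $\mathbb{E}[u(A)^2]$, $\mathbb{E}[v(B)^2]$ are valid once you note $\|p\|_\infty\le\|p\|_2\le 1$ for any probability vector $p$, which you use implicitly. The step you flag as the obstacle---the explicit form of $g(x)=\mathbb{E}[\bar h\mid X_1=x]$ and the monomial-by-monomial verification that $\mathbb{E}[\bar h^2]=O(b^2)$---is genuinely the only work, but it does go through; the cancellations from $\sum_b r(a,b)=\sum_a r(a,b)=0$ are exactly what make $g$ collapse as you describe.

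One caution on your ``cleaner alternative'': Theorem~\ref{classical} as stated in this paper is a \emph{tester}, not an estimator with an explicit mean/variance guarantee, so you would need to invoke the underlying estimator from \cite{Chan:2014:OAT:2634074.2634162} rather than Theorem~\ref{classical} itself. Moreover, to obtain iid samples from $p_A\otimes p_B$ you must use pairwise-disjoint index pairs $(i,j)$---e.g.\ $(A_1,B_2),(A_3,B_4),\dots$---otherwise the simulated stream is dependent and the estimator's variance bound no longer applies. With disjoint pairs you get $\Theta(n)$ samples of each type and the reduction is sound, but the direct $U$-statistic route is both what the cited paper actually does and avoids the constant-factor loss.
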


\section{Quantum Independence and Technical Lemmata}\label{LT}
\subsection{Bipartite independence and approximate independence}
We say that $\rho_{1,2}\in \mathcal{D}(\C^{d}\otimes\C^{d_2})$ is independent if $\rho_{1,2}=\sigma_1\otimes\sigma_2$ for some $\sigma_i\in \mathcal{D}(\C^{d_i})$. One can directly verify that, if $\rho_{1,2}$ is independent, then $\rho=\rho_1\otimes\rho_2$ with $\rho_1$ and $\rho_2$ being the reduced density matrices of $\rho_{1,2}$.

We say that $\rho$ is $\epsilon$-independent with respect to the $\ell_1$ distance if there is an  independent state $\sigma$ such that $||\rho-\sigma||_1\leq\epsilon$.
We say that $\rho$ is $\epsilon$-far from being independent with respect to the $\ell_1$ distance if $||\rho-\sigma||_1>\epsilon$ for any independent state $\sigma$.
\begin{Prop}\label{dis1}
Let $\rho$ and $\sigma$ be bipartite states of $\mathcal{D}(\C^{d}\otimes\C^{d_2})$. If $||\rho-\sigma||_1\leq\epsilon/3$ and $\sigma$ is independent, then $||\rho-\rho_1\otimes\rho_2||_1\leq \epsilon$.
\end{Prop}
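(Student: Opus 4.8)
The plan is to bound $\|\rho - \rho_1\otimes\rho_2\|_1$ by a triangle inequality through the independent state $\sigma = \sigma_1\otimes\sigma_2$, reducing everything to the closeness of the marginals. First I would write
\begin{equation*}
\|\rho - \rho_1\otimes\rho_2\|_1 \leq \|\rho - \sigma_1\otimes\sigma_2\|_1 + \|\sigma_1\otimes\sigma_2 - \rho_1\otimes\sigma_2\|_1 + \|\rho_1\otimes\sigma_2 - \rho_1\otimes\rho_2\|_1.
\end{equation*}
The first term is $\leq \epsilon/3$ by hypothesis. For the second and third terms, I would use the fact that tensoring with a fixed density operator is an isometry for the trace norm, so $\|\sigma_1\otimes\sigma_2 - \rho_1\otimes\sigma_2\|_1 = \|\sigma_1 - \rho_1\|_1$ and $\|\rho_1\otimes\sigma_2 - \rho_1\otimes\rho_2\|_1 = \|\sigma_2 - \rho_2\|_1$.

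The key remaining step is to control $\|\sigma_1 - \rho_1\|_1$ and $\|\sigma_2 - \rho_2\|_1$. Here I would invoke Lemma \ref{pt1norm}: since $\rho_1,\sigma_1$ (resp. $\rho_2,\sigma_2$) are the reduced states of $\rho$ and $\sigma$ on the first (resp. second) system, monotonicity of the trace distance under partial trace gives $\|\rho_1 - \sigma_1\|_1 \leq \|\rho - \sigma\|_1 \leq \epsilon/3$ and likewise $\|\rho_2 - \sigma_2\|_1 \leq \epsilon/3$. One small thing to check is that $\rho_1,\rho_2$ really are the marginals appearing in the claim — but the statement defines $\rho_1\otimes\rho_2$ via the reduced density matrices of $\rho$, so this is immediate, and the fact that $\sigma = \sigma_1\otimes\sigma_2$ has $\sigma_1,\sigma_2$ as its own marginals was already observed before Proposition \ref{dis1}.

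Combining the three pieces yields $\|\rho - \rho_1\otimes\rho_2\|_1 \leq \epsilon/3 + \epsilon/3 + \epsilon/3 = \epsilon$, which is the claim. I do not anticipate a real obstacle here; the only mild subtlety is being careful that the isometry property of $X \mapsto X \otimes \tau$ for fixed state $\tau$ is applied to the (non-positive) difference operators $\sigma_1 - \rho_1$ and $\sigma_2 - \rho_2$ rather than to states, but this follows since $(\sigma_1-\rho_1)\otimes\sigma_2$ has singular values equal to those of $\sigma_1-\rho_1$ each repeated with multiplicity given by the eigenvalues of $\sigma_2$, whose sum is $1$. Everything else is a routine triangle-inequality bookkeeping argument.
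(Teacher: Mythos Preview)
Your proposal is correct and is essentially the same argument as the paper's: the paper packages your second and third triangle-inequality terms into Lemma~\ref{dis2} (the telescoping bound $\|\rho_1\otimes\rho_2-\sigma_1\otimes\sigma_2\|_1\le\|\rho_1-\sigma_1\|_1+\|\rho_2-\sigma_2\|_1$) and then applies Lemma~\ref{pt1norm} exactly as you do, so you have simply inlined that lemma. No gap.
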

Proposition \ref{dis1} follows from the Lemma \ref{dis2}.

According to triangle inequality, observe that
$$
||\rho_1\otimes\rho_2-\sigma_1\otimes\sigma_2||_1\leq ||\rho_1\otimes\rho_2-\rho_1\otimes\sigma_2||_1+||\rho_1\otimes\sigma_2-\sigma_1\otimes\sigma_2||_1=||\rho_1-\sigma_1||_1+||\rho_2-\sigma_2||_1.
$$
Therefore, we have
\begin{lemma} \label{dis2}
$$||\rho_1\otimes\rho_2-\sigma_1\otimes\sigma_2||_1\leq ||\rho_1-\sigma_1||_1+||\rho_2-\sigma_2||_1.$$
\end{lemma}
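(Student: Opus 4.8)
The plan is to prove the bound by interpolating through the hybrid operator $\rho_1\otimes\sigma_2$ and invoking the triangle inequality for the trace norm, exactly as displayed just above the statement. Writing
$$
\rho_1\otimes\rho_2-\sigma_1\otimes\sigma_2 = \rho_1\otimes(\rho_2-\sigma_2) + (\rho_1-\sigma_1)\otimes\sigma_2,
$$
the triangle inequality gives $||\rho_1\otimes\rho_2-\sigma_1\otimes\sigma_2||_1 \le ||\rho_1\otimes(\rho_2-\sigma_2)||_1 + ||(\rho_1-\sigma_1)\otimes\sigma_2||_1$, so it only remains to evaluate the two tensor-product terms.

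The one ingredient that deserves a word of justification is multiplicativity of the trace norm under tensor products: for arbitrary operators $A,B$ one has $||A\otimes B||_1 = ||A||_1\,||B||_1$. I would derive this from the singular value decomposition: if $A$ has singular values $\{s_i\}$ with left/right singular vectors $\{u_i\},\{v_i\}$ and $B$ has singular values $\{t_j\}$ with singular vectors $\{w_j\},\{x_j\}$, then $\{u_i\otimes w_j\}$, $\{v_i\otimes x_j\}$ and $\{s_i t_j\}$ form an SVD of $A\otimes B$, whence $||A\otimes B||_1 = \sum_{i,j} s_i t_j = \bigl(\sum_i s_i\bigr)\bigl(\sum_j t_j\bigr) = ||A||_1\,||B||_1$. (For the lemma only the submultiplicative direction $||A\otimes B||_1\le||A||_1\,||B||_1$ is actually needed.) Applying this together with $||\rho_1||_1 = \tr(\rho_1) = 1$ and $||\sigma_2||_1 = \tr(\sigma_2) = 1$, since $\rho_1$ and $\sigma_2$ are density operators, gives $||\rho_1\otimes(\rho_2-\sigma_2)||_1 = ||\rho_2-\sigma_2||_1$ and $||(\rho_1-\sigma_1)\otimes\sigma_2||_1 = ||\rho_1-\sigma_1||_1$.

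Combining the two displays yields $||\rho_1\otimes\rho_2-\sigma_1\otimes\sigma_2||_1 \le ||\rho_1-\sigma_1||_1 + ||\rho_2-\sigma_2||_1$, which is Lemma \ref{dis2}; Proposition \ref{dis1} then follows by one further triangle inequality together with Lemma \ref{pt1norm}, which bounds $||\rho_i-\sigma_i||_1\le||\rho-\sigma||_1$ for the reduced states. I do not anticipate a real obstacle here: the argument is a routine hybrid/triangle-inequality computation, and the only non-formal point is the tensor-product identity for $||\cdot||_1$, which is immediate from the SVD (or, if one prefers, from the variational characterization $||X||_1 = \sup_{||U||_\infty\le 1}|\tr(UX)|$ applied with contractions of the form $I\otimes U_2$).
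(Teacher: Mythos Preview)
Your proposal is correct and follows exactly the same hybrid/triangle-inequality argument the paper gives just above the lemma, with the only difference being that you spell out the multiplicativity $||A\otimes B||_1=||A||_1\,||B||_1$ via the SVD whereas the paper simply uses $||\rho_1\otimes\rho_2-\rho_1\otimes\sigma_2||_1=||\rho_2-\sigma_2||_1$ without comment.
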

\textit{Proof of Proposition \ref{dis1}:} Clearly, $\sigma=\sigma_1\otimes\sigma_2$. Thus, we have
$$
||\rho-\rho_1\otimes\rho_2||_1\leq ||\rho-\sigma||_1+||\sigma-\rho_1\otimes\rho_2||_1=||\rho-\sigma||_1+||\sigma_1\otimes\sigma_2-\rho_1\otimes\rho_2||_1\leq \epsilon/3+2\epsilon/3=\epsilon,
$$
where the last inequality accords to Lemmas \ref{dis2} and \ref{pt1norm},
$$||\rho_1-\sigma_1||_1,||\rho_2-\sigma_2||_1\leq ||\rho-\sigma||_1\leq \epsilon/3.$$

\subsection{Multipartite independence and approximate independence}

We say that $\rho\in\mathcal{D}(\C^{d}\otimes\C^{d_2}\otimes\cdots\otimes\C^{d_m})$ is $m$-partite independent if $\rho=\rho_1\otimes\rho_2\otimes\cdots\otimes\rho_m$,
and that $\rho$ is $\epsilon$-independent with respect to the $\ell_1$ distance if there is a state $\sigma$ that is $m$-partite independent and $||\rho-\sigma||_1\leq\epsilon$.
We say that $\rho$ is $\epsilon$-far from being independent with respect to the $\ell_1$ distance if $||\rho-\sigma||_1>\epsilon$ for any $m$-partite independent state $\sigma$ .
\begin{Prop}\label{dism}
Let $\rho$ and $\sigma$ be $m$-partite states, if $||\rho-\sigma||_1\leq{\epsilon}$, and $\sigma$ is $m$-partite independent, then $||\rho-\rho_1\otimes\rho_2\otimes\cdots\otimes\rho_m||_1\leq (m+1)\epsilon$.
\end{Prop}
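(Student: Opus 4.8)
The plan is to mimic the bipartite argument used to prove Proposition~\ref{dis1}, with Lemma~\ref{dis2} replaced by its $m$-partite analogue. First I would establish the telescoping estimate
\[
\Big\|\bigotimes_{k=1}^m \rho_k - \bigotimes_{k=1}^m \sigma_k\Big\|_1 \;\le\; \sum_{k=1}^m \|\rho_k - \sigma_k\|_1
\]
for arbitrary states $\rho_k,\sigma_k\in\mathcal{D}(\C^{d_k})$. This follows by induction on $m$: the base case $m=2$ is exactly Lemma~\ref{dis2}, and in the inductive step one groups the last $m-1$ factors together and applies Lemma~\ref{dis2} to the bipartition (first factor) versus (rest), using that $\|A\otimes C - B\otimes C\|_1 = \|A-B\|_1$ whenever $C$ is a state, together with the inductive hypothesis on the $(m-1)$-fold product. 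Equivalently, one can run the direct hybrid argument, interpolating between $\bigotimes_k\rho_k$ and $\bigotimes_k\sigma_k$ one tensor slot at a time and summing the per-slot contributions via the triangle inequality.

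Next, since $\sigma$ is $m$-partite independent we may write $\sigma=\sigma_1\otimes\cdots\otimes\sigma_m$, and each $\sigma_k$ is then the reduced state of $\sigma$ on the $k$-th subsystem. By Lemma~\ref{pt1norm} (monotonicity of the $\ell_1$ distance under partial trace), $\|\rho_k-\sigma_k\|_1 \le \|\rho-\sigma\|_1 \le \epsilon$ for every $k\in[m]$, where $\rho_k$ denotes the $k$-th reduced state of $\rho$.

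Finally I would combine these ingredients via the triangle inequality:
\[
\|\rho-\rho_1\otimes\cdots\otimes\rho_m\|_1 \;\le\; \|\rho-\sigma\|_1 + \|\sigma_1\otimes\cdots\otimes\sigma_m-\rho_1\otimes\cdots\otimes\rho_m\|_1 \;\le\; \epsilon + \sum_{k=1}^m\|\sigma_k-\rho_k\|_1 \;\le\; \epsilon + m\epsilon = (m+1)\epsilon.
\]
The only step requiring any thought is the $m$-partite product inequality, and even that is a routine hybrid/induction argument built on Lemma~\ref{dis2}; everything else is a direct transcription of the bipartite proof, so I do not anticipate a real obstacle here.
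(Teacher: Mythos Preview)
Your proposal is correct and mirrors the paper's proof essentially line for line: the paper first records the telescoping bound $\|\bigotimes_k\rho_k-\bigotimes_k\sigma_k\|_1\le\sum_k\|\rho_k-\sigma_k\|_1$ (their Lemma~\ref{dism1}, proved by the same hybrid argument you describe), then combines it with $\|\rho_k-\sigma_k\|_1\le\|\rho-\sigma\|_1$ and the triangle inequality to obtain $(m+1)\epsilon$.
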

Applying the triangle inequality, we observe that
\begin{eqnarray*}
&&||\rho_1\otimes\rho_2\otimes\cdots\otimes\rho_m-\sigma_1\otimes\sigma_2\otimes\cdots\otimes\sigma_m||_1\\
&\leq& ||\rho_1\otimes\rho_2\otimes\cdots\otimes\rho_m-\rho_1\otimes\sigma_2\otimes\cdots\otimes\sigma_m||_1
+||\rho_1\otimes\sigma_2\otimes\cdots\otimes\sigma_m-\sigma_1\otimes\sigma_2\otimes\cdots\otimes\sigma_m||_1\\
&=& ||\rho_2\otimes\cdots\otimes\rho_m-\otimes\sigma_2\otimes\cdots\otimes\sigma_m||_1+||\rho_1-\sigma_1||_1\\
&&\cdots\\
&\leq& \sum_{i=1}^m ||\rho_i-\sigma_i||_1.
\end{eqnarray*}
Then we have
\begin{lemma} \label{dism1}
$$||\rho_1\otimes\rho_2\otimes\cdots\otimes\rho_m-\sigma_1\otimes\sigma_2\otimes\cdots\otimes\sigma_m||_1\leq \sum_{i=1}^m ||\rho_i-\sigma_i||_1.$$
\end{lemma}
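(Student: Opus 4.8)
The plan is to prove the bound by a hybrid (telescoping) argument. First I would introduce the hybrid operators
$$\tau_k := \sigma_1\otimes\cdots\otimes\sigma_k\otimes\rho_{k+1}\otimes\cdots\otimes\rho_m,\qquad 0\le k\le m,$$
so that $\tau_0=\rho_1\otimes\cdots\otimes\rho_m$ and $\tau_m=\sigma_1\otimes\cdots\otimes\sigma_m$. Writing $\tau_0-\tau_m=\sum_{k=1}^m(\tau_{k-1}-\tau_k)$ and applying the triangle inequality for the $\ell_1$ norm reduces the claim to showing $||\tau_{k-1}-\tau_k||_1=||\rho_k-\sigma_k||_1$ for each $k$.

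For the second step, note that $\tau_{k-1}$ and $\tau_k$ agree in every tensor factor except the $k$-th, so
$$\tau_{k-1}-\tau_k=\sigma_1\otimes\cdots\otimes\sigma_{k-1}\otimes(\rho_k-\sigma_k)\otimes\rho_{k+1}\otimes\cdots\otimes\rho_m.$$
I would then invoke multiplicativity of the trace norm under tensor products, $||A\otimes B||_1=||A||_1\,||B||_1$ — which holds because the singular values of $A\otimes B$ are the pairwise products of those of $A$ and $B$ — together with $||\rho_i||_1=||\sigma_i||_1=\tr\rho_i=1$ for density operators. This immediately gives $||\tau_{k-1}-\tau_k||_1=||\rho_k-\sigma_k||_1$, and summing over $k$ finishes the argument. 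Equivalently, one can run an induction on $m$: the $m=1$ case is trivial, and the inductive step is exactly the displayed computation preceding the lemma, in which the first summand equals $||\rho_2\otimes\cdots\otimes\rho_m-\sigma_2\otimes\cdots\otimes\sigma_m||_1$ (peeling off $\rho_1$ and using $||\rho_1||_1=1$) while the second equals $||\rho_1-\sigma_1||_1$, after which the inductive hypothesis applies.

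There is essentially no obstacle here: the only ingredient beyond the triangle inequality is the tensor-norm identity $||A\otimes B||_1=||A||_1\,||B||_1$, which is standard and can even be sidestepped, since each hybrid step is literally an instance of the bipartite Lemma \ref{dis2} applied to the bipartition ``factor $k$'' versus ``all remaining factors''. I would just state this identity explicitly so that the single-line collapse of each summand is transparent, and otherwise the proof is a two-line telescoping computation.
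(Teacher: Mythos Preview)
Your proposal is correct and follows essentially the same approach as the paper: a telescoping/hybrid argument combined with the triangle inequality and the multiplicativity $||A\otimes B||_1=||A||_1\,||B||_1$. The paper's proof is just the inductive phrasing you mention at the end (peeling off one factor at a time via the intermediate $\rho_1\otimes\sigma_2\otimes\cdots\otimes\sigma_m$), so there is no substantive difference.
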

\textit{Proof of Proposition \ref{dism}:} Clearly, $\sigma=\sigma_1\otimes\sigma_2\otimes\cdots\otimes\sigma_m$. From Lemma \ref{dism1}, we have
\begin{align*}
&||\rho-\rho_1\otimes\rho_2\otimes\cdots\otimes\rho_m||_1\\
\leq &||\rho-\sigma||_1+||\sigma-\rho_1\otimes\rho_2\otimes\cdots\otimes\rho_m||_1\\
=&||\rho-\sigma||_1+||\sigma_1\otimes\sigma_2\otimes\cdots\otimes\sigma_m-\rho_1\otimes\rho_2\otimes\cdots\otimes\rho_m||_1\\
\leq &\epsilon+\sum_{i=1}^m ||\sigma_i-\rho_i||_1\\
\leq &\epsilon+\sum_{i=1}^m ||\rho-\sigma||_1\\
\leq&(m+1)\epsilon.
\end{align*}
Proposition \ref{dis2m} establishes a connection between bipartite independence and multipartite independence. Specifically, it shows that if an $m$-partite state is close to bipartite independence in any $1$ versus $m-1$ cut, it is close to being $m$ partite independent.
\begin{Prop}\label{dis2m}
Let $\rho$ be an $m$-partite states. If for any $1\leq i\leq m$, there exists a state ${\sigma}^{(i)}_{i}$ of party $i$, and a state ${\psi}_{S\setminus\{i\}}$ of parties $S\setminus\{i\}$ such that $||\rho-{\sigma}^{(i)}_{i}\otimes{\psi}_{S\setminus\{i\}}||_1\leq\epsilon$, then $||\rho-\rho_1\otimes\rho_2\otimes\cdots\otimes\rho_m||_1\leq 5m\epsilon$.
\end{Prop}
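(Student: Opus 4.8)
The plan is a telescoping (hybrid) argument that strips off one party at a time, combining the bipartite closeness in each $1$-versus-$(m-1)$ cut with monotonicity of the $\ell_1$ distance under partial trace (Lemma \ref{pt1norm}).

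\emph{Step 1: upgrade to the true marginals.} Fix $i$ and regard $\rho$ as a bipartite state in $\mathcal{D}(\C^{d_i}\otimes\C^{\prod_{j\neq i}d_j})$ across the cut $\{i\}$ versus $S\setminus\{i\}$. Since $\sigma^{(i)}_i\otimes\psi_{S\setminus\{i\}}$ is a product across this cut and $\|\rho-\sigma^{(i)}_i\otimes\psi_{S\setminus\{i\}}\|_1\le\epsilon$, Proposition \ref{dis1} applied with distance parameter $3\epsilon$ gives $\|\rho-\rho_i\otimes\rho_{S\setminus\{i\}}\|_1\le 3\epsilon$, where $\rho_i$ and $\rho_{S\setminus\{i\}}$ denote the reduced states of $\rho$ on party $i$ and on the remaining parties.

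\emph{Step 2: marginalize and re-tensor.} For each $i$, trace out parties $1,\dots,i-1$ from the inequality of Step 1; by Lemma \ref{pt1norm} this does not increase the $\ell_1$ distance, and because the partial trace is basis independent the iterated marginalizations of $\rho$ commute, so $\mathrm{tr}_{1,\dots,i-1}(\rho_i\otimes\rho_{S\setminus\{i\}})=\rho_i\otimes\rho_{\{i+1,\dots,m\}}$ and $\mathrm{tr}_{1,\dots,i-1}\rho=\rho_{\{i,\dots,m\}}$. Hence $\|\rho_{\{i,\dots,m\}}-\rho_i\otimes\rho_{\{i+1,\dots,m\}}\|_1\le 3\epsilon$. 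Tensoring both sides on the left with $\rho_1\otimes\cdots\otimes\rho_{i-1}$ leaves the $\ell_1$ distance unchanged (multiplicativity of the trace norm, a special case of Lemma \ref{dism1}), so with $\tau_i:=\rho_1\otimes\cdots\otimes\rho_i\otimes\rho_{\{i+1,\dots,m\}}$ for $0\le i\le m-1$ one obtains $\|\tau_{i-1}-\tau_i\|_1\le 3\epsilon$ for $1\le i\le m-1$. Observe that $\tau_0=\rho$ and, since $\rho_{\{m\}}=\rho_m$, that $\tau_{m-1}=\rho_1\otimes\cdots\otimes\rho_m$.

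\emph{Step 3: collect.} By the triangle inequality, $\|\rho-\rho_1\otimes\cdots\otimes\rho_m\|_1=\|\tau_0-\tau_{m-1}\|_1\le\sum_{i=1}^{m-1}\|\tau_{i-1}-\tau_i\|_1\le 3(m-1)\epsilon\le 5m\epsilon$, which proves the claim (indeed with a slightly better constant). The only delicate point — essentially the sole obstacle — is the reduced-state bookkeeping in Step 2: one must verify that marginalizing $\rho_{S\setminus\{i\}}$ down to parties $\{i+1,\dots,m\}$ yields exactly $\rho_{\{i+1,\dots,m\}}$, which is precisely the order-independence of iterated partial traces of $\rho$; everything else is triangle inequalities, Lemma \ref{pt1norm}, and tensor-stability of the trace norm.
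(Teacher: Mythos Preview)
Your proof is correct, and in fact yields the sharper bound $3(m-1)\epsilon$ rather than $5m\epsilon$. The route, however, differs from the paper's. The paper argues by induction on $m$: it first shows $\|\rho-\sigma^{(1)}_1\otimes\cdots\otimes\sigma^{(m)}_m\|_1\le 4m\epsilon$ (using Proposition~\ref{dis1} to peel off party~$1$, then applying the inductive hypothesis to $\rho_{S\setminus\{1\}}$), and only afterwards passes from the $\sigma^{(i)}_i$ to the true marginals $\rho_i$ via Lemma~\ref{dism1} and $\|\rho_i-\sigma^{(i)}_i\|_1\le\epsilon$, picking up another $m\epsilon$. You instead upgrade to the true marginals \emph{first} (Step~1) and then run a direct hybrid/telescoping argument on the chain $\tau_0=\rho,\tau_1,\dots,\tau_{m-1}=\rho_1\otimes\cdots\otimes\rho_m$, using only monotonicity under partial trace and tensor-stability of the trace norm. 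This avoids both the induction and the auxiliary product state $\sigma^{(1)}_1\otimes\cdots\otimes\sigma^{(m)}_m$, which is why your constant is better; as a side effect you do not even use the $i=m$ hypothesis. The paper's approach is slightly more ``coordinate-free'' in that it never needs to fix an ordering of the parties, but your argument is shorter and tighter.
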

\textit{Proof of Proposition \ref{dis2m}:}
First, we can prove by induction that
$$
||\rho-{\sigma}^{(1)}_{1}\otimes{\sigma}^{(2)}_{2}\otimes\cdots\otimes{\sigma}^{(m)}_{m}||_1\leq 4m\epsilon.
$$
If $m=2$, we know that
\begin{align*}
&||\rho-{\sigma}^{(1)}_{1}\otimes{\sigma}^{(2)}_{2}||_1\\
\leq &||\rho-{\sigma}^{(1)}_{1}\otimes{\psi}_{2}||_1+||{\sigma}^{(1)}_{1}\otimes{\sigma}^{(2)}_{2}-{\sigma}^{(1)}_{1}\otimes{\psi}_{2}||_1\\
\leq &\epsilon+||{\sigma}^{(2)}_{2}-\psi_2||_1\\
\leq &\epsilon+||\rho_2-\psi_2||_1+||{\sigma}^{(2)}_{2}-\rho_2||_1\\
\leq &\epsilon+||\rho-{\sigma}^{(1)}_{1}\otimes\psi_2||_1+||\psi_1\otimes{\sigma}^{(2)}_{2}-\rho||_1\\
\leq &3\epsilon\\
\leq& 8\epsilon.
\end{align*}
By Lemma \ref{pt1norm}, we know that, for any $1<i\leq m$, $\beta_{S\setminus\{1,i\}}:=\tr_{1}{\psi}_{S\setminus\{i\}}$ satisfies
$$||\rho_{S\setminus\{1\}}-{\sigma}^{(i)}_{i}\otimes\beta_{S\setminus\{1,i\}}||_1\leq \epsilon.$$
According to Proposition \ref{dis1}, we have
$$
||\rho-\rho_1\otimes\rho_{S\setminus\{1\}}||_1\leq 3\epsilon.
$$
Therefore, by induction, we have
\begin{eqnarray*}
&&||\rho-{\sigma}^{(1)}_{1}\otimes{\sigma}^{(2)}_{2}\otimes\cdots\otimes{\sigma}^{(m)}_{m}||_1\\
\leq& & ||\rho-\rho_1\otimes\rho_{S\setminus\{1\}}||_1+||\rho_1\otimes\rho_{S\setminus\{1\}}-{\sigma}^{(1)}_{1}\otimes{\sigma}^{(2)}_{2}\otimes\cdots\otimes{\sigma}^{(m)}_{m}||_1\\
\leq &&3\epsilon+ ||\rho_{S\setminus\{1\}}-{\sigma}^{(2)}_{2}\otimes\cdots\otimes{\sigma}^{(m)}_{m}||_1+||\rho_1-{\sigma}^{(1)}_{1}||_1\\
=&& ||\rho_{S\setminus\{1\}}-{\sigma}^{(2)}_{2}\otimes\cdots\otimes{\sigma}^{(m)}_{m}||_1+4\epsilon\\
\leq&& 4(m-1)\epsilon+4\epsilon\\
=&&4m\epsilon,
\end{eqnarray*}
The third inequality is also derived by induction, and we have
\begin{eqnarray*}
&&||\rho-\rho_1\otimes\rho_2\otimes\cdots\otimes\rho_m||_1\\
\leq&&||\rho-{\sigma}^{(1)}_{1}\otimes{\sigma}^{(2)}_{2}\otimes\cdots\otimes{\sigma}^{(m)}_{m}||_1+||\rho_1\otimes\rho_2\otimes\cdots\otimes\rho_m-{\sigma}^{(1)}_{1}\otimes{\sigma}^{(2)}_{2}\otimes\cdots\otimes{\sigma}^{(m)}_{m}||_1\\
\leq&& 4m\epsilon
+\sum_{i=1}^m ||\rho_i-{\sigma}^{(i)}_{i}||_1\\
\leq&& 4m\epsilon+\sum_{i=1}^m||\rho-{\sigma}^{(i)}_{i}\otimes{\psi}_{S\setminus\{i\}}||_1\\
\leq &&5m\epsilon.
\end{eqnarray*}
\subsection{Conditional independence}
Consider the following set of classical-quantum-quantum states:
$$
\mathcal{T}_{ABC}=\{\sum_{i\in C}p_c \rho^{c}_{AB}\otimes\op{c}{c}: p_c\geq 0,~\sum_{c\in C}p_c=1,~\rho^{c}_{AB}\in\mathcal{D}(\C^{d_1}\otimes\C^{d_2})\},
$$
where $\ket{c}$ are fixed orthonormal bases of system $C$.

$\mathcal{P}_{A,B|C}$ denotes the property of conditional independence for the classical-quantum-quantum states:
$$
\mathcal{P}_{A,B|C}=\{\sum_{c\in C}p_c\sigma^{c}_{A}\otimes\sigma^{c}_{B}\otimes \op{c}{c}: p_c\geq 0,~\sum_{c\in C}p_c=1,~\rho^{c}_{A}\in\mathcal{D}(\C^{d_1}),~\rho^{c}_{B}\in\mathcal{D}(\C^{d_2})\}\subset\mathcal{T}_{ABC}.
$$
We can directly generalize the definition of the $\ell_1$ distance for the states in $\mathcal{T}_{ABC}$ as follows. Given $\rho_{ABC}=\sum_{i\in C}p_c\rho^{i}_{AB}\otimes \op{c}{c}$ and $\sigma_{ABC}=\sum_{i\in C}q_c \sigma^{i}_{AB}\otimes\op{c}{c}$, we define
$$
||\rho_{ABC}-\sigma_{ABC}||_1=||\sum_{i\in C}p_c\rho^{c}_{AB}\otimes \op{c}{c}-\sum_{c\in C}q_c \sigma^{c}_{AB}\otimes\op{c}{c}||_1=\sum_{c\in C}||p_c\rho^{c}_{AB}-q_c\sigma^{c}_{AB}||_1.
$$
We say $\rho_{ABC}\in \mathcal{T}_{ABC}$ is $\epsilon$-far from $\mathcal{P}_{AB|C}$ if $||\rho_{ABC}-\sigma_{ABC}||_1>\epsilon$ for any $\sigma\in\mathcal{P}_{AB|C}$.

Given access to $\rho_{ABC}\in \mathcal{T}_{ABC}$, the goal is to distinguish between $\rho_{ABC}\in\mathcal{P}_{AB|C}$ and $\rho_{ABC}$ being $\epsilon$-far from $\mathcal{P}_{AB|C}$, that is, to determine whether $A$ and $B$ are conditionally independent given $C$, versus $\epsilon$-far in $\ell_1$ distance.

The following lemmata are useful here.
\begin{lemma}
Given $\rho_{ABC}=\sum_{c\in C}p_c \rho^{c}_{AB}\otimes\op{c}{c}$ and $\sigma_{ABC}=\sum_{c\in C}q_c \sigma^{c}_{AB}\otimes\op{c}{c}$, we have that
$$
||\rho_{ABC}-\sigma_{ABC}||_1\leq \sum_{c\in C} p_c ||\rho^c_{AB}-\sigma^c_{AB}||_1+\sum_{c\in C} |p_c-q_c|=\sum_{c\in C} p_c ||\rho^c_{AB}-\sigma^c_{AB}||_1+||p-q||_1.
$$
\end{lemma}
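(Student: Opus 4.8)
The plan is to start from the definition of the $\ell_1$ distance on $\mathcal{T}_{ABC}$ recorded just above the statement, namely
$$
||\rho_{ABC}-\sigma_{ABC}||_1=\sum_{c\in C}||p_c\rho^{c}_{AB}-q_c\sigma^{c}_{AB}||_1,
$$
and to bound each summand separately by a single application of the triangle inequality for the trace norm. For a fixed $c\in C$, I would insert and subtract the term $p_c\sigma^c_{AB}$, writing $p_c\rho^c_{AB}-q_c\sigma^c_{AB}=p_c(\rho^c_{AB}-\sigma^c_{AB})+(p_c-q_c)\sigma^c_{AB}$, so that
$$
||p_c\rho^c_{AB}-q_c\sigma^c_{AB}||_1\leq p_c||\rho^c_{AB}-\sigma^c_{AB}||_1+|p_c-q_c|\cdot||\sigma^c_{AB}||_1,
$$
using $p_c\geq 0$ to pull the scalar $p_c$ out of the first trace norm and $|p_c-q_c|$ out of the second.

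The only remaining fact needed is that $||\sigma^c_{AB}||_1=1$: since $\sigma^c_{AB}\in\mathcal{D}(\C^{d_1}\otimes\C^{d_2})$ is a density operator it is positive semidefinite with unit trace, hence $||\sigma^c_{AB}||_1=\Tr|\sigma^c_{AB}|=\Tr\sigma^c_{AB}=1$. Substituting this and summing over $c\in C$ gives
$$
||\rho_{ABC}-\sigma_{ABC}||_1\leq\sum_{c\in C}p_c||\rho^c_{AB}-\sigma^c_{AB}||_1+\sum_{c\in C}|p_c-q_c|,
$$
and the final equality in the statement is simply the definition $||p-q||_1=\sum_{c\in C}|p_c-q_c|$ of the $\ell_1$ distance between the classical distributions $p=(p_c)_{c\in C}$ and $q=(q_c)_{c\in C}$.

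I do not expect any real obstacle here: the argument is a two-line telescoping via the triangle inequality, and the only quantum ingredient is the elementary observation that a density matrix has unit trace norm. The one point requiring a little care is to split off the term $p_c\sigma^c_{AB}$ rather than $q_c\rho^c_{AB}$, since it is the weight $p_c$ (not $q_c$) that multiplies $||\rho^c_{AB}-\sigma^c_{AB}||_1$ in the claimed bound; symmetrically, splitting via $q_c\rho^c_{AB}$ would yield the analogous inequality with the roles of $(p,\rho)$ and $(q,\sigma)$ exchanged.
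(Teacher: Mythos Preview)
Your proof is correct and follows essentially the same approach as the paper: the paper introduces the intermediate state $\tau_{ABC}=\sum_{c\in C}p_c\sigma^{c}_{AB}\otimes\op{c}{c}$ and applies the triangle inequality globally, which is exactly your term-by-term insertion of $p_c\sigma^c_{AB}$ lifted to the block-diagonal sum. The only difference is cosmetic packaging.
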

\begin{proof}
Let
$$
\tau_{ABC}=\sum_{c\in C}p_c \sigma^{c}_{AB}\otimes\op{c}{c}
$$
By the triangle inequality, we have
\begin{align*}
||\rho_{ABC}-\sigma_{ABC}||_1\leq ||\rho_{ABC}-\tau_{ABC}||_1+||\tau_{ABC}-\sigma_{ABC}||_1=\sum_{c\in C} p_c ||\rho^c_{AB}-\sigma^c_{AB}||_1+\sum_{c\in C} |p_c-q_c|.
\end{align*}
\end{proof}

From this, we directly observe the following.
\begin{Cor}
If $\rho_{ABC}=\sum_{c\in C}p_c \rho^{c}_{AB}\otimes\op{c}{c}$ is $\epsilon$-far from $\mathcal{P}_{A,B|C}$, then for every $\sigma_{ABC}=\sum_{c\in C}q_c \sigma^{c}_{A}\otimes\sigma^c_{B}\otimes\op{c}{c}$, either $||p-q||_1>\frac{\epsilon}{2}$, or $\sum_{c\in C} p_c ||\rho^c_{AB}-\sigma^{c}_{A}\otimes\sigma^c_{B}||_1>\frac{\epsilon}{2}$.
\end{Cor}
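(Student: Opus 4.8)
The plan is to obtain the corollary as an immediate contrapositive reading of the preceding lemma, so the argument will be very short. First I would fix an arbitrary state $\sigma_{ABC}=\sum_{c\in C}q_c\,\sigma^c_A\otimes\sigma^c_B\otimes\op{c}{c}\in\mathcal{P}_{A,B|C}$ and observe that it is precisely of the form to which the lemma applies, namely $\sum_{c\in C}q_c\,\sigma^c_{AB}\otimes\op{c}{c}$ with the identification $\sigma^c_{AB}:=\sigma^c_A\otimes\sigma^c_B$. Applying the lemma verbatim then yields
$$||\rho_{ABC}-\sigma_{ABC}||_1\le \sum_{c\in C}p_c\,||\rho^c_{AB}-\sigma^c_A\otimes\sigma^c_B||_1+||p-q||_1.$$

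Next I would invoke the hypothesis: since $\rho_{ABC}$ is $\epsilon$-far from $\mathcal{P}_{A,B|C}$, the left-hand side of the display above is strictly greater than $\epsilon$ for this $\sigma_{ABC}$ (and, indeed, for every member of $\mathcal{P}_{A,B|C}$). Chaining the two inequalities gives
$$\sum_{c\in C}p_c\,||\rho^c_{AB}-\sigma^c_A\otimes\sigma^c_B||_1+||p-q||_1>\epsilon,$$
and the elementary fact that $a+b>\epsilon$ forces $a>\epsilon/2$ or $b>\epsilon/2$ completes the proof, since $\sigma_{ABC}$ was arbitrary.

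I do not anticipate any genuine obstacle: all the substance lives in the lemma, and the only point that merits a moment's care is quantifier bookkeeping — the dichotomy must hold for \emph{every} admissible $\sigma_{ABC}$ — which is handled automatically by choosing $\sigma_{ABC}$ generically at the outset. If desired, I would append one sentence of motivation, noting that this two-way split is exactly what later allows one to test the weight discrepancy $||p-q||_1$ and the averaged conditional discrepancy $\sum_{c\in C} p_c\,||\rho^c_{AB}-\sigma^c_A\otimes\sigma^c_B||_1$ by separate subroutines, mirroring the classical conditional-independence testing framework of \cite{Canonne:2018:TCI:3188745.3188756}.
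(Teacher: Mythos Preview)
Your proposal is correct and matches the paper's approach exactly: the paper presents this corollary as a direct observation from the preceding lemma, and your argument—apply the lemma, use $\epsilon$-farness to bound the left side, then split the sum—is precisely that observation written out. There is nothing to add.
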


By generalizing Proposition \ref{dis1}, Lemma \ref{555} shows a useful structural property of conditional independence that is crucial to
our algorithms. It shows that, if $\rho_{ABC}\in \mathcal{T}_{ABC}$ is close to being conditionally independent,
then it is close to an appropriate mixture of its tensor products of conditional reduced density matrices.
\begin{lemma} \label{555}
Suppose $\rho_{ABC}=\sum_{c\in C}p_c\rho^{c}_{AB}\otimes\op{c}{c}$ is $\epsilon$-close to $\mathcal{P}_{A,B|C}$. Then
\begin{align*}
||\rho_{ABC}-\tilde{\rho}_{ABC}||_1\leq 4\epsilon,
\end{align*}
where $\tilde{\rho}_{ABC}=\sum_{i\in C}p_{c}\rho^{c}_{A}\otimes\rho^{c}_{B}\otimes\op{c}{c}$ with $\rho^{c}_A$ and $\rho^c_{B}$ being reduced density matrices of $\rho^{c}_{AB}$.
\end{lemma}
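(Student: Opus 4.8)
The plan is to start from the definition of $\epsilon$-closeness and carry the witness through a per-component estimate. Since $\rho_{ABC}$ is $\epsilon$-close to $\mathcal{P}_{A,B|C}$, fix a witness $\sigma_{ABC}=\sum_{c\in C}q_c\,\sigma^{c}_{A}\otimes\sigma^{c}_{B}\otimes\op{c}{c}\in\mathcal{P}_{A,B|C}$ with $\|\rho_{ABC}-\sigma_{ABC}\|_1\le\epsilon$. By the definition of the $\ell_1$ distance on $\mathcal{T}_{ABC}$, if I set $\delta_c:=\|p_c\rho^{c}_{AB}-q_c\sigma^{c}_{A}\otimes\sigma^{c}_{B}\|_1$, then $\sum_{c\in C}\delta_c\le\epsilon$. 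The target quantity unfolds the same way: $\|\rho_{ABC}-\tilde\rho_{ABC}\|_1=\sum_{c\in C}\|p_c\rho^{c}_{AB}-p_c\rho^{c}_{A}\otimes\rho^{c}_{B}\|_1=\sum_{c\in C}p_c\,\|\rho^{c}_{AB}-\rho^{c}_{A}\otimes\rho^{c}_{B}\|_1$, so it suffices to prove the per-$c$ bound $\|p_c\rho^{c}_{AB}-p_c\rho^{c}_{A}\otimes\rho^{c}_{B}\|_1\le 4\delta_c$ and then sum.

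First I would record two easy consequences of the definition of $\delta_c$. Because $|\tr X|\le\|X\|_1$ for any operator $X$, taking $X=p_c\rho^{c}_{AB}-q_c\sigma^{c}_{A}\otimes\sigma^{c}_{B}$ and using that both pieces have trace $p_c$ resp.\ $q_c$ gives $|p_c-q_c|\le\delta_c$. Because the trace norm does not increase under partial trace (the operator-valued version of Lemma \ref{pt1norm}, valid for all trace-class operators, not just states), tracing out $B$ and then $A$ from $p_c\rho^{c}_{AB}-q_c\sigma^{c}_{A}\otimes\sigma^{c}_{B}$ yields $\|p_c\rho^{c}_{A}-q_c\sigma^{c}_{A}\|_1\le\delta_c$ and $\|p_c\rho^{c}_{B}-q_c\sigma^{c}_{B}\|_1\le\delta_c$.

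Now for the per-$c$ estimate I insert the witness term and apply the triangle inequality: $\|p_c\rho^{c}_{AB}-p_c\rho^{c}_{A}\otimes\rho^{c}_{B}\|_1\le\delta_c+\|q_c\sigma^{c}_{A}\otimes\sigma^{c}_{B}-p_c\rho^{c}_{A}\otimes\rho^{c}_{B}\|_1$. For the second summand I telescope over the two tensor factors exactly as in Lemma \ref{dis2}, but now treating $q_c\sigma^{c}_{A}$ and $p_c\rho^{c}_{A}$ as sub-normalized operators and using multiplicativity of $\|\cdot\|_1$ under tensor products; this bounds it by $\|q_c\sigma^{c}_{A}-p_c\rho^{c}_{A}\|_1+p_c\|\sigma^{c}_{B}-\rho^{c}_{B}\|_1\le\delta_c+p_c\|\sigma^{c}_{B}-\rho^{c}_{B}\|_1$. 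Finally $p_c\|\sigma^{c}_{B}-\rho^{c}_{B}\|_1=\|p_c\sigma^{c}_{B}-p_c\rho^{c}_{B}\|_1\le\|p_c\sigma^{c}_{B}-q_c\sigma^{c}_{B}\|_1+\|q_c\sigma^{c}_{B}-p_c\rho^{c}_{B}\|_1\le|p_c-q_c|+\delta_c\le 2\delta_c$. Collecting terms gives $\|p_c\rho^{c}_{AB}-p_c\rho^{c}_{A}\otimes\rho^{c}_{B}\|_1\le\delta_c+\delta_c+2\delta_c=4\delta_c$, and summing over $c\in C$ with $\sum_c\delta_c\le\epsilon$ yields $\|\rho_{ABC}-\tilde\rho_{ABC}\|_1\le 4\epsilon$.

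The only subtlety — and the step to be careful about — is that the objects $p_c\rho^{c}_{A}$, $q_c\sigma^{c}_{A}$, $p_c\sigma^{c}_{B}$, etc.\ are sub-normalized rather than genuine density operators, so Proposition \ref{dis1} and Lemma \ref{dis2} cannot be cited verbatim; I instead use their operator-level inputs (monotonicity of $\|\cdot\|_1$ under partial trace and $\|A\otimes B\|_1=\|A\|_1\|B\|_1$), which hold for all trace-class operators. Components with $p_c=0$ contribute $0$ on both sides and can simply be discarded, so no division by $p_c$ is ever needed.
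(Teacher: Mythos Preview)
Your proof is correct and follows essentially the same approach as the paper: fix a witness $\sigma_{ABC}$, use monotonicity of the trace norm under partial trace to control the marginal differences, and telescope through an intermediate tensor product to reach the bound $4\epsilon$. The only cosmetic difference is that you organize the argument per-$c$ (proving $\le 4\delta_c$ and summing) whereas the paper carries the sums $\sum_{c}$ throughout; your explicit remark that the partial-trace and tensor-product steps apply to sub-normalized operators is a nice touch the paper leaves implicit.
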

\begin{proof}
Let $\sigma_{ABC}=\sum_{i\in C}q_c \sigma^{c}_{A}\otimes\sigma^c_{B}\otimes\op{c}{c}$ such that
$$
||\rho_{ABC}-\sigma_{ABC}||_1=\sum_{c\in C}||p_c\rho^{c}_{AB}-q_c\sigma^{c}_{A}\otimes\sigma^c_{B}||_1\leq \epsilon.
$$
According to Lemma \ref{pt1norm}, we have
\begin{align*}
||p-q||_1\leq \epsilon,\\
\sum_{c\in C}||p_c\rho^c_A-q_c\sigma^c_A||_1\leq\epsilon,\\
\sum_{c\in C}||p_c\rho^c_B-q_c\sigma^c_B||_1\leq\epsilon
\end{align*}
Therefore, by the triangle inequality, we have
\begin{align*}
&||\rho_{ABC}-\tilde{\rho}_{ABC}||_1\\
\leq& ||\rho_{ABC}-\sigma_{ABC}||_1+||\sigma_{ABC}-\tilde{\rho}_{ABC}||_1\\
\leq& \epsilon+\sum_{c\in C} ||p_c\rho^c_A\otimes\rho^c_B-q_c\sigma^{c}_{A}\otimes\sigma^c_{B}||_1\\
\leq& \epsilon+\sum_{c\in C} ||p_c\rho^c_A\otimes\rho^c_B-q_c\sigma^{c}_{A}\otimes\rho^c_{B}||_1+\sum_{c\in C} ||q_i\sigma^{c}_{A}\otimes\rho^c_{B}-q_c\sigma^{c}_{A}\otimes\sigma^c_{B}||_1\\
=&\epsilon+\sum_{c\in C}  ||p_c\rho^c_A-q_c\sigma^{c}_{A}||_1+\sum_{c\in C} ||q_c\rho^c_{B}-q_c\sigma^c_{B}||_1\\
\leq & 2\epsilon+\sum_{c\in C} ||q_c\rho^c_{B}-q_c\sigma^c_{B}||_1\\
\leq &2\epsilon+\sum_{c\in C} ||p_c\rho^c_{B}-q_c\rho^c_{B}||_1+\sum_{c\in C} ||p_c\rho^c_{B}-q_c\sigma^c_{B}||_1\\
=& 2\epsilon+||p-q||_1+\epsilon\\
\leq &4\epsilon.
\end{align*}
\end{proof}
In other words, $\rho_{ABC}$ is far from $\mathcal{P}_{A,B|C}$ if and only if $||\rho_{ABC}-\tilde{\rho}_{ABC}||_1$ is large.

\section{Connections between Quantum Property Testing and Distribution Testing}\label{connection}
In this section, we provide measurement schemes that map a quantum state to a probability distribution while maintaining $\ell_2$ norm relations for independent measurement followed by local measurement.

\subsection{Independent measurement}\label{sec:independence}
This subsection presents an independent measurement scheme, which generates a connection between quantum property testing and distribution testing.

Mutually unbiased
bases (MUBs) are used to map the quantum states of $\mathcal{D}(\C^{d})$ into $d(d+1)$ dimensional probability distributions. Without loss of generality, assume $d=2^k$, and we the Pauli group $\mathcal{P}_k=\{I,X,Y,Z\}^{\otimes k}$ be to the order of $4^k$. According to Theorem \ref{MUBs}, any state $\rho\in \mathcal{D}(\C^{d})$ can be written as
$$\rho=\sum_{P\in\mathcal{P}_k}\eta_p P=\frac{I_d}{d}+\sum_{a=0}^d\sum_{\begin{subarray}{c} P\in G_a,\\ P\neq I_d\end{subarray}}\eta_p P=\frac{I_d}{d}+\sum_{i,j}\mu_{i,j}\op{\beta_{i,j}}{\beta_{i,j}},$$
where $G_a$ are the Abelian subgroups with an order of $2^k=d$ such that $\cup G_a=\mathcal{P}_k$ and $G_a\bigcap G_b=\{I_2^{\otimes k}\}$ for $a\neq b$.
The equation is due to the simultaneous spectrum decomposition of $G_a$ through the MUBs bases. That is, for $0\leq i\neq s\leq d,1\leq j,t\leq d$, $$|\ip{\beta_{i,j}}{\beta_{s,t}}|=\frac{1}{\sqrt{d}}.$$
In additional, it is verifiable that $\sum_{j=1}^d \mu_{i,j}=0$ for all $i$ by the traceless property of $P\neq I_d$.
Therefore, we can obtain the following constraint on $\mu_{i,j}$
\begin{align} \label{mubb}
\frac{1}{d}+\sum_{i,j}\mu_{i,j}^2\leq 1.
\end{align}
To observe this, by the property of MUBs, we have
\begin{align*}
\tr\rho^2=&\tr\frac{I_d}{d^2}+\sum_{i,j,s,t}\mu_{i,j}\mu_{s,t}|\ip{\beta_{i,j}}{\beta_{s,t}}|^2\\
=&\frac{1}{d}+\sum_{i,j}\mu_{i,j}^2+\sum_{i\neq s}\sum_{j,t}\frac{\mu_{i,j}\mu_{s,t}}{d}\\
=&\frac{1}{d}+\sum_{i,j}\mu_{i,j}^2+\sum_{i\neq s}\sum_{j}\mu_{i,j}\frac{\sum_{t}\mu_{s,t}}{d}\\
=&\frac{1}{d}+\sum_{i,j}\mu_{i,j}^2\\
\leq& 1,
\end{align*}
where $\sum_{j=1}^d \mu_{i,j}=0$ for all $i$.

Now, POVM
\begin{align}\label{mubm}
\mathcal{M}=\{M_{ij}=\frac{\op{\beta_{i,j}}{\beta_{i,j}}}{d+1}: 0\leq i\leq d,~1\leq j\leq d\}
\end{align}
can be used to map the $d$-dimensional quantum state $\rho$ into a $d(d+1)$ dimensional probabilistic distribution.
The corresponding probability distribution $p=(p(0,1),\dots,p(d,d))$ satisfies
$$
p(i,j)=\frac{\tr(\rho\op{\beta_{i,j}}{\beta_{i,j}})}{d+1}=\frac{\mu_{i,j}+\frac{1}{d}}{d+1},
$$
note that other terms are orthogonal or cancel out due to the property of MUBs and the equations $\sum_{j=1}^d \mu_{i,j}=0$ for all $i$.

Then the $\ell_2$ norm of $p$ can be bounded with
$$
||p||_2=\sqrt{\sum_{i,j} p^2(i,j)}=\frac{\sqrt{\sum_{i,j}(\mu_{i,j}+\frac{1}{d})^2 }}{d+1}=\frac{\sqrt{\sum_{i,j}\mu_{i,j}^2+\frac{d(d+1)}{d^2}+\frac{2\sum_{i,j}\mu_{i,j}}{d}}}{d+1}=\frac{\sqrt{\sum_{i,j}\mu_{i,j}^2+\frac{d+1}{d}}}{d+1}\leq\frac{\sqrt{2}}{d+1}.
$$
More importantly, this map preserves the $\ell_2$ distance, in the sense that the $\ell_2$ distance between
the image probability distributions is exactly the same as the $\ell_2$ distance between the pre-image quantum states with a scaling of $\frac{1}{d+1}$.

For any two states $\rho=\frac{I_d}{d}+\sum_{i,j}\mu_{i,j}\op{\beta_{i,j}}{\beta_{i,j}}$ and $\sigma=\frac{I_d}{d}+\sum_{i,j}\nu_{i,j}\op{\beta_{i,j}}{\beta_{i,j}}$, we have that
$$
||\rho-\sigma||_2=||\sum_{i,j}(\mu_{i,j}-\nu_{i,j})\op{\beta_{i,j}}{\beta_{i,j}}||_2=\sqrt{\sum_{i,j}(\mu_{i,j}-\nu_{i,j})^2},
$$
where similar to Eq. (\ref{mubb}), the other terms are orthogonal or cancel out due to the property of MUBs and the equation $\sum_{j=1}^d \mu_{i,j}=0$ for all $i$.

Using the measurement $\mathcal{M}$ given in Eq. (\ref{mubm}),
the corresponding probability distributions can be obtained:
$p=(p(0,1),\dots,p(d,d))$ and $q=(q(0,1),\dots,q(d,d))$ with
\begin{eqnarray*}
p(i,j)=\frac{\tr(\rho\op{\beta_{i,j}}{\beta_{i,j}})}{d+1}=\frac{\mu_{i,j}+\frac{1}{d}}{d+1},\\
q(i,j)=\frac{\tr(\sigma\op{\beta_{i,j}}{\beta_{i,j}})}{d+1}=\frac{\nu_{i,j}+\frac{1}{d}}{d+1}.
\end{eqnarray*}
Therefore,
$$
||p-q||_2=\frac{\sqrt{\sum_{i,j}(\mu_{i,j}-\nu_{i,j})^2}}{d+1}=\frac{||\rho-\sigma||_2}{d+1}.
$$
\paragraph{Restatement of Theorem \ref{indepent-measurement-classical} }
There is an independent measurement scheme that maps the quantum states in $\mathcal{D}(\C^{d})$ into $d(d+1)$ dimensional probability distributions such that for all quantum states $\rho$ and $\sigma$
\begin{align}
||p-q||_2&=\frac{||\rho-\sigma||_2}{d+1},\\
||p||_2, ||q||_2& \leq\frac{\sqrt{2}}{d+1},
\end{align}
where $p$ and $q$ are the corresponding probability distributions of $\rho$ and $\sigma$, respectively.

\subsection{Local measurement}\label{sec:local}
In this subsection, we generalize the result of the previous subsection into the local measurement setting by proving a restatement of Theorem \ref{local-measurement-classical}.
\paragraph{Restatement of Theorem \ref{local-measurement-classical}}
For $\rho_{1,2,\dots,m},\sigma_{1,2,\dots,m}\in \mathcal{D}(\C^{d_1}\otimes \C^{d_2}\otimes\cdots\otimes\C^{d_m})$ where all $d_i$ are to the power of $2$, each local party is measured using measurement corresponding to MUBs. The resulting probability distribution
$p_{1,2,\dots,m},q_{1,2,\dots,m}\in \Delta(\times_{i=1}^m[d_i(d_i+1)])$ for any $S\subset[m]$ satisfies
\begin{align*}
&||p_{1,2,\dots,m}-q_{1,2,\dots,m}||_2=\frac{\sqrt{\sum_{S\subset[m]}||\rho_{S}-\sigma_{S}||_2^2}}{\Pi_{i=1}^m(d_i+1)},\\
&||p_{S}||_2,||q_S||_2\leq\frac{2^{|S|/2}}{\Pi_{i\in S}(d_i+1)}.
\end{align*}	
Moreover, the first equality is valid for any Hermitian matrix $\sigma$.
\begin{proof}
This general multipartite version of the proof follows naturally from the detailed proof for the bipartite version using the same framework.

According to Theorem \ref{MUBs}, the Pauli group $\mathcal{P}_k=\{I,X,Y,Z\}^{\otimes k}$ can be divided into $2^k+1$ Abelian subgroups of the order $2^k$, say, $G_0,\dots,G_{2^k}$ such that $G_i\bigcap G_j=\{I_2^{\otimes k}\}$ for $i\neq j$. Each subgroup is simultaneously diagonalizable by a corresponding basis. All these $2^k+1$ bases form $2^k+1$ MUBs of a $k$-qubit system.

Assume $d_1=2^{k_1}$ and $d_2=2^{k_2}$. Let $G_{k_1,i}$ be the Abelian subgroups of $\mathcal{P}_{k_1}$ with $\{\ket{\beta_{i,j}}:0\leq i\leq d_1,1\leq j\leq d_1\}$ being the corresponding MUBs. Let $G_{k_2,s}$ be the Abelian subgroups of $\mathcal{P}_{k_2}$ with $\{\ket{\alpha_{s,t}}:0\leq s\leq d_2,1\leq t\leq d_2\}$ being the corresponding MUBs.
That is, for $0\leq i_1\neq i_2\leq d,1\leq j_1,j_2\leq d_1$, $0\leq s_1\neq s_2\leq d,1\leq t_1,t_2\leq d_1$,
\begin{align*}
|\ip{\beta_{i_1,j_1}}{\beta_{i_2,j_2}}|=\frac{1}{\sqrt{d_1}},\\
|\ip{\alpha_{s_1,t_1}}{\alpha_{s_2,t_2}}|=\frac{1}{\sqrt{d_2}}.
\end{align*}

Any state $\rho\in  \mathcal{D}(\C^{d_1}\otimes\C^{d_2})$ can be written as
\begin{align*}
\rho_{1,2}&=\sum_{P_1\in\mathcal{P}_{k_1},P_2\in\mathcal{P}_{k_2}}\eta_{P_1,P_2} P_1\otimes P_2\\
&=\frac{I_{d_1}\otimes I_{d_2}}{d_1 d_2}+\sum_{s=0}^{d_2}\sum_{\begin{subarray}{c}P_2\in G_{k_2,s},\\P_{k_2}\neq I_{k_2}\end{subarray}}\eta_{I,P_2}\frac{I_{d_1}}{d_1}\otimes P_2+\sum_{i=0}^{d_1}\sum_{\begin{subarray}{c}P_1\in G_{k_1,i},\\ P_1\neq I_{k_1}\end{subarray}} \eta_{P_1,I}P_1\otimes \frac{I_{d_2}}{d_2}\\
&+\sum_{s=0}^{d_2}\sum_{\begin{subarray}{c}P_2\in G_{k_2,s},\\ P_{k_2}\neq I_{k_2}\end{subarray}}\sum_{i=0}^{d_1}\sum_{\begin{subarray}{c}P_1\in G_{k_1,i},\\ P_1\neq I_{k_1}\end{subarray}}\eta_{P_1,P_2} P_1\otimes P_2\\
&=\frac{I_{d_1}\otimes I_{d_2}}{d_1 d_2}+\sum_{s,t}\mu_{s,t}\frac{I_{d_1}}{d_1}\otimes \op{\alpha_{s,t}}{\alpha_{s,t}}+\sum_{i,j} \nu_{i,j}\op{\beta_{i,j}}{\beta_{i,j}} \otimes\frac{I_{d_2}}{d_2} +\sum_{i,j,s,t}\chi_{i,j,s,t}\op{\beta_{i,j}}{\beta_{i,j}}\otimes \op{\alpha_{s,t}}{\alpha_{s,t}},
\end{align*}
Following the trace of Pauli matrices, we always have
\begin{align*}
\sum_{t}\mu_{s,t}=0,~~~~~~~\sum_{j}\nu_{i,j}=0,\\
\sum_{t}\chi_{i,j,s,t}=0,~~~~~\sum_{j}\chi_{i,j,s,t}=0.
\end{align*}
Therefore,
\begin{align*}
\rho_1=\frac{I_{d_1}}{d_1}+\sum_{i,j} \nu_{i,j}\op{\beta_{i,j}}{\beta_{i,j}}\\
\rho_2=\frac{I_{d_2}}{d_2}+\sum_{s,t}\mu_{s,t}\op{\alpha_{s,t}}{\alpha_{s,t}}.
\end{align*}
The following inequalities are demonstrated in Section \ref{sec:independence}:
\begin{align*}
\tr\rho_1^2=&\tr\frac{I_{d_1}}{d_1^2}+\sum_{i,j}\nu_{i,j}^2=\frac{1}{d_1}+\nu_{i,j}^2\leq 1,\\
\tr\rho_2^2=&\tr\frac{I_{d_2}}{d_2^2}+\sum_{s,t}\mu_{s,t}^2=\frac{1}{d_2}+\mu_{s,t}^2\leq 1,
\end{align*}
as can the following bound for $\chi_{i,j,s,t}$.
Note that the following terms are mutually orthogonal
\begin{align*}
\frac{I_{d_1}\otimes I_{d_2}}{d_1 d_2},~~~~~~~~~~~~\sum_{s,t}\mu_{s,t}\frac{I_{d_1}}{d_1}\otimes \op{\alpha_{s,t}}{\alpha_{s,t}},\\
\sum_{i,j} \nu_{i,j}\op{\beta_{i,j}}{\beta_{i,j}} \otimes\frac{I_{d_2}}{d_2},~~\sum_{i,j,s,t}\chi_{i,j,s,t}\op{\beta_{i,j}}{\beta_{i,j}}\otimes \op{\alpha_{s,t}}{\alpha_{s,t}}.
\end{align*}
The $\ell_2$ norms of these components can be computed with
\begin{align*}
&||\sum_{s,t}\mu_{s,t}\op{\alpha_{s,t}}{\alpha_{s,t}}||_2^2=\sum_{s_1,t_1,s_2,t_2}\mu_{s_1,t_1}\mu_{s_2,t_2}|\ip{\alpha_{s_1,t_1}}{\alpha_{s_2,t_2}}|^2
=\sum_{s,t}\mu_{s,t}^2+\sum_{s_1\neq s_2}\sum_{t_1}\nu_{s_1,t_1}\frac{\sum_{t_2}\nu_{s_2,t_2}}{d_2}
=\sum_{s,t}\mu_{s,t}^2,\\
&||\sum_{i,j} \nu_{i,j}\op{\beta_{i,j}}{\beta_{i,j}}||_2^2=\sum_{i_1,j_1,i_2,j_2}\nu_{i_1,j_1}\nu_{i_2,s_2}|\ip{\beta_{i_1,j_1}}{\beta_{i_2,j_2}}|^2
=\sum_{i,j}\nu_{i,j}^2+\sum_{i_1\neq i_2}\sum_{j_1}\nu_{i_1,j_1}\frac{\sum_{j_2}\nu_{i_2,j_2}}{d_1}
=\sum_{i,j}\nu_{i,j}^2,
\end{align*}
and
\begin{align*}
&||\sum_{i,j,s,t}\chi_{i,j,s,t}\op{\beta_{i,j}}{\beta_{i,j}}\otimes \op{\alpha_{s,t}}{\alpha_{s,t}}||_2^2\\
=&\sum_{i_1,j_1,s_1,t_1}\sum_{i_2,j_2,s_2,t_2}\chi_{i_1,j_1,s_1,t_1}\chi_{i_2,j_2,s_2,t_2}|\ip{\beta_{i_1,j_1}}{\beta_{i_2,j_2}}|^2|\ip{\alpha_{s_1,t_1}}{\alpha_{s_2,t_2}}|^2\\
=&\sum_{i,j,s,t}\chi^2_{i,j,s,t}+\sum_{i_1\neq i_2,s_1\neq s_2}\frac{\sum_{j_1,j_2}\sum_{t_1,t_2}\chi_{i_1,j_1,s_1,t_1}\chi_{i_2,j_2,s_2,t_2}}{d_1d_2}\\
=&\sum_{i,j,s,t}\chi^2_{i,j,s,t}+\sum_{i_1\neq i_2,s_1\neq s_2}\frac{\sum_{j_1,j_2}\sum_{t_1}\chi_{i_1,j_1,s_1,t_1}\sum_{t_2}\chi_{i_2,j_2,s_2,t_2}}{d_1d_2}\\
=&\sum_{i,j,s,t}\chi^2_{i,j,s,t}.
\end{align*}
Therefore, we have
\begin{align*}
\tr\rho_{1,2}^2=&\tr \frac{I_{d_1}\otimes I_{d_2}}{d^2_1 d^2_2}+\frac{||\sum_{s,t}\mu_{s,t}\op{\alpha_{s,t}}{\alpha_{s,t}}||_2^2}{d_1}+\frac{||\sum_{i,j} \nu_{i,j}\op{\beta_{i,j}}{\beta_{i,j}}||_2^2}{d_2}+||\sum_{i,j,s,t}\chi_{i,j,s,t}\op{\beta_{i,j}}{\beta_{i,j}}\otimes \op{\alpha_{s,t}}{\alpha_{s,t}}||_2^2\\
&=\frac{1}{d_1d_2}+\frac{\sum_{s,t}\mu_{s,t}^2}{d_1}+\frac{\sum_{i,j}\nu_{i,j}^2}{d_2}+\sum_{i,j,s,t}\chi_{i,j,s,t}^2\leq 1.
\end{align*}
Now, POVM
$$
\mathcal{M}=\{M_{i,j,s,t}=\frac{\op{\beta_{i,j}}{\beta_{i,j}}\otimes\op{\alpha_{s,t}}{\alpha_{s,t}}}{(d_1+1)(d_2+1)}: 0\leq i\leq d_1,~1\leq j\leq d_1,~0\leq s\leq d_2,~1\leq t\leq d_2\}
$$
can be used to map the $d_1d_2$-dimensional quantum state into the distribution on $[d_1(d_1+1)]\times[d_2(d_2+1)]$.

Note that this is not merely an independent measurement, it is also a local measurement:
\begin{align*}
\mathcal{M}=&\mathcal{M}_1\otimes\mathcal{M}_2,\\
\mathcal{M}_1=&\{M_{i,j}=\frac{\op{\beta_{i,j}}{\beta_{i,j}}}{d_1+1}: 0\leq i\leq d_1,~1\leq j\leq d_1\},\\
\mathcal{M}_2=& \{N_{s,t}=\frac{\op{\alpha_{s,t}}{\alpha_{s,t}}}{d_2+1}: ~0\leq s\leq d_2,~1\leq t\leq d_2\}.
\end{align*}
The corresponding probability distribution $p_{1,2}$ satisfies
\begin{align*}
p_{1,2}(i,j,s,t)=&\frac{\tr(\rho\op{\beta_{i,j}}{\beta_{i,j}}\otimes\op{\alpha_{s,t}}{\alpha_{s,t}})}{(d_1+1)(d_2+1)}\\
=&\frac{1}{d_1d_2(d_1+1)(d_2+1)}+\frac{\mu_{s,t}}{d_1(d_1+1)(d_2+1)}
+\frac{\nu_{i,j}}{d_2(d_1+1)(d_2+1)}+\frac{\chi_{i,j,s,t}}{(d_1+1)(d_2+1)}.
\end{align*}
$p_{1,2}$ can be regarded as a bipartite distribution on $[d_1(d_1+1)]\times[d_2(d_2+1)]$. Then,
\begin{align*}
p_1(i,j)=\sum_{s,t}p_{1,2}(i,j,s,t)=\sum_{s,t}\tr[\rho_{1,2}(M_{i,j}\otimes N_{s,t})]=\tr[\rho_{1,2}(M_{i,j}\otimes \sum_{s,t}N_{s,t})]=\tr[\rho_{1,2}(M_{i,j}\otimes I_{d_2})]=\tr(\rho_{1}M_{i,j})\\
p_2(s,t)=\sum_{i,j}p_{1,2}(i,j,s,t)=\sum_{i,j}\tr[\rho_{1,2}(M_{i,j}\otimes N_{s,t})]=\tr[\rho_{1,2}(\sum_{i,j}M_{i,j}\otimes N_{s,t})]=\tr[\rho_{1,2}(I_{d_1}\otimes N_{s,t})]=\tr(\rho_{2}N_{s,t}).
\end{align*}
In other words, $p_1$ is the distribution from applying POVM $\mathcal{M}_1$ of $\rho_1$, and $p_2$ is the distribution from applying POVM $\mathcal{M}_2$ of $\rho_2$.

$||p_{1,2}||_2^2$ can be bounded with
\begin{align*}
&(d_1+1)^2(d_2+1)^2\sum_{i,j,s,t} p^2_{1,2}(i,j,s,t)\\
=&\sum_{i,j,s,t}(\frac{1}{d_1d_2}+\frac{\mu_{s,t}}{d_1}
+\frac{\nu_{i,j}}{d_2}+\chi_{i,j,s,t})^2 \\
=&\sum_{i,j,s,t}\frac{1}{d_1^2d_2^2}+\sum_{i,j,s,t}\frac{\mu_{s,t}^2}{d_1^2}+\sum_{i,j,s,t}\frac{\nu_{i,j}^2}{d_2^2}+\sum_{i,j,s,t}\chi_{i,j,s,t}^2+2\sum_{i,j,s,t}\frac{\mu_{s,t}}{d_1^2d_2}+2\sum_{i,j,s,t}\frac{\nu_{i,j}}{d_1d_2^2}+\sum_{i,j,s,t}\frac{2\chi_{i,j,s,t}}{d_1d_2}\\
&+2\sum_{i,j,s,t}\frac{\mu_{s,t}\nu_{i,j}}{d_1d_2}+2\sum_{i,j,s,t}\frac{\mu_{s,t}\chi_{i,j,s,t}}{d_1}+2\sum_{i,j,s,t}\frac{\nu_{i,j}\chi_{i,j,s,t}}{d_2}\\
=&\frac{(d_1+1)(d_2+1)}{d_1d_2}+\frac{d_1+1}{d_1}\sum_{s,t}\mu_{s,t}^2+\frac{d_2+1}{d_2}\sum_{i,j}\nu_{i,j}^2+\sum_{i,j,s,t}\chi_{i,j,s,t}^2\\
&+2\frac{(\sum_{i,j}\nu_{i,j})(\sum_{s,t}\mu_{s,t})}{d_1d_2}+2\sum_{s,t}\frac{\mu_{s,t}(\sum_{i,j}\chi_{i,j,s,t})}{d_1}+2\sum_{i,j}\frac{\nu_{i,j}(\sum_{s,t}\chi_{i,j,s,t})}{d_2}\\
=&\frac{(d_1+1)(d_2+1)}{d_1d_2}+\frac{d_1+1}{d_1}\sum_{s,t}\mu_{s,t}^2+\frac{d_2+1}{d_2}\sum_{i,j}\nu_{i,j}^2+\sum_{i,j,s,t}\chi_{i,j,s,t}^2,
\end{align*}
where we use these facts:
\begin{align*}
\sum_{t}\mu_{s,t}=0,~~~~~~~\sum_{j}\nu_{i,j}=0,\\
\sum_{t}\chi_{i,j,s,t}=0,~~~~~\sum_{j}\chi_{i,j,s,t}=0.
\end{align*}
According to the inequalities obtained, we have
\begin{align*}
&\frac{(d_1+1)(d_2+1)}{d_1d_2}+\frac{d_1+1}{d_1}\sum_{s,t}\mu_{s,t}^2+\frac{d_2+1}{d_2}\sum_{i,j}\nu_{i,j}^2+\sum_{i,j,s,t}\chi_{i,j,s,t}^2\\
=&1+\frac{1}{d_1}+\frac{1}{d_2}+\frac{1}{d_1d_2}+\frac{\sum_{s,t}\mu_{s,t}^2}{d_1}+\sum_{s,t}\mu_{s,t}^2+\frac{\sum_{i,j}\nu_{i,j}^2}{d_2}+\sum_{i,j}\nu_{i,j}^2+\sum_{i,j,s,t}\chi_{i,j,s,t}^2\\
\leq& 4
\end{align*}
Therefore,
\begin{align*}
||p_{1,2}||_2\leq \frac{2}{(d_1+1)(d_2+1)}.
\end{align*}
For any two states
\begin{align*}
\rho_{1,2}=\frac{I_{d_1}\otimes I_{d_2}}{d_1 d_2}+\sum_{s,t}\mu_{s,t}\frac{I_{d_1}}{d_1}\otimes \op{\alpha_{s,t}}{\alpha_{s,t}}+\sum_{i,j} \nu_{i,j}\op{\beta_{i,j}}{\beta_{i,j}} \otimes\frac{I_{d_2}}{d_2} +\sum_{i,j,s,t}\chi_{i,j,s,t}\op{\beta_{i,j}}{\beta_{i,j}}\otimes \op{\alpha_{s,t}}{\alpha_{s,t}}\\
\sigma_{1,2}=\frac{I_{d_1}\otimes I_{d_2}}{d_1 d_2}+\sum_{s,t}\mu_{s,t}'\frac{I_{d_1}}{d_1}\otimes \op{\alpha_{s,t}}{\alpha_{s,t}}+\sum_{i,j} \nu_{i,j}'\op{\beta_{i,j}}{\beta_{i,j}} \otimes\frac{I_{d_2}}{d_2} +\sum_{i,j,s,t}\chi_{i,j,s,t}'\op{\beta_{i,j}}{\beta_{i,j}}\otimes \op{\alpha_{s,t}}{\alpha_{s,t}},
\end{align*}
we have that
\begin{align*}
&||\rho_{1,2}-\sigma_{1,2}||_2\\
=&||\sum_{s,t}(\mu_{s,t}-\mu_{s,t}')\frac{I_{d_1}}{d_1}\otimes \op{\alpha_{s,t}}{\alpha_{s,t}}+\sum_{i,j} (\nu_{i,j}-\nu_{i,j}')\op{\beta_{i,j}}{\beta_{i,j}} \otimes\frac{I_{d_2}}{d_2} +\sum_{i,j,s,t}(\chi_{i,j,s,t}-\chi_{i,j,s,t}')\op{\beta_{i,j}}{\beta_{i,j}}\otimes \op{\alpha_{s,t}}{\alpha_{s,t}}||_2\\
=&\sqrt{\frac{\sum_{s,t}(\mu_{s,t}-\mu_{s,t}')^2}{d_1}+\frac{\sum_{i,j}(\nu_{i,j}-\nu_{i,j}')^2}{d_2}+\sum_{i,j,s,t}(\chi_{i,j,s,t}-\chi_{i,j,s,t}')^2}.
\end{align*}
In performing the POVM
$$
\mathcal{M}=\{M_{i,j,s,t}=\frac{\op{\beta_{i,j}}{\beta_{i,j}}\otimes\op{\alpha_{s,t}}{\alpha_{s,t}}}{(d_1+1)(d_2+1)}: 0\leq i\leq d_1,~1\leq j\leq d_1,~0\leq s\leq d_2,~1\leq t\leq d_2\},
$$
the corresponding probability distributions
$p_{1,2}$ and $q_{1,2}$ satisfy
\begin{eqnarray*}
p_{1,2}(i,j,s,t)=&\frac{1}{d_1d_2(d_1+1)(d_2+1)}+\frac{\mu_{s,t}}{d_1(d_1+1)(d_2+1)}
+\frac{\nu_{i,j}}{d_2(d_1+1)(d_2+1)}+\frac{\chi_{i,j,s,t}}{(d_1+1)(d_2+1)},\\
q_{1,2}(i,j,s,t)=&\frac{1}{d_1d_2(d_1+1)(d_2+1)}+\frac{\mu_{s,t}'}{d_1(d_1+1)(d_2+1)}
+\frac{\nu_{i,j}'}{d_2(d_1+1)(d_2+1)}+\frac{\chi_{i,j,s,t}'}{(d_1+1)(d_2+1)}.
\end{eqnarray*}
Therefore,
\begin{align*}
&||p_{1,2}-q_{1,2}||_2\\
=&\frac{\sqrt{\sum_{i,j,s,t}(\frac{\mu_{s,t}-\mu_{s,t}'}{d_1}
+\frac{\nu_{i,j}-\nu_{i,j}'}{d_2}+\chi_{i,j,s,t}-\chi_{i,j,s,t}')^2}}{(d_1+1)(d_2+1)}\\
=&\frac{\sqrt{(d_1+1)\frac{\sum_{s,t}(\mu_{s,t}-\mu_{s,t}')^2}{d_1}+(d_2+1)\frac{\sum_{i,j}(\nu_{i,j}-\nu_{i,j}')^2}{d_2}+\sum_{i,j,s,t}(\chi_{i,j,s,t}-\chi_{i,j,s,t}')^2}}{(d_1+1)(d_2+1)}\\
=&\frac{\sqrt{||\rho_2-\sigma_2||_2^2+||\rho_1-\sigma_1||_2^2+||\rho_{1,2}-\sigma_{1,2}||_2^2}}{(d_1+1)(d_2+1)},
\end{align*}
wherein the second equality, the cross-terms cancel out using the same argument for computing $||p_{1,2}||_2$,
and the equalities derived in this subsection:
$$
||\rho_{1,2}-\sigma_{1,2}||_2=\sqrt{\frac{\sum_{s,t}(\mu_{s,t}-\mu_{s,t}')^2}{d_1}+\frac{\sum_{i,j}(\nu_{i,j}-\nu_{i,j}')^2}{d_2}+\sum_{i,j,s,t}(\chi_{i,j,s,t}-\chi_{i,j,s,t}')^2},
$$
and the equality obtained in Section \ref{sec:independence}
\begin{align*}
||\rho_2-\sigma_2||_2^2=\sum_{s,t}(\mu_{s,t}-\mu_{s,t}')^2,
||\rho_1-\sigma_1||_2^2=\sum_{i,j}(\nu_{i,j}-\nu_{i,j}')^2.
\end{align*}
\end{proof}

In the simplest version of an $m$-qubit system, we have Proposition \ref{localqubit}:
\begin{Prop}\label{localqubit}
$m$-qubit states can be mapped into $6^m$ probabilistic distributions using $\mathcal{M}=\frac{1}{6^m}\{I+X,I-X,Y+Y,I-Y,I+Z,I-Z\}^{\otimes m}$, such that
\begin{align*}
&||p_{1,2,\dots,m}-q_{1,2,\dots,m}||_2=\frac{\sqrt{\sum_{S\subset[m]}||\rho_{S}-\sigma_{S}||_2^2}}{3^m}\geq \frac{||\rho-\sigma||_2}{3^m}\\
&||p_{S}||_2,~||q_S||_2\leq\frac{2^{|S|/2}}{3^{|S|}}
\end{align*}	
are valid for any $\rho,\sigma$ and the corresponding distributions $p,q$.

Moreover, the first equality is valid for any Hermitian matrix $\sigma$.
\end{Prop}

\section{A Streaming Algorithm for Quantum State Tomography}\label{SAQ}
Having established a connection between quantum property testing and distribution testing, we can begin to explore the potential of classical techniques with various tasks. The first of these, for us, is a streaming algorithm for quantum state tomography.
\paragraph{Restatement of Theorem \ref{tomography}}
The sample complexity of $m$-qubit quantum state tomography with a streaming algorithm is $O(\frac{18^m}{\epsilon^2})$.
\begin{proof}
Use $\mathcal{M}=\frac{1}{6^m}\{I+X,I-X,Y+Y,I-Y,I+Z,I-Z\}^{\otimes m}$ to measure the $m$-qubit quantum state $\rho$, and assume the corresponding distribution is $p$. $p$ is a $6^m$ dimensional distribution.
Moreover, $\mathcal{M}$ maps any $2^m\times 2^m$ Hermitian matrix into a subspace $P\subseteq \mathbb{R}^{6^m}$. For any $v\in P$, there is a Hermitian matrix $A$ such that $\mathcal{M}$ maps $A$ to $v$.

According to Proposition \ref{localqubit}, a good approximation of $p$ leads to a good approximation of $\rho$ as follows.

Using $n$ samples of $p$, we construct $\hat{p}$ as
\begin{align*}
\hat{p}(i)=\frac{\#(i)}{n},
\end{align*}	
where $\#(i)$ denotes the frequency of that $i$ will occur in $n$ samples.
$\#(i)=\sum_{j=1}^n B(i,j)$ is i.i.d., and $B(i,j)$ obeys the Bernoulli distribution with the parameter $p(i)$.
\begin{align*}
\mathbb{E}(||\hat{p}-p||_2^2)=\sum_{i=1}^{6^m}\frac{p(i)-p^2(i)}{n}<\frac{1}{n}.
\end{align*}
$\hat{p}$ may do not lie in $P$.
We can map $\hat{p}$ into $P\hat{p}$. Directly, we have
\begin{align*}
||P\hat{p}-p||_2^2\leq ||\hat{p}-p||_2^2.
\end{align*}
Then
\begin{align*}
\mathbb{E}(||P\hat{p}-p||_2^2)\leq \mathbb{E}(||\hat{p}-p||_2^2)<\frac{1}{n}.
\end{align*}
Let $A$ be the $2^m\times 2^m$ Hermitian matrix corresponds to $P\hat{p}$.
Now
\begin{align*}
\mathbb{E}(||A-\rho||_1)\leq \sqrt{2^m}\mathbb{E}(||A-\rho||_2)\leq  3^m\sqrt{2^m}
\sqrt{\mathbb{E}(||\hat{p}-p||_2^2)}= 3^m\sqrt{2^m}\sqrt{\sum_{i=1}^{6^m}\frac{p(i)-p^2(i)}{n}}<\sqrt{\frac{18^m}{n}}.
\end{align*}
To reach that $||A-\rho||_1\leq \epsilon$ with high probability, only $n=O(\frac{18^m}{\epsilon^2})$ samples are needed, so
we choose $\hat{\rho}$, which minimizes $||X-A||_1$ among all quantum states $X$. Therefore,
\begin{align*}
||\hat{\rho}-\rho||_1\leq ||\hat{\rho}-A||_1+||A-\rho||_1\leq 2||A-\rho||_1\leq 2\epsilon.
\end{align*}
According to $\tr A=1$, we can let $\hat{\rho}=\lambda A^+$ where $\tr \hat{\rho}=1$ and $A=A^+-A^-$ such that $A^+,A^-\geq 0$ and $A^+A^-=0$.
\end{proof}
This is the first streaming quantum algorithm for quantum state tomography in the realm of quantum property testing, given the mentioned mistake-prone fact that Pauli measurements can not be preformed in the local measurement model \cite{FlammiaGrossLiuEtAl2012}.

The advantage of local measurement can be illustrated as the following example. For a general three qubit state $\rho_{1,2,3}$, our local measurement method consists of one-qubit measurements on each qubit.
In this sense, if we want to obtain correlation informaiton about $\rho_{1,2}$, our measurement does not corrupt the correlation information about $\rho_{1,3}$ very much in expectation. Formally, 
the local measurement structure enables us to derive the $k$-local tomography whose goal is to output good estimation of all $k$-qubit reduced density matrices with good precision with high probability.
\paragraph{Restatement of Theorem \ref{localtomography}}
The sample complexity of $k$-local tomography of $m$-qubit quantum state with a streaming algorithm is $O(\frac{108^k(\log{{{m}\choose{k}}}+k\log 6)}{\epsilon^2})$. For constant $k$, it is $\Theta(\frac{\log m}{\epsilon^2})$.
\begin{proof}
Our measurement scheme maps $k$-qubit state $\rho$ into a $6^k$ dimensional probability distribution $p=(p(1),p(2),\cdots, p({6^k}))$.
With $n$ copies of $\rho$, we can have $n$ i.i.d. samples of $p$ and suppose the empirical distribution is $\hat{p}=(\hat{p}(1),\hat{p}(2),\cdots, \hat{p}({6^k}))$.
According to Chernoff bound,
\begin{align*}
Pr(|p(i)-\hat{p}(i)|>\delta)<2\exp(-n\delta^2/2)
\end{align*}
Then with probability at least 
\begin{align*}
1-2\times 6^k\exp(-n\delta^2/2)
\end{align*}
we have that for any $i$, $|p(i)-\hat{p}(i)|<\delta$.
With probability at least $1-2\times 6^k\exp(-n\delta^2/2)$, we have
\begin{align*}
||p-\hat{p}||_2\leq 6^k\delta^2.
\end{align*}
In this case, according to the analysis of Theorem \ref{tomography}, one can output a $\hat{\rho}$ such that $||\rho-\hat{\rho}||_1\leq 6^k\sqrt{3}^k\delta$.

To have good estimations of all ${{m}\choose{k}}$, we only need to do the local measurement on each qubit for each copy. After obtaining the sample of the total $6^m$ dimensional distribution, we can compute the induced sample of each $6^k$ dimensional distribution which corresponds to each $k$-qubit state. The rest is a union bound which says that the probability of having a $\epsilon$ estimation,  by setting $\epsilon=6^k\sqrt{3}^k\delta$, of all $k$-reduced density matrices is at least
\begin{align*}
1-{{m}\choose{k}}2\times 6^k\exp(-n\delta^2/2).
\end{align*}
Let $n=O(\frac{108^k(\log{{{m}\choose{k}}}+k\log 6)}{\epsilon^2})$, we can have the above probability greater than $2/3$.

For constant $k$, it becomes $O(\frac{\log m}{\epsilon^2})$. To show this is tight, we only need to deal with
classical distribution $p=p_1\otimes p_2\otimes\cdots \otimes p_m$ where each $p_i=(\frac{1}{2}+\epsilon,\frac{1}{2}-\epsilon)$ or $p_i=(\frac{1}{2}-\epsilon,\frac{1}{2}+\epsilon)$ and the goal is to obtain $\epsilon$ estimation of each $p_i$. 

According to \cite{Mousavi2016}, $\Omega(\frac{1}{\epsilon^2}\log(\frac{1}{q}))$ samples are needed to achieve confidence at least $1-q$. Therefore, we require that 
\begin{align*}
(1-q)^m>\frac{2}{3}.
\end{align*}
That is $q=O(\frac{1}{m})$, that implies the bound $\Omega(\frac{\log m}{\epsilon^2})$.
\end{proof}
\section{Quantum State Certification} \label{QSC}
The connections developed in Section \ref{sec:independence}, together with the $\ell_2$-identity tester of probability distributions provided in \cite{Chan:2014:OAT:2634074.2634162}, also make efficient identity testing of quantum states possible.
\subsection{Independent measurement}
\paragraph{Restatement of Theorem \ref{iit} }
To identify $\rho,\sigma\in\mathcal{D}(\C^d)$ via local measurement, $O(\frac{d^2}{\epsilon^2})$ copies are sufficient to distinguish, with at least a $\frac{2}{3}$ probability, cases where $\rho=\sigma$ from cases where $||\rho-\sigma||_1\geq \epsilon$.

\begin{proof}
First map the state into probability distributions, say $p$ and $q$, through independent measurement with Theorem \ref{indepent-measurement-classical}, and follow by executing Algorithm \ref{algo:Identity-Test-Independent-Measurements}

\begin{algorithm}[H]
	\Input{$O(\frac{d^2}{\epsilon^2})$ copies of $\rho\in \mathcal{D}(\C^{d})$ and $O(\frac{d}{\epsilon^2})$ copies of $\sigma\in \mathcal{D}(\C^{d})$}	
    \Output{"Yes" with a probability of at least $\frac{2}{3}$ if $\rho=\sigma$; and "No" with a probability of at least $\frac{2}{3}$ if $||\rho-\sigma||_1>\epsilon$.}
    Run Algorithm \ref{algo:2-norm-Identity-Test} to distinguish between $p=q$ and $||p-q||_2\geq \frac{\epsilon}{\sqrt{d}(d+1)}$\;\tcc{$p$ and $q$ are the probability distributions obtained by measuring $\rho$ and $\sigma$ through the independent measurement with Theorem \ref{indepent-measurement-classical}, respectively.}
	\caption{\textsf{A Identity Test with Independent Measurement}}
	\label{algo:Identity-Test-Independent-Measurements}
\end{algorithm}

According to $||p-q||_2=\frac{||\rho-\sigma||_2}{d+1}$, we only need to distinguish cases where $p=q$ from cases where $||p-q||_2\geq \frac{||\rho-\sigma||_1}{\sqrt{d}(d+1)}\geq\frac{\epsilon}{\sqrt{d}(d+1)}$.
Choosing $b=\frac{\sqrt{2}}{d+1}\geq ||p||_2,||q||_2$ and invoking Theorem \ref{classical}, we have
$$O(\frac{b}{(\frac{\epsilon}{\sqrt{d}(d+1)})^2})=O(\frac{d^2}{\epsilon^2})$$
which is a sufficient number of copies.
\end{proof}
If we let $d=2^m$, the sample complexity is $O(\frac{d^{1.5+\log 3}}{\epsilon^2})$.

According to \cite{HHJ+16}, the sample complexity for tomography is $\rho\in\mathcal{D}(\C^d)$ is $\Theta(\frac{d^3}{\epsilon^2})$, which makes Algorithm \ref{algo:Identity-Test-Independent-Measurements} a better choice for identity testing after tomography.

As mentioned in the introduction, Algorithm \ref{algo:Identity-Test-Independent-Measurements} should be significantly easier to implement because it does not demand noiseless, universal quantum
computation with an \textit{exponential} number of qubits.

\subsection{Local measurement}

\paragraph{Restatement of Theorem \ref{lcoaliit}}
For $m$-qubit quantum states $\rho,\sigma$, $O(\frac{(6\sqrt{2})^m}{\epsilon^2})$ copies are sufficient to distinguish via local measurement, with at least a $\frac{2}{3}$ probability, cases where $\rho=\sigma$ from cases where $||\rho-\sigma||_1\geq \epsilon$.
\begin{proof}
First map the state into probability distributions, say $p$ and $q$, through local measurement with Theorem \ref{localqubit}, and follow by executing Algorithm \ref{algo:Identity-Test-Streaming-Algorithm},

\begin{algorithm}
	\Input{$O(\frac{(6\sqrt{2})^m}{\epsilon^2})$ copies of $\rho$ and $O(\frac{(6\sqrt{2})^m}{\epsilon^2})$  copies of $\sigma$}	
    \Output{"Yes" with a probability of at least $\frac{2}{3}$ if $\rho=\sigma$; and "No" with a probability of at least $\frac{2}{3}$ if $||\rho-\sigma||_1>\epsilon$.}
    Run Algorithm \ref{algo:2-norm-Identity-Test} to distinguish between $p=q$ and $||p-q||_2\geq \frac{\epsilon}{2^{m/2}3^m}$\;\tcc{$p$ and $q$ are the probability distributions obtained by measuring $\rho$ and $\sigma$ with the local measurement in Proposition \ref{localqubit}, respectively.}
	\caption{\textsf{A Identity Test Streaming Algorithm}}
	\label{algo:Identity-Test-Streaming-Algorithm}
\end{algorithm}

According to $||p-q||_2\geq \frac{||\rho-\sigma||_2}{3^m}$, we only need to distinguish cases that  $p=q$ from cases where $||p-q||_2\geq \frac{||\rho-\sigma||_1}{\sqrt{d}3^m}\geq\frac{\epsilon}{2^{m/2}3^m}$.
Choosing $b=\frac{2^{m/2}}{3^m}\geq ||p||_2,||q||_2$ and invoking Theorem \ref{classical}, we have
$$O(\frac{b}{(\frac{\epsilon}{2^{m/2}3^m})^2})=O(\frac{(6\sqrt{2})^m}{\epsilon^2})$$
which is a sufficient number of copies.
\end{proof}

This is significantly less complex than the streaming tomography Algorithm \ref{tomography} given in Section \ref{SAQ}.

\subsection{Comparison with a quantum Swap test}

The quantum swap test is widely used in identity testing for quantum states in the $\ell_2$ distance. The swap operator $S$ is a unitary that for any $1\leq i,j\leq d$,
$$
S\ket{i}\ket{j}=\ket{j}\ket{i}.
$$
From direct observation, we always have
$$
\tr[S(\rho\otimes\sigma)]=\tr(\rho\sigma),
$$
for all $\rho,\sigma\in\mathcal{D}(\C^d)$, and
by employing this operator, an observable $M$ can be constructed such that $|M|\leq 10 I$
\begin{align*}
\tr[M(\rho\otimes\rho\otimes\sigma\otimes\sigma)]=||\rho-\sigma||_2^2.
\end{align*}
This follows from letting
\begin{align*}
M=O_1-2O_2+O_3,
\end{align*}
where the bounded operators $O_1,O_2,O_3$ are
\begin{align*}
\tr[O_1(\rho\otimes\rho\otimes\sigma\otimes\sigma)]=&\tr(\rho^2),\\
\tr[O_2(\rho\otimes\rho\otimes\sigma\otimes\sigma)]=&\tr[\rho\sigma],\\
\tr[O_3(\rho\otimes\rho\otimes\sigma\otimes\sigma)]=&\tr(\sigma^2].
\end{align*}
The joint measurement is then
$$
M_1=\frac{10I+O}{20},~~M_2=\frac{10I-O}{20}.
$$
Applying this measurement to $\rho\otimes\rho\otimes\sigma\otimes\sigma$, the resulting probability is
$$p(1)=\frac{1}{2}+\frac{||\rho-\sigma||_2^2}{20},~p(2)=\frac{1}{2}-\frac{||\rho-\sigma||_2^2}{20}.$$

Using this measurement to distinguish cases where $\rho=\sigma$ from cases where $||\sigma-\rho||_2\geq\epsilon$, with high confidence, between the two probability distributions $p$ and $q$,  where
\begin{align*}
p(1)&=\frac{1}{2}, &p(2)=\frac{1}{2}.\\
q(1)&=\frac{1}{2}+\frac{\epsilon^2}{20},&q(2)=\frac{1}{2}-\frac{\epsilon^2}{20}.
\end{align*}
According to basic statistics, $O(\frac{1}{\epsilon^4})$ copies are needed.

In testing the $\ell_1$ norm with this approach, the complexity of distinguishing cases where $\rho=\sigma$ from cases where $||\sigma-\rho||_1\geq\epsilon$ through the $\ell_2$ bound, becomes $O(\frac{d^2}{\epsilon^4})$.

Compare with this Swap test based algorithm, Algorithm \ref{algo:Identity-Test-Independent-Measurements} uses fewer copies and is easier to implement.
For fixed constant $d$, Algorithm \ref{algo:Identity-Test-Streaming-Algorithm} use fewer copies and is much easier to implement.

\section{Testing Independence}\label{TI}
The goal of independence testing is to determine whether a fixed multipartite state $\rho$ is independent, i.e., in tensor product form, or far from being independent. Hence, in this section, we outline a series of testing algorithms and almost matching lower bounds in joint measurement setting, independent measurement, and in a streaming fashion. We begin with the bipartite independence testing and then generalize to multipartite independence testing.

\subsection{Testing bipartite independence}

This subsection presents the algorithms for testing bipartite independence with joint measurement, independent measurement, and local measurement, in that order, and concludes with a proof of the matching lower bounds in the joint measurement setting.
\begin{lemma}\label{thm1}
For a fixed $\rho \in \mathcal{D}(\C^{d_1}\otimes \C^{d_2})$, the sample complexity of independence testing,  i.e., of the form $\rho_1\otimes\rho_2$, or $||\rho-\sigma||_1>\epsilon$ for any independent $\sigma$, is
\begin{itemize}
\item $O(\frac{d_1d_2}{\epsilon^2})$ with joint measurement;
\item $O(\frac{d_1^2d_2^2}{\epsilon^2})$ with independent measurement; and
\item $O(\frac{d_1^{1.5+\log 3}d_2^{1.5+\log 3}}{\epsilon^2})$ with a streaming algorithm.
\end{itemize}
\end{lemma}
\begin{proof}
The algorithm for joint measurement follows

\begin{algorithm}[H]
	\Input{$n=O(\frac{d_1d_2}{\epsilon^2})$ copies of $\rho\in\mathcal{D}(\C^{d_1}\otimes \C^{d_2})$}	
    \Output{"Yes" with a probability of at least $\frac{2}{3}$ if $\rho$ is independent; and "No" with a probability of at least $\frac{2}{3}$ if $||\rho-\sigma||_1>\epsilon$ for any independent $\sigma$.}
	Use $\frac{n}{3}$ copies of $\rho$ to generate $\rho_1$\;\tcc{Trace out system 2}
	Use $\frac{n}{3}$ copies of $\rho$ to generate $\rho_2$\;\tcc{Trace out system 1}
    Run Algorithm \ref{algo:Identity-Test-Joint-Measurements} on $\frac{n}{3}$ copies of $\rho$ and $\frac{n}{3}$ copies of $\rho_1\otimes\rho_2$ with the parameter $\epsilon/3$\;
	\caption{\textsf{A Bipartite Independence Testing with Joint Measurement}}
	\label{algo:Bipartite-Independent-testing-joint-measurements}
\end{algorithm}

The correctness of these algorithm accords with Theorem \ref{BOW} by note that
\begin{itemize}
\item If $\rho$ is independent, then $\rho=\rho_1\otimes\rho_2$, and this algorithm will output "Yes" with high probability.
\item If $||\rho-\sigma||_1>\epsilon$ for any independent $\sigma$, then $||\rho-\rho_1\otimes\rho_2||_1>\epsilon/3$ by Proposition \ref{dis1}, and this algorithm will output "No" with high probability.
\end{itemize}
We can derive an independent measurement tester by replacing the identity tester in Algorithm \ref{algo:Identity-Test-Joint-Measurements} with Algorithm  \ref{algo:Identity-Test-Independent-Measurements}. From a similar analysis to the above, we have
$$O(\frac{d_1^2 d_2^2}{\epsilon^2})$$
which is a sufficient number of copies.

We can derive a local measurement tester by replacing the identity tester in Algorithm \ref{algo:Identity-Test-Joint-Measurements} with Algorithm \ref{algo:Identity-Test-Streaming-Algorithm}. From a similar analysis, we have
$$O(\frac{d_1^{1.5+\log 3}d_2^{1.5+\log 3}}{\epsilon^2})$$
which is a sufficient number of copies.
\end{proof}
Next, we provide a matching bound with joint measurement, up to a $\mathrm{poly}\log$ factor.

\begin{lemma}\label{thmb}
Let $\rho \in \mathcal{D}(\C^{d_1}\otimes \C^{d_2})$ with $d_1\geq d_2$. $\Omega(\frac{d_1d_2}{\epsilon^2})$ copies are necessary to test whether $\rho$ is independent or $\epsilon$-far from being independent in term of $\ell_1$ distance when $d_2>2000$; otherwise, if $d_2\leq2000$, $\Omega(\frac{d_1d_2}{\epsilon^2\log^3 d_1\log\log d_1})$ copies are necessary.
\end{lemma}
In cases where $d_1$ and $d_2$ are both very large, the bound is derived from the mixness test of Theorem \ref{mixness} in \cite{OW15}, where the constant $2000$ comes from the upper and lower bound of the constant in that theorem. To deal with "unbalanced" cases where
only $d_1$ or $d_2$ is small--here, let us says $d_2$--we split the $d_1$ system into many systems of dimension $d_2$, which transforms the original unbalance of bipartite problem into a problem of "balanced" multipartite independence testing. Then, we use Proposition \ref{dis2m}.

\begin{proof}
First, note that it suffices to consider cases where $d_1d_2$ are sufficiently large since $\Omega(\frac{1}{\epsilon^2})$ samples are
required to distinguish the two classical distributions, i.e., a $[2] \times [2]$ uniform distribution
from the distribution
$(\frac{1+2\epsilon}{4},\frac{1-2\epsilon}{4},\frac{1-2\epsilon}{4},\frac{1+2\epsilon}{4})$.

To show the lower bound for a general $d_1$ and $d_2$, assume there is an algorithm, Algorithm A, that uses $f(d_1,d_2,\epsilon)$ copies to decide whether a given $\rho\in\mathcal{D}(\C^{d_1}\otimes \C^{d_2})$ is independent or $\epsilon$-far from being independent with at least a $2/3$ probability of successful. By using Algorithm A as an oracle, the following algorithm can distinguish cases where $\rho=\frac{I_{d_1}}{d_1}\otimes \frac{I_{d_2}}{d_2}$ from cases where $||\rho-\frac{I_{d_1}}{d_1}\otimes \frac{I_{d_2}}{d_2}||_1>\epsilon$ for any $t>1$.

\begin{algorithm}[H]
	\Input{$n=100f(d_1,d_2,\frac{(t-1)\epsilon}{4t})+300t^2\frac{d_1}{\epsilon^2}+\Theta(\frac{d_2}{t^2(t-1)^2\epsilon^2})$ copies of $\rho\in\mathcal{D}(\C^{d_1}\otimes \C^{d_2})$}	
    \Output{"Yes" with a probability of at least $\frac{2}{3}$ if $\rho=\frac{I_{d_1}}{d_1}\otimes\frac{I_{d_2}}{d_2}$; and "No" with a probability of at least $\frac{2}{3}$ if $||\rho-\frac{I_{d_1}}{d_1}\otimes\frac{I_{d_2}}{d_2}||_1>\epsilon$.}
		Repeat Algorithm \ref{mix-OW}, with $100t^2\frac{d_1}{\epsilon^2}$ copies of $\rho$, three times to test whether $\rho_1=\frac{I_{d_1}}{d_1}$ or $||\rho_1-\frac{I_{d_1}}{d_1}||_1>\epsilon/t$ with at least a $\frac{20}{27}$ probability of success\;
\If{"No"}
{
Return ``No"\;
}
\Else{
Employ Algorithm \ref{mix-OW} with $\Theta(\frac{t^2d_2}{(t-1)^2\epsilon^2})$ copies of $\rho$ to test whether $\rho_2=\frac{I_{d_2}}{d_2}$ or $||\rho_1-\frac{I_{d_1}}{d_1}||_1>\frac{(t-1)\epsilon}{4t}$ with
at least a $\frac{27}{28}$ probability of success\;
\If{"No"}
{
Return ``No"\;
}
\Else{
    Run Algorithm A 100 times to test whether $\rho$ is independent or is $\frac{(t-1)\epsilon}{4t}$-far from being independent with
   at least a $\frac{28}{30}$ probability of success\;
    \If{"Yes"}
    {
    Return ``Yes"\;
    }
    \Else{
    Return ``No"\;
    }
    }
}	
	\caption{\textsf{A Bipartite Identity test A for a maximally mixed state}}
	\label{algo:Bipartite Identity test A for maximally mixed state}
\end{algorithm}
To see this algorithm to succeed at detecting whether $\rho$ is maximally mixed with high probability, note that

If $\rho=\frac{I_{d_1}}{d_1}\otimes \frac{I_{d_2}}{d_2}$, in Line 1, the algorithm will output $\rho_1=\frac{I_{d_1}}{d_1}$ with a probability of at least $\frac{20}{27}$; in Line 5, the algorithm will output $\rho_2=\frac{I_{d_2}}{d_2}$ with a probability of at least ${\frac{27}{28}}$; in Line 9, $\rho$ will be independent with a probability of at least ${\frac{28}{30}}$. Overall, Algorithm \ref{algo:Bipartite Identity test A for maximally mixed state} will output "Yes" with a probability of at least $\frac{2}{3}$.

If $||\rho-\frac{I_{d_1}}{d_1}\otimes \frac{I_{d_2}}{d_2}||_1>\epsilon$, then one of the following three statements will be true:
$\rho_1$ is $\epsilon/t$-far from $\frac{I_{d_1}}{d_1}$; or $\rho_2$ is $\frac{(t-1)\epsilon}{4t}$-far from $\frac{I_{d_2}}{d_2}$; or $\rho$ is $\frac{(t-1)\epsilon}{4t}$-far from being independent. Otherwise, assume that there exists an $\sigma_1$ and an $\sigma_2$, such that
    $||\rho-\sigma_1\otimes\sigma_1||_1<\frac{(t-1)\epsilon}{4t}$, $||\rho_1-\frac{I_{d_1}}{d_1}||_1<\frac{\epsilon}{t}$ and $||\rho_2-\frac{I_{d_2}}{d_2}||_1<\frac{(t-1)\epsilon}{4t}$. According to Proposition \ref{dis1}, we have $||\rho-\rho_1\otimes\rho_2||_1<\frac{3(t-1)\epsilon}{4t}$. Then by the triangle inequality and Lemma \ref{dis2}, we have
    $$||\rho-\frac{I_{d_1}}{d_1}\otimes \frac{I_{d_2}}{d_2}||_1\leq ||\rho-\rho_1\otimes\rho_1||_1+||\frac{I_{d_1}}{d_1}\otimes \frac{I_{d_2}}{d_2}-\rho_1\otimes\rho_1||_1< \frac{3(t-1)\epsilon}{4t}+\frac{\epsilon}{t}+\frac{(t-1)\epsilon}{4t}=\epsilon.$$
    Contradiction!
Therefore, in this case, the algorithm outputs "No" with a probability of at least $\min\{\frac{20}{27},\frac{27}{28},\frac{28}{30}\}>\frac{2}{3}$.

This algorithm uses $n=100f(d_1,d_2,\frac{(t-1)\epsilon}{4t})+300t^2\frac{d_1}{\epsilon^2}+\Theta(\frac{t^2d_2}{(t-1)^2\epsilon^2})$ copies of $\rho$. Invoking Theorem \ref{mixness}, we know that $0.15\frac{d_1d_2}{\epsilon^2}$ copies are necessary to test, with at least a 2/3 probability of success, whether $\rho$ is the
maximally mixed state or whether it is $\epsilon$-far.

We must have
$$100f(d_1,d_2,\frac{(t-1)\epsilon}{4t})+300t^2\frac{d_1}{\epsilon^2}+\Theta(t^2\frac{d_2}{(t-1)^2\epsilon^2})\geq 0.15\frac{d_1d_2}{\epsilon^2}.$$
If $d_1$ and $d_2$ are both sufficiently large, we can choose a constant $t$ such that $300t^2\frac{d_1}{\epsilon^2}+\Theta(t^2\frac{d_2}{(t-1)^2\epsilon^2})=o(\frac{d_1d_2}{\epsilon^2})$, which implies
$$f(d_1,d_2,\epsilon)\geq \Omega(\frac{16t^2 d_1d_2}{(t-1)^2\epsilon^2})=\Omega(\frac{d_1d_2}{\epsilon^2}).$$
If $d_1$ is sufficiently large and $d_2$ is not sufficiently large but $d_2>2000$, we can choose $t=\sqrt{\frac{2000.5}{2000}}$, then
$$f(d_1,d_2,c\epsilon)\geq 0.15\frac{d_1d_2}{\epsilon^2}-300t^2\frac{d_1}{\epsilon^2}+\Omega(t^2\frac{d_2}{(t-1)^2\epsilon^2})=\Omega(\frac{d_1}{\epsilon^2})=\Omega(\frac{d_1d_2}{\epsilon^2}),$$
with the constant $c=\frac{t-1}{4t}$.
Thus, for $d_2>2000$,
$$f(d_1,d_2,\epsilon)\geq \Omega(\frac{d_1d_2}{\epsilon^2}).$$
The above technique does not work with a small $d_2$, because the number of copies required to test a $d_1$ system $300t^2\frac{d_1}{\epsilon^2}$ and the number of copies required to test a total system of $0.15\frac{d_1d_2}{\epsilon^2}$ are of the same order.

To deal with this unbalanced case, we develped a dimension splitting technique that transforms bipartite independence into $k$-partite independence. First observe that the sample complexity for independence testing in $\mathcal{D}(\C^{d_1}\otimes \C^{d_2})$ is no less than the sample complexity for an independence test of $\mathcal{D}(\C^{d}\otimes \C^{2})$ for $d=2^{[\log d_1]}\leq d_1$. Therefore, without loss of generality, assume that $d_2=2$ and $d_1=2^k$ instead of $d_2\leq 2000$, and that $d_1$ is sufficiently large.

We still assume that there is an Algorithm A that uses $f(2^k,2,\epsilon)$ copies to decide, with at least a $2/3$ probability of success, whether a given $\rho\in \mathcal{D}(\C^{2^k\times 2^k}\otimes \C^{2\times 2})$ is independent or $\epsilon$-far from independent in the $2^k$ and $2$ bipartitions.
Any such $\rho$ can be regarded as a $k+1$ qubit state, and the qubit systems will be labeled as $S=\{1,2,\dots,k,k+1\}$. $\rho_i$ denotes the reduced density matrix of the $i$-th qubit of $\rho$. Algorithm A is a bipartite independence tester for a $k+1$ qubit system in the bipartition of $k$ qubits and $1$ qubit. In the following, Algorithm A is applied as a black box to the bipartition $i$ and $S \setminus \{i\}$  for any $i$ to test the identity of $\rho$ and $\frac{I_{d_1}}{d_1}\otimes \frac{I_{d_2}}{d_2}$.

\begin{algorithm}[H]
	\Input{$n=\Theta[(k+1)\log k f(2^k,2,\frac{\epsilon}{6(k+1)})]+\Theta[(k+1)\log k\frac{(k+1)^2}{\epsilon^2}]$ copies of $\rho$.}	
\Output{"Yes" with a probability of at least $\frac{2}{3}$ if $\rho=\otimes_{i=1}^{k+1}\otimes \frac{I_{2}}{2}$; and "No" with a probability of at least $\frac{2}{3}$ if $||\rho-\otimes_{i=1}^{k+1}\otimes \frac{I_{2}}{2}||_1>\epsilon$.}
    \For{$i\gets1$ \KwTo $k+1$}{
    Repeat Algorithm \ref{mix-OW}, with $\Theta(\frac{(k+1)^2}{\epsilon^2})$ copies of $\rho$ each time, $\Theta(\log k)$ times to test whether $\rho_i=\frac{I_{2}}{2}$ or $||\rho_i-\frac{I_{2}}{2}||_1>\frac{\epsilon}{6(k+1)}$ 	at least a $1-\frac{1}{k^2}$ probability of success\;
    \If{No}{
Return "No"\;
}
\Else{
	Run Algorithm A $\Theta(\log k)$ times, with $f(2^k,2,\frac{\epsilon}{6(k+1)})$ copies each time, to test whether $\rho$ is independent or $\frac{\epsilon}{6(k+1)}$-far from being independent in the bipartition $\{i\}$ and $S\setminus\{i\}$ with at least a $1-\frac{1}{k^2}$ probability of success\;
\If{No}
{
Return "No"\;
}
}
    }
    Return "Yes"\;
	\caption{\textsf{A Bipartite Identity Test B for a maximally mixed state}}
	\label{algo:Bipartite Identity test B for maximally mixed state}
\end{algorithm}

To see this algorithm succeed in detecting whether $\rho$ is maximally mixed with high probability, we note that

 If $\rho=\frac{I_{d_1}}{d_1}\otimes \frac{I_{d_2}}{d_2}$, then $\rho_i=\frac{I_2}{2}$ when $\rho$ is regarded as a $k+1$ qubit state. It is independent in any bipartition $\{i\}$ and $S\setminus\{i\}$.
    For each $i$, the passing probability of the test $\rho_i=\frac{I_2}{2}$ is at least $1-\frac{1}{k^2}$. For each $i$, the passing probability of the independence test in the bipartition $\{i\}$ and $S\setminus\{i\}$ is at least $1-\frac{1}{k^2}$.
    In total, Algorithm \ref{algo:Bipartite Identity test B for maximally mixed state} will accept with a probability of at least $(1-\frac{1}{k^2})^{O(k)}=1-o(1)>\frac{2}{3}$.

If $||\rho-\frac{I_{d_1}}{d_1}\otimes \frac{I_{d_2}}{d_2}||_1>\epsilon$, at least one of the following two statements is true:
\begin{itemize}
\item $||\rho_i-\frac{I_2}{2}||_1>\frac{\epsilon}{6(k+1)}$ for some $1\leq i\leq k+1$; and/or
\item $\rho$ is $\frac{\epsilon}{6(k+1)}$-far from independent in the bipartition $\{i\}$ and $S\setminus\{i\}$ for some $1\leq i\leq k+1$.
\end{itemize}
Otherwise, $||\rho_i-\frac{I_2}{2}||_1\leq\frac{\epsilon}{6(k+1)}$ and $\rho$ is $\frac{\epsilon}{6(k+1)}$ close to being independent in the bipartition $\{i\}$ and $S\setminus\{i\}$ for all $1\leq i\leq k+1$.

According to Proposition \ref{dis2m}, we have $$||\rho-\rho_1\otimes\rho_2\otimes\cdots\otimes\rho_{k+1}||_1\leq 5(k+1)\frac{\epsilon}{6(k+1)}=\frac{5\epsilon}{6}.$$
By Lemma \ref{dism1}, we have
\begin{eqnarray*}
&&||\rho-\frac{I_{d_1}}{d_1}\otimes \frac{I_{d_2}}{d_2}||_1\\
&=&||\rho-\frac{I_{2}}{2}\otimes \frac{I_{2}}{2}\otimes\cdots\otimes\frac{I_{2}}{2}||_1\\
&\leq& ||\rho-\rho_1\otimes\rho_2\otimes\cdots\otimes\rho_{k+1}||_1+||\frac{I_{2}}{2}\otimes \frac{I_{2}}{2}\otimes\cdots\otimes\frac{I_{2}}{2}-\rho_1\otimes\rho_2\otimes\cdots\otimes\rho_{k+1}||_1\\
&\leq & \frac{5\epsilon}{6}+\sum_{i=1}^{k+1}||\rho_i-\frac{I_2}{2}||_1\\
&\leq& \epsilon.
\end{eqnarray*}
    Contradiction!
Therefore, the algorithm outputs "No" with a probability of at least $1-\frac{1}{k^2}>\frac{2}{3}$ in this case.

Invoking Theorem \ref{mixness}, we know that $\Theta(\frac{d_1d_2}{\epsilon^2})=\Theta(\frac{2^{k+1}}{\epsilon^2})$ copies are necessary to test whether $\rho$ is a
maximally mixed state or $\epsilon$-far with at least a 2/3 probability of success.
Algorithm \ref{algo:Bipartite Identity test B for maximally mixed state} uses $\Theta[(k+1)\log k f(2^k,2,\frac{\epsilon}{6(k+1)})]+\Theta[(k+1)\log k\frac{(k+1)^2}{\epsilon^2}]$ copies of $\rho$.
We must have
\begin{eqnarray*}
&&\Theta[(k+1)\log k f(2^k,2,\frac{\epsilon}{6(k+1)})]+\Theta[(k+1)\log k\frac{(k+1)^2}{\epsilon^2}]\geq \Theta(\frac{2^{k+1}}{\epsilon^2})\\
&\Rightarrow& f(2^k,2,\frac{\epsilon}{6(k+1)})\geq \Theta(\frac{2^{k}}{k\log k\epsilon^2})\\
&\Rightarrow& f(2^k,2,\epsilon)\geq \Theta(\frac{2^{k}}{k^3\log k\epsilon^2})\\
&\Rightarrow& f(d_1,d_2,\epsilon)= \Omega(\frac{d_1}{\log^3 d_1\log\log d_1\epsilon^2})=\Omega(\frac{d_1d_2}{\log^3 d_2\log\log d_1\epsilon^2})
\end{eqnarray*}
That is, if $d_2$ is a small constant, $\Omega(\frac{d_1d_2}{\log^3 d_1\log\log d_1\epsilon^2})$ copies are necessary to test the independence of $\rho\in\mathcal{D}(\C^{d_1}\otimes \C^{d_2})$.

\end{proof}

\subsection{Multipartite independence testing}

In this subsection, we generalize the results of the bipartite independence testing in the previous subsection to multipartite independence testing.

The obvious generalization of the bipartite independence testing to $m$-partite would work using bipartite independence in any $m-1$ parties versus $1$ party. Our goal is to test independence in this scenario with an accuracy of $O(\frac{\epsilon}{m})$ and at least a $1-\frac{1}{m^2}$ probability of success. The correctness of the algorithm follows from Proposition \ref{dis2m}, and the generalization incurs an $O(m^3\log m)$ factor. For constant $m$, $O(m^3\log m)$ is still constant. Thus, the complexity of the different algorithm variants would be $O(\frac{\Pi_{i=1}^m d_i}{\epsilon^2})$ with joint measurement, $O(\frac{\Pi_{i=1}^m d_i^2}{\epsilon^2})$ with independent measurement, and $O(\frac{\Pi_{i=1}^m d_i^{1.5+\log 3}}{\epsilon^2})$ with a streaming algorithm where all $d_i$ are to the power of $2$.

With a super-constant $m$, algorithms could be built that achieve the same complexity using Diakonikolas and Kane’s \cite{DK16} recursion idea coupled with our previous bipartite independence tester.

\paragraph{Restatement of the upper bound part in \ref{thm2}}
The sample complexity of independent testing for $\mathcal{D}(\C^{d_1}\otimes \C^{d_2}\otimes\cdots\otimes\C^{d_m})$, i.e., distinguishing, with at least a $\frac{2}{3}$ probability of success, the cases where $\rho$ is in the tensor product form $\rho_1\otimes\rho_2\otimes\cdots\otimes\rho_m$, or $||\rho-\sigma||_1>\epsilon$ for any tensor product $\sigma$ is:
\begin{itemize}
\item $O(\frac{\Pi_{i=1}^m d_i}{\epsilon^2})$ with joint measurement;
\item $O(\frac{\Pi_{i=1}^m d_i^2}{\epsilon^2})$ with independent measurement; and
\item $O(\frac{\Pi_{i=1}^m d_i^{1.5+\log 3}}{\epsilon^2})$ with a streaming algorithm where all $d_i$ are to the power of $2$.
\end{itemize}

\begin{proof}
Beginning with the joint measurement variant, we can assume that all $d_i \geq 2$, otherwise removing that term does not affect the problem. According to Theorem \ref{thm1} and the discussion above, if $m<100$, we know that, in a joint measurement setting, there exists a sufficiently large $C>0$ and an algorithm for testing $m$-partite quantum independence, with at least a $\frac{2}{3}$ probability of success, using $C\frac{\Pi_{i=1}^m d_i}{\epsilon^2}$ copies given an $\ell_1$ distance parameter of $\epsilon>0$.
Therefore, we prove that, for any $\epsilon>0$, there exists an algorithm for testing $m$-partite quantum independence using $1002C\frac{\Pi_{i=1}^m d_i}{\epsilon^2}$ copies given an $\ell_1$ distance parameter of $\epsilon>0$ with at least a $\frac{2}{3}$ probability of success by induction. This statement is true for $m<100$. Now, suppose this statement is true for $m\leq k$, and we
can derive Algorithm \ref{algo:$m$-partite Independence testing} for $k<m<2k+1$.

\begin{algorithm}[H]
	\Input{$1002C\frac{\Pi_{i=1}^m d_i}{\epsilon^2}$ copies for joint measurement of $\rho\in \mathcal{D}(\C^{d_1}\otimes \C^{d_2}\otimes\cdots\otimes\C^{d_m})$.}	
\Output{"Yes" with a probability of at least $\frac{2}{3}$ if $\rho=\otimes_{i=1}^{m}\rho_i$; and "No" with a probability of at least $\frac{2}{3}$ if $||\rho-\sigma||_1>\epsilon$ for any independent $\sigma$.}
	We first partition $[m]$ into two sets $S_1=\{1,2,\dots,[\frac{m}{2}]\}$ and $S_2=\{[\frac{m}{2}]+1,\dots,m\}$.
	Call Algorithm \ref{algo:Bipartite-Independent-testing-joint-measurements} $40$ times with $25C\frac{\Pi_{i=1}^m d_i}{\epsilon^2}$ copies of $\rho$ each time for joint measurement to test the independence of $\rho$ in the bipartition $S_1$ and $S_2$ given an $\ell_1$ distance parameter of $\epsilon/5$ with at least a $\sqrt[3]{\frac{2}{3}}$ probability of successful\;

    \If{No}{
    Return ''No";
    }
    \Else{
    Call Algorithm $[\frac{m}{2}]$-partite independence testing $2^{50}/25$ times with $25C\frac{\Pi_{i=1}^{[\frac{m}{2}]} d_i}{\epsilon^2}$ copies of $\rho$ each time with joint measurement to test the $[\frac{m}{2}]$-partite independence of $\rho_{S_1}$ given an $\ell_1$ distance parameter $\epsilon/5$ with at least a $\sqrt[3]{\frac{2}{3}}$ probability of successful\;
    \If{No}{
    Return ''No";
    }

    \Else{
    Call Algorithm $m-[\frac{m}{2}]$-partite independence testing $2^{50}/25$ times with $25C\frac{\Pi_{i=m-[\frac{m}{2}]}^{m} d_i}{\epsilon^2}$ copies of $\rho$ each time with joint measurement to test the $m-[\frac{m}{2}]$-partite independence of $\rho_{S_2}$ given an $\ell_1$ distance parameter of $\epsilon/5$ with at least a $\sqrt[3]{\frac{2}{3}}$ probability of successful\;
    \If{No}{
    Return ''No";
    }
    \Else{
     Return ''Yes";
    }
}
}
	\caption{\textsf{An $m$-partite Independence Test}}
	\label{algo:$m$-partite Independence testing}
\end{algorithm}

To see this algorithm succeed with high probability, we note that

If $\rho$ is independent, then $\rho$ is also independent in $S_1$ and $S_2$ since $25C\frac{\Pi_{i=1}^m d_i}{\epsilon^2}$ of $\rho$ are sufficient for testing independence in these bipartitions with at least a ${\frac{2}{3}}$ probability of success given a trace distance parameter of $\epsilon/5$.  According to the Chernoff bound, the probability of passing the test in Line 2 is at least $\sqrt[3]{\frac{2}{3}}$. This procedure is repeated 40 times. To see it also passing the test in Line 6 and the test in Line 10 with a probability of at least $\sqrt[3]{\frac{2}{3}}$, note that
    $$
    C\frac{\Pi_{i=1}^m d_i}{\epsilon^2}>2^{50}/25\cdot25C\frac{\Pi_{i=1}^{[\frac{m}{2}]} d_i}{\epsilon^2}, 2^{50}/25\cdot25C\frac{\Pi_{i=[\frac{m}{2}]+1}^{m}d_i}{\epsilon^2}
    $$
which is a sufficient number of copies for this algorithm, and the probability of outputting "Yes" is at least $\sqrt[3]{\frac{2}{3}}$.

If $||\rho-\sigma||_1>\epsilon$ for any independent $\sigma$, then one of the following three statements will be true:
$\rho$ is $\epsilon/5$-far from independent in the bipartition $S_1$ and $S_2$; or $\rho_{S_1}$ is $\epsilon/5$-far from independent; or $\rho_{S_2}$ is $\epsilon/5$-far from independent. Otherwise, assume that there exists an $\sigma_{S_1}$ and $\psi_{S_1}$ that is independent in $S_1$ and $\sigma_{S_2}$ and $\psi_{S_2}$ and independent in $S_2$ such that
$||\rho-\sigma_{S_1}\otimes\sigma_{S_2}||_1<\epsilon/5$, $||\rho_{S_1}-\psi_{S_1}||_1<\epsilon/5$ and $||\rho_{S_2}-\psi_{S_2}||_1<\epsilon/5$.

According to Proposition \ref{dis1}, we have $||\rho-\rho_{S_1}\otimes\rho_{S_2}||_1<3\epsilon/5$. Then, by the triangle inequality and Lemma \ref{dis2}, we have
 $$||\rho-\psi_{S_1}\otimes\psi_{S_2}||_1\leq ||\rho-\rho_{S_1}\otimes\rho_{S_2}||_1+||\psi_{S_1}\otimes\psi_{S_2}-\rho_{S_1}\otimes\rho_{S_2}||_1\leq \epsilon.$$
Contradiction! This algorithm outputs "No" with a probability of at least $\sqrt[3]{\frac{2}{3}}>\frac{2}{3}$.

We can derive an independent measurement tester by replacing the identity tester in Algorithm  \ref{algo:Identity-Test-Joint-Measurements} with Algorithm \ref{algo:Identity-Test-Independent-Measurements}, along with a bipartite independence tester for independent measurement. Through a similar analysis to the above, we have
$$O(\frac{\Pi_{i=1}^m d_i^2}{\epsilon^2})$$
which is a sufficient number of copies.

We can derive a streaming algorithm tester by replacing the identity tester in Algorithm \ref{algo:Identity-Test-Joint-Measurements} with Algorithm \ref{algo:Identity-Test-Streaming-Algorithm}, plus a bipartite independence tester for local measurement. By a similar analysis, we have
$$O(\frac{\Pi_{i=1}^m d_i^{1.5+\log 3}}{\epsilon^2})$$
which is a sufficient number of copies.	
\end{proof}
Using the dimension splitting technique within the proof of Theorem \ref{thmb}, we can also prove that the bound $O(\frac{\Pi_{i=1}^m d_i}{\epsilon^2})$ for joint measurement is tight up to a polylog factor.

\paragraph{Restatement of the lower bound in \ref{thm2}}
Let $\rho \in \mathcal{D}(\C^{d_1}\otimes \C^{d_2}\otimes\cdots\otimes\C^{d_m})$ with $d_1\geq d_2\geq\cdots\geq d_m$. To test whether $\rho$ is $m$-partite independent or $\epsilon$-far from $m$-partite independent in terms of trace distance,
$\Omega(\frac{\Pi_{i=1}^m d_i}{\epsilon^2\log^3 d_1\log\log d_1})$ copies are necessary for $\Pi_{i=2}^m d_i\leq2000$, and $\Omega(\frac{\Pi_{i=1}^m d_i}{\epsilon^2})$ copies are necessary for $\Pi_{i=2}^m d_i>2000$.

\begin{proof}
First, note that it suffices to only consider cases where $\Pi_{i=1}^m d_i$ are sufficiently large since $\Theta(\frac{1}{\epsilon^2})$ samples are
required to distinguish an independent distribution from the following classical distributions on $[2] \times [2]\times\cdots\times[2]$
$$p=(\frac{1+2\epsilon}{4},\frac{1-2\epsilon}{4},\frac{1-2\epsilon}{4},\frac{1+2\epsilon}{4})\times \mu_2\times\cdots\times\mu_m$$
where $\mu_i$ is a uniform distribution on $[2]$.
From the set of independent distributions, let
$$q=q_1\times q_2\times\cdots\times q_m$$
observe that the distance between $p$ and the set of independent distribution is still at least $\Theta(\epsilon)$. Let $p'=(\frac{1+2\epsilon}{4},\frac{1-2\epsilon}{4},\frac{1-2\epsilon}{4},\frac{1+2\epsilon}{4})$. Thus
$$
||q-p||_1\geq ||p'-q_1\times q_2||_1\geq \Theta(\epsilon).
$$
However, any test, which can distinguish $p$ from independent distributions, can distinguish $p'$ from independent distributions. As noted in the discussion of the lower bounds with bipartite independence testing, $\Theta(\frac{1}{\epsilon^2})$ samples are
required to distinguish $p'$ from independent distributions.

To show the lower bound for general dimensions, assume there exists an Algorithm A that uses $f(d_1,d_2,d_3,\dots,d_n,\epsilon)$ copies to decide whether a given multipartite $\rho\in \mathcal{D}(\C^{d_1}\otimes \C^{d_2}\otimes\cdots\otimes\C^{d_m})$ is independent or $\epsilon$-far from independent with a probability of at least $\frac{2}{3}$.
Just like the bipartite case, we can formalize the case where all $\Pi_{i=1}^m d_i$ are sufficiently large into two subcases.

Case 1: $\Pi_{i=2}^m d_i\leq 2000$. $d_1$ is sufficiently large. For any $\rho_{1,2}\in \mathcal{D}(\C^{d_1}\otimes \C^{d_2})$, the following state
\begin{equation*}
\rho=\rho_{1,2}\otimes\frac{I_{d_3}}{d_3}\otimes\cdots\otimes\frac{I_{d_m}}{d_m}
\end{equation*}
and $\sigma=\sigma_1\otimes\sigma_2\otimes\cdots\otimes\sigma_m$ which satisfies
\begin{equation*}
||\rho-\sigma||_1\geq ||\rho_{1,2}-\sigma_{1,2}||_1=||\rho_{1,2}-\sigma_1\otimes\sigma_2||_1.
\end{equation*}
Alternatively, let $\sigma=\sigma_1\otimes\sigma_2\otimes\frac{I_{d_3}}{d_3}\otimes\cdots\otimes\frac{I_{d_m}}{d_m}$ satisfies
\begin{equation*}
||\rho-\sigma||_1=||\rho_{1,2}-\sigma_1\otimes\sigma_2||_1.
\end{equation*}
Therefore, $\rho$ is $m$-partite independent if $\rho_{1,2}$ is bipartite independent. Moreover, for any $\epsilon>0$, $\rho$ is $\epsilon$-far from $m$-partite independent if, and only if, $\rho_{1,2}$ is $\epsilon$-far from bipartite independent.
The bipartite independence of $\rho_{1,2}$ can be tested by testing the $m$-partite independence of $\rho$ using Algorithm A. From the lower bound of the bipartite independence test, we know that
\begin{equation*}
f(d_1,d_2,d_3,\dots,d_n,\epsilon)\geq \Omega(\frac{d_1}{\epsilon^2\log^3 d_1\log\log d_1})=\Omega(\frac{\Pi_{i=1}^m d_i}{\epsilon^2\log^3 d_1\log\log d_1}).
\end{equation*}

Case 2: $\Pi_{i=2}^m d_i>2000$. Since $\Pi_{i=1}^md_i$ is sufficiently large, there must be a bipartition of $S=\{1,2,\dots,m\}$ into $S_1\subset S$ and $S_2=S\setminus S_1$ such that
$\Pi_{i\in S_1}d_i,\Pi_{i\in S_2} d_i>2000$ where $\Pi_{i\in S_1}d_i$ is sufficiently large. To see this, observe that, if $d_1$ is sufficiently large, we can let $S_1=\{1\}$, and $S_2=\{2,\dots,m\}$. Otherwise $d_1$ is a constant, which means all $d_i$ are constant, and $m$ must be sufficiently large because $\Pi_{i=1}^md_i$ is sufficiently large. Then, let $S_1=\{1,2,\dots,[\frac{m}{2}]\}$ and $S_2=\{[\frac{m}{2}]+1,\dots,m\}$, and without loss of generality, assume $\Pi_{i\in S_1}d_i\geq \Pi_{i\in S_2} d_i>2000$.

Algorithm \ref{algo:Bipartite Identity test A for maximally mixed state} can be used to test whether, for all $t>1$, $\rho=\frac{I_{d_1}}{d_1}\otimes \frac{I_{d_2}}{d_2}\otimes\cdots\otimes\frac{I_{d_m}}{d_m}$ or $||\rho-\frac{I_{d_1}}{d_1}\otimes \frac{I_{d_2}}{d_2}\otimes\cdots\otimes\frac{I_{d_m}}{d_m}||_1 \epsilon$.
Testing whether $\rho$ is $m$-partite independent or $\frac{(t-1)\epsilon}{4t}$-far from $m$-partite independent with at least a $\frac{28}{30}$ probability of success requires $100f(d_1,d_2,d_3,\dots,d_n,\frac{(t-1)\epsilon}{4t})$ copies of $\rho$. Similarly, testing whether $\rho_{S_1}=\otimes_{i\in S_1}\frac{I_{d_i}}{d_i}$ or $||\rho_{S_1}-\otimes_{i\in S_1}\frac{I_{d_i}}{d_i}||_1>\epsilon/t$ with at least a $\frac{20}{27}$  probability of success uses $300t^2\frac{\Pi_{i\in S_1}d_i}{\epsilon^2}$ copies of $\rho$; and testing whether $\rho_{S_2}=\otimes_{i\in S_2}\frac{I_{d_i}}{d_i}$ or $||\rho_{S_2}-\otimes_{i\in S_2}\frac{I_{d_i}}{d_i}||_1>\frac{(t-1)\epsilon}{4t}$ with at least a $\frac{27}{28}$  probability of success uses $\Theta(\frac{t^2\Pi_{i\in S_2} d_i}{(t-1)^2\epsilon^2})$ copies of $\rho$.

This algorithm can also succeed at detecting whether $\rho$ is maximally mixed with high probability. In that:
\begin{itemize}
\item If $\rho=\frac{I_{d_1}}{d_1}\otimes \frac{I_{d_2}}{d_2}\otimes\cdots\otimes\frac{I_{d_m}}{d_m}$, then it is $m$-partite independent. The probability success is at least $\frac{2}{3}$.

\item If $||\rho-\frac{I_{d_1}}{d_1}\otimes \frac{I_{d_2}}{d_2}\otimes\cdots\otimes\frac{I_{d_m}}{d_m}||_1>\epsilon$, then one of the following three statements will be true following the same arguments as in the bipartite case:
$\rho_{S_1}$ is $\epsilon/t$-far from $\otimes_{i\in S_1}\frac{I_{d_i}}{d_i}$; or $\rho_{S_2}$ is $\frac{(t-1)\epsilon}{4t}$-far from $\otimes_{i\in S_2}\frac{I_{d_i}}{d_i}$; or $\rho$ is $\frac{(t-1)\epsilon}{4t}$-far from independent in the $S_1$ and $S_2$ bipartition. If one of the previous two statements is true, the algorithm outputs "No" with a probability of at least $\frac{2}{3}$. Otherwise, $\rho$ is $\frac{(t-1)\epsilon}{4t}$-far from being bipartite independent. Note that the set of $m$-partite independent states is a subset of the bipartite independent states. Thus, $\rho$ is $\frac{(t-1)\epsilon}{4t}$-far from $m$-partite independence, and the algorithm outputs "No" with a probability of at least $\frac{2}{3}$.
\end{itemize}
Invoking Theorem \ref{mixness},
$$100f(d_1,d_2,d_3,\dots,d_n,\frac{(t-1)\epsilon}{4t})+300t^2\frac{\Pi_{i\in S_1}d_i}{\epsilon^2}+\Theta(\frac{t^2\Pi_{i\in S_2} d_i}{(t-1)^2\epsilon^2})\geq 0.15\frac{\Pi_{i=1}^m d_i}{\epsilon^2}.$$
We can choose $t=\sqrt{\frac{2000.5}{2000}}$, then
$$100f(d_1,d_2,d_3,\dots,d_n,\frac{(t-1)\epsilon}{4t})\geq 0.15\frac{\Pi_{i=1}^m d_i}{\epsilon^2}-300t^2\frac{\Pi_{i\in S_1} d_i}{\epsilon^2}-\Theta(t^2\frac{\Pi_{i\in S_2} d_i}{(t-1)^2\epsilon^2})=\Theta(\frac{\Pi_{i\in S_1} d_i}{\epsilon^2})=\Theta(\frac{\Pi_{i=1}^m d_i}{\epsilon^2}).$$
That is
$$
f(d_1,d_2,d_3,\dots,d_n,\epsilon)=\Omega(\frac{\Pi_{i=1}^m d_i}{\epsilon^2}).
$$
\end{proof}
As a direct consequence, observe the following by noticing
$2^{11}>2000$,
\begin{Cor}
If $m\geq 12$, the sample complexity of $m$-partite independence testing is $\Theta(\frac{\Pi_{i=1}^m d_i}{\epsilon^2})$.
\end{Cor}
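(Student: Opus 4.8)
The plan is to derive this directly from Theorem~\ref{thm2}. First I would reduce to the case where every $d_i\geq 2$: a factor $\C^{d_i}$ with $d_i=1$ is a trivial one-dimensional system, and tensoring with it alters neither the state itself, nor the property of being $m$-partite independent, nor any of the relevant $\ell_1$ distances, so all such factors may be discarded without changing the testing problem or its sample complexity.

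With all $d_i\geq 2$ and $m\geq 12$ we then have $\Pi_{i=2}^m d_i \geq 2^{m-1}\geq 2^{11}=2048>2000$, so the hypothesis $\Pi_{i=2}^m d_i>2000$ appearing in Theorem~\ref{thm2} is satisfied. The upper-bound part of that theorem supplies a collective-measurement tester using $O(\frac{\Pi_{i=1}^m d_i}{\epsilon^2})$ copies, while its lower-bound part in the regime $\Pi_{i=2}^m d_i>2000$ shows that $\Omega(\frac{\Pi_{i=1}^m d_i}{\epsilon^2})$ copies are necessary. Combining the two bounds yields that the sample complexity of $m$-partite independence testing equals $\Theta(\frac{\Pi_{i=1}^m d_i}{\epsilon^2})$, as claimed.

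There is essentially no obstacle here: the entire content is the arithmetic fact $2^{11}>2000$ (so that $m\geq 12$ forces $\Pi_{i=2}^m d_i$ past the threshold once the trivial factors are removed), combined with a verbatim invocation of the two halves of Theorem~\ref{thm2}. The only mild care needed is in the opening reduction to $d_i\geq 2$, which is routine.
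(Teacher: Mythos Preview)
Your proposal is correct and takes essentially the same approach as the paper. The paper's proof is a one-line remark that the corollary follows ``by noticing $2^{11}>2000$,'' which is exactly the arithmetic observation you identify, combined with the already-stated reduction to $d_i\geq 2$ and the two halves of Theorem~\ref{thm2}.
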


\section{Testing Properties of Collections of Quantum States}\label{TPC}
The problem of property testing collections of discrete distributions was studied in \cite{LRR13,DK16}. In this section, we explore the quantum counterpart to this problem, which is to test the properties of a collection of quantum states in a query model with $n$ quantum states $\rho_1$, . . . , $\rho_n$  and a given index that we can choose to access.

\subsection{Identity of collections}
This first demonstration is to distinguish cases where all $\rho_i$ are identical from cases where there is no quantum state $\sigma$ such that
$$
\frac{1}{n}\sum_{i=1}^n ||\sigma-\rho_i||_1\leq \epsilon
$$
Algorithm \ref{algo:Identical-Test-Collection-Query-Model}, for the above test with joint measurement, has a similar structure to Diakonikolas and Kane’s \cite{DK16} algorithm for testing the identity of collections of probability distributions.
\begin{algorithm}[H]
	\Input{Access to quantum states $\rho_1$, . . . , $\rho_n$ on $\mathcal{D}(\C^{d})$ and $ \epsilon> 0$.}	
    \Output{"Yes" with a probability of at least $\frac{2}{3}$ if $\rho_i$s are identical; and "No" with a probability of at least $\frac{2}{3}$ if there is no quantum state $\sigma$ such that $
\frac{1}{n}\sum_{i=1}^n ||\sigma-\rho_i||_1> \epsilon
$.}
    Let $L$ be a sufficiently large constant\;
    \For{$k\gets 0$ \KwTo $\lceil\log_2(n(n-1))\rceil$}{
    Select $2^{3k/2}C$ uniformly random pair of elements $(i,j)\in [n]\times [n]$ with restriction $i\neq j$\;
    For each selected $(i,j)$, use tester of Algorithm \ref{BOW} to distinguish between $\rho_i=\rho_j$ and $||\rho_i-\rho_j||_1>2^{k-1}\epsilon$ with a failure probability of at most $L^{-2}6^{-k}$\;
    If any of these testers returned "No", return "No"\;
}
Return "Yes"\;
	\caption{\textsf{An Identical Identity Test for Collections with a Query Model}}
	\label{algo:Identical-Test-Collection-Query-Model}
\end{algorithm}

Note that, if all $\rho_i$s are identical, the probability of success is $\Pi_{k=0}^{\lceil\log_2(n(n-1))\rceil} p_k$,
where
\begin{align*}
p_k\geq (1-L^{-2}6^{-k})^{2^{3k/2}L}\geq (1-2^{3k/2}LL^{-2}6^{-k})=1-\frac{2^{k/2}}{3^kL}.
\end{align*}
Then
\begin{align*}
\Pi_{k=0}^{\lceil\log_2(n(n-1))\rceil} p_k\geq 1-\sum_{k=0}^{\lceil\log_2(n(n-1))\rceil}\frac{2^{k/2}}{3^kL}\geq 1-O(\frac{1}{L}).
\end{align*}
However, if for any $\sigma$, $\frac{1}{n}\sum_{i=1}^n ||\sigma-\rho_i||_1>\epsilon$, the in particular, $\frac{1}{n}\sum_{i=1}^n ||\rho_j-\rho_i||_1>\epsilon$ for any $j$. That is
\begin{align*}
\frac{1}{n(n-1)}\sum_{i\neq j} ||\rho_i-\rho_j||_1>\epsilon=\frac{\epsilon}{2}+\sum_{k=0} \frac{\sqrt{2}-1}{2} (2^{-3k/2})2^k\epsilon> \frac{\epsilon}{2}+\sum_{k=0} \frac{1}{5} (2^{-3k/2})2^k\epsilon.
\end{align*}
Observe that
\begin{align*}
\frac{1}{n(n-1)}\sum_{i\neq j} ||\rho_i-\rho_j||_1<& \frac{|\{(i,j): ||\rho_i-\rho_j||_1<\frac{\epsilon}{2}\}|\frac{\epsilon}{2}}{n(n-1)}+\sum_{k=0} \frac{|\{(i,j):2^{k-1}\epsilon\leq ||\rho_i-\rho_j||_1< 2^k\epsilon\}|{2^k\epsilon}}{n(n-1)}\\
<&\frac{\epsilon}{2}+\sum_{k=0} \frac{|\{(i,j):2^{k-1}\epsilon\leq ||\rho_i-\rho_j||_1\}|{2^k\epsilon}}{n(n-1)}
\end{align*}
Therefore, for some $k\geq 0$, it must hold that
$$|\{(i,j):2^{k-1}\epsilon\leq ||\rho_i-\rho_j||_1\}|>\frac{1}{5} (2^{-3k/2})n(n-1).$$
Actually, there is always such a $k\leq \lceil\log_2(n(n-1))\rceil$. And, if we find some $k_0>\lceil\log_2(n(n-1))\rceil$ with above property, then the above property is also true for $k=\lceil\log_2(n(n-1))\rceil$.
\begin{align*}
&|\{(i,j):2^{k_0-1}\epsilon\leq ||\rho_i-\rho_j||_1\}|>\frac{1}{5} (2^{-3k_0/2})n(n-1)\\
\Rightarrow &|\{(i,j):2^{k_0-1}\epsilon\leq ||\rho_i-\rho_j||_1\}|\geq 1\\
\Rightarrow &|\{(i,j):2^{\lceil\log_2(n(n-1))\rceil-1}\epsilon\leq ||\rho_i-\rho_j||_1\}|\\
\geq &|\{(i,j):2^{k_0-1}\epsilon\leq ||\rho_i-\rho_j||_1\}|\\
\geq &1 \\
\geq&\frac{1}{5} (2^{-3\lceil\log_2(n(n-1))\rceil/2})n(n-1).
\end{align*}
Here, the probability of selecting some $(i,j)$ with this property is at least
\begin{align*}
1-(1-\frac{1}{5\times 2^{3k/2}})^{2^{3k/2}L}\geq 1-O(e^{-L/5}).
\end{align*}
After this, the corresponding tester will return "No" with high probability.

The sample complexity of this algorithm is
$$
\sum_{k=0} 2^{3k/2}L \times O(\frac{d}{(2^{k-1}\epsilon)^2})=O(\frac{d}{\epsilon^2}).
$$

In the independent measurement setting, Algorithm \ref{algo:Identity-Test-Independent-Measurements} can replace Algorithm \ref{algo:Identity-Test-Joint-Measurements} and, from a similar analysis, the total sample complexity becomes $O(\frac{d^2}{\epsilon^2})$.

In the local measurement setting, Algorithm \ref{algo:Identity-Test-Streaming-Algorithm} can replace Algorithm \ref{algo:Identity-Test-Joint-Measurements}, with a sample complexity of $O(\frac{d^{1.5+\log 3}}{\epsilon^2})$ if all $d_i$ are to the power of $2$.

In fact, this idea also holds with the more general version of the problem, i.e., where there are $n$ states $\rho_1$, . . . , $\rho_n$ on $\mathcal{D}(\C^{d})$. Here, the goal is to distinguish cases where all $\rho_i$ are identical from cases where there is no quantum state $\sigma$ such that
$$
\sum_{i=1}^n c_i||\sigma-\rho_i||_1\leq \epsilon
$$
where $c_i> 0$ and $C_0\leq\sum_{i=1}^nc_i\leq C_1$ for absolute constant $0<C_0\leq C_1$.
To see this, choose rational $t_i$ such that $c_i/2\leq t_i\leq c_i$; then we can distinguish cases where all $\rho_i$ are identical from cases where, for any quantum state $\sigma$,
$$
\sum_{i=1}^n t_i||\sigma-\rho_i||_1\geq \frac{\epsilon}{2}.
$$
Note that the last condition is equivalent to a rational $\mu_i=\frac{t_i}{\sum_i t_i}$ such that for any quantum state $\sigma$
$$
\sum_{i=1}^n \mu_i||\sigma-\rho_i||_1\geq \frac{\epsilon}{2\sum_i t_i}\geq \frac{\epsilon}{2C_1}=\Theta(\epsilon).
$$
Let $\mu_i=\frac{n_i}{m}$ for the integers $n_i$ and $m$.
This is equivalent to a collection of $\tilde{\rho_j}$
$$
\frac{1}{m}\sum_{j=1}^{m} ||\sigma-\tilde{\rho_j}||_1\geq \Theta(\epsilon),
$$
where the number of $j$s that satisfy $\tilde{\rho_j}=\rho_i$ is $n_i$, which verifies the following restatement of \ref{cidentical}.
\paragraph{restatement of Theorem \ref{cidentical}.}
Given access to the quantum states $\rho_1$, . . . , $\rho_n$ on $\mathcal{D}(\C^{d})$ and an explicit $c_i>0$ with $C_1\geq\sum_{i} c_i\geq C_0>0$ where $C_0,C_1$ are absolute constants, the sample complexity of distinguishing, with at least a $\frac{2}{3}$ probability of success, the cases where all $\rho_i$ are identical from the cases where $\sum_i c_i ||\rho_i-\sigma||_1> \epsilon$ for any $\sigma$ is
\begin{itemize}
\item  $\Theta(\frac{d}{\epsilon^2})$ with joint measurement;
\item $O(\frac{d^2}{\epsilon^2})$ with independent measurement; and
\item $O(\frac{d^{1.5+\log 3}}{\epsilon^2})$ with a streaming algorithm where all $d_i$ are to the power of $2$.
\end{itemize}

The optimality of joint measurement comes from applying Theorem \ref{mixness} and the following simple case: For even $n$, $\rho_1=\rho_2=\cdots=\rho_{[\frac{n}{2}]}$
and $\rho_{[\frac{n}{2}]+1}=\cdots=\rho_n=\frac{I_d}{d}$, with $||\rho_1-\frac{I_d}{d}||_1>2\epsilon$, and $c_i=\frac{1}{n}$.

\subsection{Independence of collections}
In this test, there are $n$ quantum states $\rho_1$, . . . , $\rho_n$ on $\mathcal{D}(\C^{d_1}\otimes \C^{d_2}\otimes\cdots\otimes\C^{d_m})$, and the goal is to distinguish cases where all $\rho_i$ are independent, i.e.,  $\rho_i=\otimes_{k=1}^m\sigma_{k,i}$ for some $\sigma_{k,i}\in\mathcal{D}(\C^{d_k})$ for $1\leq k\leq m$, from cases where,
$$
\frac{1}{n}\sum_{i=1}^n ||\otimes_{k=1}^m\sigma_{k,i}-\rho_i||_1> \epsilon,
$$
for any $\sigma_{k,i}\in\mathcal{D}(\C^{d_k})$.

The algorithm for the bipartite case is:

\begin{algorithm}[H]
	
	\Input{Access to quantum states $\rho_1$, . . . , $\rho_n$ on $\mathcal{D}(\C^{d_1}\otimes \C^{d_2})$ with $\epsilon> 0$.}	
    \Output{"Yes" with a probability of at least $\frac{2}{3}$ if $\rho_i$s are independent; and "No" with a probability of at least $\frac{2}{3}$ if there are no quantum states $\sigma_{1,i},\sigma_{2,i}$ such that $
\frac{1}{n}\sum_{i=1}^n ||\sigma_{1,i}\otimes\sigma_{2,i}-\rho_i||_1> \epsilon
$}
    Let $L$ be a sufficiently large constant\;
    \For{$k\gets 0$ \KwTo $\lceil\log_2 n\rceil$}{
    Select $2^{3k/2}L$ uniformly random pair of elements $i\in [n]$\;
    For each selected $(i,j)$, use Algorithm \ref{algo:Identity-Test-Joint-Measurements} to distinguish between $\rho_i$ being independent and $\rho_i$ being $2^{k-1}\epsilon$-far from being independent with a failure probability of at most $L^{-2}6^{-k}$\;
    If any of these testers returned “No”, return “No”\;
}
Return "Yes"\;
	\caption{\textsf{An Independence Test for Collections with a Query Model}}
	\label{algo:Independence-Test-Collection-Query-Model}
\end{algorithm}

The sample complexity of this algorithm is
$$
O(\frac{d_1d_2}{\epsilon^2}),
$$
and the analysis is the same as for Algorithm \ref{algo:Identical-Test-Collection-Query-Model}.

In the independent measurement setting, Algorithm  \ref{algo:Identity-Test-Independent-Measurements} can replace Algorithm \ref{algo:Identity-Test-Joint-Measurements} and, from a similar analysis, the total sample complexity becomes
$$
O(\frac{d_1^2d_2^2}{\epsilon^2}).
$$
In the independent measurement setting, Algorithm \ref{algo:Identity-Test-Streaming-Algorithm} can replace Algorithm  \ref{algo:Identity-Test-Joint-Measurements}, with a sample complexity of
$$O(\frac{d_1^{1.5+\log 3}d_2^{1.5+\log 3}}{\epsilon^2}).$$
Algorithm \ref{algo:Independence-Test-Collection-Query-Model}  can be directly generalized into an $m$-partite version following the framework in Section \ref{TI}, which leads to the following restatement of Theorem \ref{cindependent}.

\paragraph{Restatement of Theorem \ref{cindependent}}
Given sample access to quantum states $\rho_1$, . . . , $\rho_n$ in $\mathcal{D}(\C^{d_1}\otimes \C^{d_2}\otimes\cdots\otimes\C^{d_m})$ with $d_1\geq d_2\geq\cdots\geq d_m$ and explicit $c_i>0$ with  $C_1\geq\sum_{i} c_i\geq C_0>0$ where $C_0,C_1$ are absolute constants, the sample complexity of distinguishing, with at least a $\frac{2}{3}$ probability of success, the cases where all $\rho_i$ are $m$-partite independent from the cases where $\sum_i c_i ||\rho_i-\otimes_{k=1}^m\sigma_{k,i}||_1> \epsilon$ for any $\sigma_{k,i}\in\mathcal{D}(\C^{d_k})$ is
\begin{itemize}
\item $\Tilde{\Theta}(\frac{\Pi_{i=1}^m d_i}{\epsilon^2})$ with joint measurement;
\item $O(\frac{\Pi_{i=1}^md_i^2}{\epsilon^2})$ with independent measurement; and
\item $O(\frac{\Pi_{i=1}^md_i^{1.5+\log 3}}{\epsilon^2})$ with a streaming algorithm where all $d_i$ are to the power of $2$.
 \end{itemize}

The lower bound with the joint measurement approach derives from applying Theorem \ref{thmb} to the simple case, where all $\rho_i$ are identical. The goal then becomes one of independence testing for a single state.

\subsection{Independent and identical collections}

In this test, there are $n$ quantum states $\rho_1$, . . . , $\rho_n$ on $\mathcal{D}(\C^{d_1}\otimes \C^{d_2}\otimes\cdots\otimes\C^{d_m})$, and the goal is to distinguish
cases where the $\rho_i=\otimes_{k=1}^m\sigma_{k}$ for some $\sigma_k\in\mathcal{D}(\C^{d_k})$ from cases where
$$
\frac{1}{n}\sum_{i=1}^n ||\otimes_{k=1}^m\sigma_{k}-\rho_i||_1> \epsilon,
$$
 for any $\sigma_{k}\in\mathcal{D}(\C^{d_k})$.

\begin{algorithm}[H]
	\Input{Access to quantum states $\rho_1$, . . . , $\rho_n$ in $\mathcal{D}(\C^{d_1}\otimes \C^{d_2})$ with $\epsilon> 0$.}	
    \Output{"Yes" with a probability of at least $\frac{2}{3}$ if $\rho_i$s are identical and independent; and "No" with a probability of at least $\frac{2}{3}$ if there are no quantum states $\sigma_{1},\sigma_{2}$ such that $
\frac{1}{n}\sum_{i=1}^n ||\sigma_{1}\otimes\sigma_{2}-\rho_i||_1\leq \epsilon
$}

    Run the Identical Test \ref{algo:Identical-Test-Collection-Query-Model} for $\frac{\epsilon}{3}$ with a failure probability of at most $0.01$\;
    Run the Independence Test \ref{algo:Independence-Test-Collection-Query-Model} for $\frac{\epsilon}{3}$ with a failure probability of at most $0.01$\;
    If any of these testers returned “NO”, return “NO”\;
    Otherwise, return "Yes"\;
	\caption{\textsf{An Identical Independence Test for Collections with a Query Model}}
	\label{algo:Identical-Independence-Test-Collection-Query-Model}
\end{algorithm}

The correctness of this algorithm follows from observing that:
if $\frac{1}{n}\sum_{i}||\rho_i-\sigma_{1,i}\otimes\sigma_{2,i}||_1\leq \frac{\epsilon}{3}$ and $\frac{1}{n}\sum_{i}||\rho_i-\sigma||_1\leq \frac{\epsilon}{3}$, then $\frac{1}{n}\sum_{i}||\sigma-\sigma_{1,i}\otimes\sigma_{2,i}||_1\leq \frac{2\epsilon}{3}$, and there exists a $j$ such that $||\sigma-\sigma_{1,j}\otimes\sigma_{2,j}||\leq \frac{2\epsilon}{3}$. Therefore,
$$\frac{1}{n}\sum_{i}||\sigma_{1,j}\otimes\sigma_{2,j}-\rho_i||_2\leq \epsilon.$$
The sample complexity of this algorithm is
$$
O(\frac{d_1d_2}{\epsilon^2}).
$$
In the independent measurement setting, Algorithm \ref{algo:Identity-Test-Independent-Measurements} can replace Algorithm  \ref{algo:Identity-Test-Joint-Measurements} and, from a similar analysis, the total sample complexity becomes
$$
O(\frac{d_1^2d_2^2}{\epsilon^2}).
$$
In the local measurement setting, Algorithm \ref{algo:Identity-Test-Streaming-Algorithm} can replace Algorithm \ref{algo:Identity-Test-Joint-Measurements}, with a sample complexity of
$$O(\frac{d_1^{1.5+\log 3}d_2^{1.5+\log 3}}{\epsilon^2}).$$

Algorithm \ref{algo:Identical-Independence-Test-Collection-Query-Model} can be directly generalized into an $m$-partite version following the framework in Section \ref{TI}.

\begin{Prop}\label{iai}
Given sample access to the quantum states $\rho_1$, . . . , $\rho_n$ $\mathcal{D}(\C^{d_1}\otimes \C^{d_2}\otimes\cdots\otimes\C^{d_m})$ with $\epsilon> 0$ and $d_1\geq d_2\geq\cdots\geq d_m$ and explicit numbers $c_i>0$ as absolute constants $C_1\geq\sum_{i} c_i\geq C_0>0$ for absolute constant $C_0,C_1$, the sample complexity of distinguishing, with at least a $\frac{2}{3}$ probability, the cases where $\rho_i=\otimes_{k=1}^m\sigma_{k}$ for some $\sigma_k\in\mathcal{D}(\C^{d_k})$ and the cases where $\frac{1}{n}\sum_{i=1}^n ||\otimes_{k=1}^m\sigma_{k}-\rho_i||_1> \epsilon$  for any $\sigma_{k}\in\mathcal{D}(\C^{d_k})$ is
\begin{itemize}
\item $\Tilde{\Theta}(\frac{\Pi_{i=1}^m d_i}{\epsilon^2})$ with joint measurement;
\item $O(\frac{\Pi_{i=1}^md_i^2}{\epsilon^2})$ with independent measurement; and
\item $O(\frac{\Pi_{i=1}^md_i^{1.5+\log 3}}{\epsilon^2})$ with a streaming algorithm where all $d_i$ are to the power of $2$.
 \end{itemize}
 The lower bound for joint measurement derives from applying Theorem \ref{thmb} to the simple case where all $\rho_i$ are identical.  The goal then becomes one of independence testing a single state.
\end{Prop}

\section{A Conditional Independence Test of Classical-Quantum-Quantum States}\label{CIT}

The demonstrations in this section cover both joint and independent measurements. A basic component of conditional independence testing generally is an efficient estimator of the $\ell_2$ distance between a bipartite quantum state and the tensor product of its marginal.

With joint measurement, directly employing the $\ell_2$ distance estimator in Theorem \ref{BOW} would require at least $12$ copies of the bipartite state. However, this requirement can be weakened to no less than $4$ copies, which is optimal. This number of $4$ is crucial in our analysis, as a suboptimal number affects the complexity significantly.
With independent measurement, the $\ell_2$ distance estimator given in Section \ref{sec:independence} does not fit because it would destroy the tensor product structure, i.e., independence. Using the two steps introduced in Section \ref{sec:local}, an $\ell_2$ distance estimator works as long as there are at least $4$ copies. This preserves the tensor product structure in the sense that the image of the independent bipartite state is still in tensor product form, i.e., it is the tensor product of the probability distributions.

Within these estimators, the framework given in \cite{Canonne:2018:TCI:3188745.3188756} can be used for conditional independence testing of classical distributions.
\subsection{Joint Measurement}

\begin{lemma}\label{joint-testing}
Given a bipartite quantum state $\rho_{1,2}\in\mathcal{D}(\C^{d_1}\otimes\C^{d_2})$, there is an estimator,  denoted by
$\mathrm{Estimator-independent}: \mathcal{D}(\C^{d_1}\otimes\C^{d_2})\times \mathbb{N}\mapsto \mathbb{R}$ which measures $n$ copies of $\rho_{1,2}$
such that for $n\geq 4$
\begin{align*}
&\mathbb{E}[\mathrm{Estimator-joint}(\rho_{1,2},n)]=||\rho_{1,2}-\rho_1\otimes \rho_2||_2^2,\\
&\mathrm{Var}[\mathrm{Estimator-joint}(\rho_{1,2},n)]=O(\frac{||\rho_{1,2}-\rho_1\otimes \rho_2||_2^2}{n}+\frac{1}{n^2}).
\end{align*}
\end{lemma}
\begin{proof}If $n\geq 12$, $6[\frac{n}{6}]$ copies of $\rho_{1,2}$ are used. $2[\frac{n}{6}]$ copies are used to generate $2[\frac{n}{6}]$ copies of $\rho_A$, and $2[\frac{n}{6}]$ copies are used to generate $2[\frac{n}{6}]$ copies of $\rho_B$. Now there are $2[\frac{n}{6}]\geq 4$ copies of $\rho_1\otimes \rho_2$ and $2[\frac{n}{6}]\geq 4$ copies of $\rho_{1,2}$.

Theorem \ref{BOW} provides an $\mathrm{Estimator}:\C^{d_1\times d_1}\otimes\C^{d_2\times d_2}\times \mathbb{N}\mapsto \mathbb{R}^{+}$,
\begin{align*}
&\mathbb{E}[\mathrm{Estimator-joint}(\rho_{1,2},n)]=||\rho_{1,2}-\rho_1\otimes \rho_2||_2^2,\\
&\mathrm{Var}[\mathrm{Estimator-joint}(\rho_{1,2},n)]=O(\frac{||\rho_{1,2}-\rho_1\otimes \rho_2||_2^2}{2[\frac{n}{6}]}+\frac{1}{(2[\frac{n}{6}])^2})=O(\frac{||\rho_{1,2}-\rho_1\otimes \rho_2||_2^2}{n}+\frac{1}{n^2}).
\end{align*}
If $4\leq n<12$, only $4$ copies needed to be used. Construct an observable $M$ such that $|M|\leq 10 I$ such that
\begin{align*}
\mathbb{E}[\mathrm{Estimator-joint}(\rho_{1,2},4)]=\tr[M(\rho_{1,2}^{\otimes 4})]=||\rho_{1,2}-\rho_1\otimes \rho_2||_2^2.
\end{align*}
This follows from letting
\begin{align*}
M=O_1-2O_2+O_3,
\end{align*}
where the bounded operators $O_1,O_2,O_3$
\begin{align*}
\tr[O_1(\rho_{1,2}^{\otimes 4})]=&\tr(\rho_{1,2}^2),\\
\tr[O_2(\rho_{1,2}^{\otimes 4})]=&\tr[\rho_{1,2}(\rho_{1}\otimes\rho_2)],\\
\tr[O_3(\rho_{1,2}^{\otimes 4})]=&\tr(\rho_{1}^2)\tr(\rho_2^2)=\tr[(\rho_1\otimes\rho_2)^2].
\end{align*}
For $4\leq n<12$, we have
\begin{align*}
&\mathrm{Var}[\mathrm{Estimator-joint}(\rho_{1,2},n)]\\
=&\mathbb{E}(M^2)-\mathbb{E}^2(M)\\
=&\tr[M^2(\rho_{1,2}^{\otimes 4})]-\tr^2[M(\rho_{1,2}^{\otimes 4})]\\
\leq&\tr[M^2(\rho_{1,2}^{\otimes 4})]\\
\leq& 100\\
\leq& O(\frac{||\rho_{1,2}-\rho_1\otimes \rho_2||_2^2}{12}+\frac{1}{12^2}).
\end{align*}
Therefore, the statement is true for all $n\geq 4$.
\end{proof}

Consider the set of states
$$
\tau_{ABC}=\rho_{ABC}\in\Delta(C)\otimes \mathcal{D}(\C^{d_1}\otimes\C^{d_2}),
$$
where $|C|=n$.

\begin{algorithm}[H]
	\Input{Access to classical-quantum-quantum states $\rho=\sum_{c}p_c\op{c}{c}\otimes\rho^c_{AB}\in\tau_{ABC}$ and $\epsilon> 0$.}	
    \Output{"Yes" with a probability of at least $\frac{2}{3}$ if $\rho$ is conditionally independent; and "No" with a probability of at least $\frac{2}{3}$ if there is no conditionally independent $\sigma$ such that $||\rho-\sigma||_1\leq \epsilon$.}
Choose $L>0$ be a sufficiently large constant\;
\If{$n\geq \frac{d_1^4d_2^4}{\epsilon^8}$}{
$m\leftarrow L\frac{d_1^{\frac{4}{7}}d_2^{\frac{4}{7}}n^{\frac{6}{7}}}{\epsilon^{\frac{8}{7}}}$\;
}
\If{$\frac{d_1^{\frac{4}{3}}d_2^{\frac{4}{3}}}{\epsilon^{\frac{8}{3}}}\leq n<\frac{d_1^4d_2^4}{\epsilon^8}$}{
$m\leftarrow L\frac{\sqrt{d_1d_2}n^{\frac{7}{8}}}{\epsilon}$\;
}
\If{$n\leq \frac{d_1^{\frac{4}{3}}d_2^{\frac{4}{3}}}{\epsilon^{\frac{8}{3}}}$}{
$m\leftarrow L\frac{\sqrt{n}d_1d_2}{\epsilon^2}$\;
}
\tcc{Equivalently, $m\leftarrow L\max\{\frac{\sqrt{n}d_1d_2}{\epsilon^2},\min\{\frac{d_1^{\frac{4}{7}}d_2^{\frac{4}{7}}n^{\frac{6}{7}}}{\epsilon^{\frac{8}{7}}},\frac{\sqrt{d_1d_2}n^{\frac{7}{8}}}{\epsilon}\}\}$.}
$\xi\leftarrow\frac{1-\frac{5}{2e}}{2}\min\{\frac{m\epsilon^2}{4d_1d_2},\frac{m^4\epsilon^4}{32d_1^2d_2^2n^3}\}$\;
Set $M$ according to a $\mathrm{Poisson}(m)$ distribution\;
Draw $M$ copies of $\rho$ and measure the system $C$ on a computational basis for each copy. Let $S$ denote the multi-set of samples\;
\For{all $c\in C$}{
Let $S_c$ be the $|S_c|$ copies of $\rho^c_{AB}$\;
\If{$|S_c|\geq 4$}{
$A_c\leftarrow |S_c|\times \mathrm{Estimator-joint}(\rho^c_{AB},|S_c|)$\;
}
\Else
{
$A_c\leftarrow 0$\;
}
}
    \If{$A=\sum_{c\in C} A_c>\xi$}{
     Returned “NO”\;}
     \Else{
    Return "Yes"\;
    }
	\caption{\textsf{A Conditional Independence Test with Joint Measurement}}
	\label{algo:Conditional-Independence-Test-Joint}
\end{algorithm}
For the sake of completeness and clarifying the parameters, the details of the computation of \cite{Canonne:2018:TCI:3188745.3188756} for this case are provided in the Appendix.

\begin{Prop}\label{ciev}
If $\rho$ is conditionally independent, $$\mathbb{E}(A)=0.$$
If $\rho$ is $\epsilon$-far from being conditionally independent,
$$\mathbb{E}(A)\geq (1-\frac{5}{2e})\min\{\frac{m\epsilon^2}{4d_1d_2},\frac{m^4\epsilon^4}{32d_1^2d_2^2n^3}\}.$$
In both cases,
$$
\mathrm{Var}(A)=O(\min\{m,n\}+\mathbb{E}(A)).
$$
\end{Prop}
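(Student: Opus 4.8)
The plan is to analyze the statistic $A=\sum_{c\in C}A_c$ produced by Algorithm~\ref{algo:Conditional-Independence-Test-Collective} through Poissonization. Once $M\sim\mathrm{Poi}(m)$ copies are drawn and sorted by the classical register, the counts $\{|S_c|\}_{c\in C}$ are mutually independent with $|S_c|\sim\mathrm{Poi}(mp_c)$, and conditioned on these counts the $A_c$ are computed from disjoint physical systems and hence are mutually independent as well. Write $\Delta_c^2:=\|\rho^c_{AB}-\rho^c_A\otimes\rho^c_B\|_2^2$, $\delta_c:=\|\rho^c_{AB}-\rho^c_A\otimes\rho^c_B\|_1$, and $g(\lambda):=\mathbb{E}\big[Z\,\mathbf{1}\{Z\ge 4\}\big]$ for $Z\sim\mathrm{Poi}(\lambda)$. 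By Lemma~\ref{collective-testing}, conditioned on $|S_c|=k\ge 4$ the estimator is unbiased for $\Delta_c^2$ with variance $O(\Delta_c^2/k+1/k^2)$, so $\mathbb{E}[A_c\mid|S_c|=k]=k\Delta_c^2$ for $k\ge 4$ and $0$ otherwise, giving $\mathbb{E}(A)=\sum_c\Delta_c^2\,g(mp_c)$. The conditionally independent case is then immediate: comparing the two decompositions of $\rho$ over the orthonormal register $\{|c\rangle\}$ forces $\rho^c_{AB}=\rho^c_A\otimes\rho^c_B$ whenever $p_c>0$, so every $\Delta_c$ vanishes and $\mathbb{E}(A)=0$.

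For the $\epsilon$-far case I would first observe that $\tilde\rho_{ABC}:=\sum_c p_c\,\rho^c_A\otimes\rho^c_B\otimes|c\rangle\langle c|$ lies in $\mathcal{P}_{A,B|C}$, so $\epsilon$-farness yields $\sum_c p_c\delta_c=\|\rho_{ABC}-\tilde\rho_{ABC}\|_1>\epsilon$. Two elementary ingredients then drive the lower bound: the norm comparison $\Delta_c^2\ge\delta_c^2/(d_1d_2)$ (from $\|X\|_1\le\sqrt{d_1d_2}\,\|X\|_2$), and the pointwise estimate $g(\lambda)\ge(1-\tfrac{5}{2e})\min\{\lambda,\lambda^4\}$, which I would prove from the closed form $g(\lambda)=\lambda\,\Pr[\mathrm{Poi}(\lambda)\ge 3]$ by checking that $g(\lambda)/\lambda$ is nondecreasing, $g(\lambda)/\lambda^4$ is nonincreasing, and both equal $1-\tfrac{5}{2e}$ at $\lambda=1$. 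Splitting $C=H\sqcup L$ with $H=\{c:mp_c\ge 1\}$, at least one of $\sum_{c\in H}p_c\delta_c$ and $\sum_{c\in L}p_c\delta_c$ exceeds $\epsilon/2$. In the heavy subcase, Cauchy–Schwarz (using $\sum_{c\in H}p_c\le 1$) gives $\sum_{c\in H}p_c\delta_c^2\ge\epsilon^2/4$, hence $\mathbb{E}(A)\ge(1-\tfrac{5}{2e})\,m\sum_{c\in H}p_c\Delta_c^2\ge(1-\tfrac{5}{2e})\,m\epsilon^2/(4d_1d_2)$. In the light subcase, $p_c\delta_c=(p_c^4\delta_c^2)^{1/4}\delta_c^{1/2}\le\sqrt 2\,(p_c^4\delta_c^2)^{1/4}$, and Hölder over the at most $n$ indices of $L$ yields $\sum_{c\in L}p_c^4\delta_c^2\ge\epsilon^4/(64n^3)$, hence $\mathbb{E}(A)\ge(1-\tfrac{5}{2e})\,m^4\sum_{c\in L}p_c^4\Delta_c^2\ge(1-\tfrac{5}{2e})\,m^4\epsilon^4/(64\,d_1d_2\,n^3)$; since $64\,d_1d_2\le 32\,d_1^2d_2^2$, taking the worse of the two subcases gives the stated bound.

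For the variance, independence of the $A_c$ reduces the task to bounding each $\mathrm{Var}(A_c)$, and this argument makes no use of far-ness, so it covers both cases at once. Conditioning on $|S_c|$ and invoking the estimator's variance bound, $\mathbb{E}[A_c^2]=O\big(\Delta_c^2 g(mp_c)+\Pr[|S_c|\ge 4]\big)+\Delta_c^4\,\mathbb{E}\big[|S_c|^2\mathbf{1}\{|S_c|\ge 4\}\big]$; subtracting $\mathbb{E}[A_c]^2=\Delta_c^4 g(mp_c)^2$ converts the last term into $\Delta_c^4\,\mathrm{Var}\big(|S_c|\mathbf{1}\{|S_c|\ge 4\}\big)$, which is where the potentially large quartic growth cancels. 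A short Poisson calculation shows $\mathrm{Var}\big(|S_c|\mathbf{1}\{|S_c|\ge 4\}\big)=O\big(\Pr[|S_c|\ge 4]+g(mp_c)\big)$ uniformly in $mp_c$: for $mp_c\ge 1$ it is $O(mp_c)=O(g(mp_c))$ since $\mathrm{Var}(|S_c|)=mp_c$, and for $mp_c<1$ it is $O\big((mp_c)^4\big)=O(g(mp_c))$. Using $\Delta_c^2\le 4$ this gives $\mathrm{Var}(A_c)=O\big(\mathbb{E}(A_c)+\Pr[|S_c|\ge 4]\big)$, and summing over $c$ together with $\sum_c\Pr[|S_c|\ge 4]\le\min\{m/4,n\}$ (Markov, respectively the trivial bound) yields $\mathrm{Var}(A)=O\big(\mathbb{E}(A)+\min\{m,n\}\big)$.

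The part I expect to be the main obstacle is the variance bookkeeping: one has to see precisely that the quartic term $\Delta_c^4(mp_c)^2$ in $\mathbb{E}[A_c^2]$ is exactly cancelled by $\mathbb{E}[A_c]^2$, and that the surviving $\Delta_c^4\,\mathrm{Var}(|S_c|\mathbf{1}\{|S_c|\ge 4\})$ is dominated by $\mathbb{E}(A_c)+\Pr[|S_c|\ge 4]$ uniformly across the light regime $mp_c<1$ and the heavy regime $mp_c\ge 1$. This is exactly where the optimality of Lemma~\ref{collective-testing} is needed: its variance bound $O(\Delta_c^2/k+1/k^2)$ being valid already at $k=4$ (not just $k\ge 12$) is what keeps the $1/k^2$ contribution from exceeding $\Pr[|S_c|\ge 4]$ after summation, and a weaker estimator-variance bound would inflate $\mathrm{Var}(A)$ past the target $O(\min\{m,n\}+\mathbb{E}(A))$, which in turn would degrade the threshold $\xi$ and the resulting sample complexity.
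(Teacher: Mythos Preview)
Your proposal is correct and follows essentially the same approach as the paper's proof: Poissonization, the pointwise bound $g(\lambda)\ge(1-\tfrac{5}{2e})\min\{\lambda,\lambda^4\}$, the heavy/light split with Cauchy--Schwarz and H\"older, and for the variance the conditional decomposition together with $\mathrm{Var}(|S_c|\mathbf 1\{|S_c|\ge 4\})=O(g(mp_c))$, which the paper states as a separate lemma (Lemma~\ref{Poisson}) while you sketch it inline. The only cosmetic differences are that the paper converts $\ell_1$ to $\ell_2$ before the H\"older step (working with $\sqrt{b_c}$ rather than $\delta_c$), obtaining the denominator $32d_1^2d_2^2$ directly, and that the paper organizes the variance bound via the global law of total variance $\mathrm{Var}(A)=\mathbb{E}[\mathrm{Var}(A\mid a)]+\mathrm{Var}(\mathbb{E}[A\mid a])$ rather than summing $\mathrm{Var}(A_c)$ per $c$.
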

$m_0=\max\{\frac{\sqrt{n}d_1d_2}{\epsilon^2},\min\{\frac{d_1^{\frac{4}{7}}d_2^{\frac{4}{7}}n^{\frac{6}{7}}}{\epsilon^{\frac{8}{7}}},\frac{\sqrt{d_1d_2}n^{\frac{7}{8}}}{\epsilon}\}$ actually guarantees that
$$
\min\{\frac{m_0\epsilon^2}{d_1d_2},\frac{m_0^4\epsilon^4}{d_1^2d_2^2n^3}\}\geq \sqrt{\min\{m_0,n\}}.
$$
By setting $m=Lm_0$, we always have
\begin{align*}
&\min\{\frac{m\epsilon^2}{d_1d_2},\frac{m^4\epsilon^4}{d_1^2d_2^2n^3}\}\geq \sqrt{L}\times\sqrt{\min\{m,n\}}\\
\Rightarrow &\xi\geq \frac{(1-\frac{5}{2e})\sqrt{L}}{64}\times\sqrt{\min\{m,n\}},\\
&\mathbb{E}(A)\geq \frac{(1-\frac{5}{2e})\sqrt{L}}{32}\times\sqrt{\min\{m,n\}}.
\end{align*}
If $\rho$ is conditionally independent, then
\begin{align*}
\mathrm{Pr}[A>\xi]\leq\frac{\mathrm{Var}(A)}{\xi^2}= O(\frac{\min\{m,n\}}{\xi^2})\leq \frac{1}{3}.
\end{align*}
If $\rho$ is $\epsilon$-far from being conditionally independent, then
\begin{align*}
\mathrm{Pr}[A<\xi]\leq\mathrm{Pr}[A<\frac{\mathbb{E}(A)}{2}]\geq\mathrm{Pr}[|A-\mathbb{E}(A)|>\frac{\mathbb{E}(A)}{2}]
\leq\frac{\mathrm{Var}(A)}{4\mathbb{E}^2(A)}= O(\frac{\min\{m,n\}}{\mathbb{E}^2(A)}+\frac{1}{\mathbb{E}(A)})\leq \frac{1}{3}.
\end{align*}
Therefore, the following is validity
\paragraph{Restatement of the joint measurement part in Theorem \ref{cid}}
$O(\max\{\frac{\sqrt{n}d_1d_2}{\epsilon^2},\min\{\frac{d_1^{\frac{4}{7}}d_2^{\frac{4}{7}}n^{\frac{6}{7}}}{\epsilon^{\frac{8}{7}}},\frac{\sqrt{d_1d_2}n^{\frac{7}{8}}}{\epsilon}\}\})$ copies
are sufficient for distinguishing between $\rho_{ABC}$ as being conditionally independent from $\epsilon$-far from it being conditionally independent in the classical-quantum-quantum state $\rho_{ABC}\in \tau_{ABC}$ with $|C|=n$ using joint measurement.
\subsection{Independent measurement}
For $\rho_{1,2}$, let $\sigma_{1,2}=\rho_{1}\otimes\rho_2$ with $\rho_1$ and $\rho_2$ being the marginal of $\rho_{1,2}$. Apply the independent measurement $\mathcal{M}=\mathcal{M}_1\otimes\mathcal{M}_2$ given in Section \ref{sec:local} on $\rho_{1,2}$ and obtain $p_{1,2}$. $q_{1,2}$ is a product distribution $q_{1,2}=q_1\otimes q_2$ where $q_1\in\Delta([d_1(d_1+1)])$ is obtained from a POVM $\mathcal{M}_1$ of $\rho_1$, and $q_2\in\Delta([d_2(d_2+1])$ is obtained from a POVM $\mathcal{M}_2$ of $\rho_2$. In other words, $q_1=p_1$ and $q_2=p_2$.
Therefore
\begin{align*}
||p_{1,2}-p_1\otimes p_2||_2=\frac{||\rho_{1,2}-\rho_1\otimes\rho_2||_2}{(d_1+1)(d_2+1)}.
\end{align*}
The $\ell_2$ distance between $\rho_{1,2}$ and $\rho_1\otimes\rho_2$ can be tracked by tracking the $\ell_2$ distance between $p_{1,2}$ and $p_1\times p_2$--in other words, by tracking $p_{1,2}$.
According to Theorem \ref{local-measurement-classical},
\begin{align*}
||p_{1,2}||_2\leq& \frac{2}{(d_1+1)(d_2+1)},\\
||p_1||_2||p_2||_2\leq&\frac{\sqrt{2}}{(d_1+1)}\frac{\sqrt{2}}{(d_2+1)}=\frac{2}{(d_1+1)(d_2+1)}.
\end{align*}
Combining this with Theorem \ref{classical-estimate}, gives rise to Lemma \ref{variance-independent}.

\begin{lemma}\label{variance-independent}
For the bipartite quantum state $\rho_{1,2}\in \mathcal{D}(\C^{d_1}\otimes\C^{d_2})$, there is an estimator, denoted by
$\mathrm{Estimator-independent}: \mathcal{D}(\C^{d_1}\otimes\C^{d_2})\times \mathbb{N}\mapsto \mathbb{R}$ which measures $n$ copies of $\rho_{1,2}$ using independent measurement such that
for $n\geq 4$,
\begin{align*}
\mathbb{E}[\mathrm{Estimator-independent}(\rho_{1,2},n)]=&||\rho_{1,2}-\rho_1\otimes \rho_2||_2^2,\\
\mathrm{Var}[\mathrm{Estimator-independent}(\rho_{1,2},n)]=&O[\frac{(d_1+1)(d_2+1)||\rho_{1,2}-\rho_1\otimes \rho_2||_2^2}{n}+\frac{(d_1+1)^2(d_2+1)^2}{n^2}]\\
=&O[\frac{d_1d_2||\rho_{1,2}-\rho_1\otimes \rho_2||_2^2}{n}+\frac{d_1^2d_2^2}{n^2}].
\end{align*}
\end{lemma}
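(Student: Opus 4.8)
The plan is to combine the dimension-scaling identity already established in this subsection with the classical $\ell_2$ estimator of Theorem \ref{classical-estimate}, applied to the distribution obtained from the local MUB-measurement $\mathcal{M} = \mathcal{M}_1 \otimes \mathcal{M}_2$. First I would fix $\rho_{1,2} \in \mathcal{D}(\C^{d_1}\otimes \C^{d_2})$; enlarge $d_1, d_2$ to the nearest powers of two $d_1' \le 2d_1 - 1$, $d_2' \le 2d_2 - 1$ so that the MUB construction of Theorem \ref{MUBs} is available; and define $\mathrm{Estimator\text{-}independent}(\rho_{1,2}, n)$ as follows. Apply the local measurement $\mathcal{M}$ independently to each of the $n$ copies, obtaining $n$ i.i.d.\ samples from the distribution $p_{1,2} \in \Delta([d_1'(d_1'+1)] \times [d_2'(d_2'+1)])$ that we constructed above. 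Feed these samples into the classical estimator $Q$ of Theorem \ref{classical-estimate} applied to $p_{AB} = p_{1,2}$, and finally rescale the output by $(d_1'+1)^2(d_2'+1)^2$.

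The correctness of the expectation follows immediately from the two identities already derived in this subsection: the marginals of $p_{1,2}$ are exactly $p_1 = q_1$ and $p_2 = q_2$ (the distributions obtained by applying $\mathcal{M}_1$ to $\rho_1$ and $\mathcal{M}_2$ to $\rho_2$), and
\[
\|p_{1,2} - p_1 \otimes p_2\|_2 = \frac{\|\rho_{1,2} - \rho_1 \otimes \rho_2\|_2}{(d_1'+1)(d_2'+1)}.
\]
Hence $\mathbb{E}[Q] = \|p_{1,2} - p_1 \otimes p_2\|_2^2 = \|\rho_{1,2} - \rho_1\otimes\rho_2\|_2^2 / (d_1'+1)^2(d_2'+1)^2$, and multiplying by $(d_1'+1)^2(d_2'+1)^2$ gives the claimed expectation. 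For the variance, Theorem \ref{classical-estimate} with $b = \tfrac{2}{(d_1'+1)(d_2'+1)} \ge \|p_{1,2}\|_2, \|p_1 \otimes p_2\|_2$ (both bounds established in this subsection) gives
\[
\mathrm{Var}[Q] = O\!\left(\frac{b\,\|p_{1,2}-p_1\otimes p_2\|_2^2}{n} + \frac{b^2}{n^2}\right).
\]
Substituting the value of $b$ and the scaling identity, then multiplying the variance by $(d_1'+1)^4(d_2'+1)^4$ (the square of the rescaling factor), the $(d_i'+1)$ powers collapse to leave
\[
O\!\left(\frac{(d_1'+1)(d_2'+1)\|\rho_{1,2}-\rho_1\otimes\rho_2\|_2^2}{n} + \frac{(d_1'+1)^2(d_2'+1)^2}{n^2}\right),
\]
and finally $d_i' \le 2d_i - 1 = O(d_i)$ converts this to the stated $O\!\big(\tfrac{d_1 d_2 \|\rho_{1,2}-\rho_1\otimes\rho_2\|_2^2}{n} + \tfrac{d_1^2 d_2^2}{n^2}\big)$ bound.

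The only genuine subtlety — and the step I would be most careful about — is verifying that the image distribution $p_{1,2}$ genuinely has the product structure needed to invoke Theorem \ref{classical-estimate} in the "tensor product of marginals" form, i.e.\ that measuring $\rho_1 \otimes \rho_2$ with the local measurement $\mathcal{M}_1 \otimes \mathcal{M}_2$ yields $p_1 \otimes p_2$; this is exactly the content of the computation $q_{1,2} = q_1 \otimes q_2$ with $q_1 = p_1$, $q_2 = p_2$ carried out above, which hinges on the traceless relations $\sum_t \mu_{s,t} = \sum_j \nu_{i,j} = 0$ and the MUB orthogonality that make the cross terms vanish. Everything else is the routine bookkeeping of tracking the $(d_i'+1)$ scaling factors through the expectation and variance, plus the Poissonization already built into the classical estimator of \cite{Canonne:2018:TCI:3188745.3188756}, so the requirement $n \ge 4$ is inherited directly from Theorem \ref{classical-estimate}.
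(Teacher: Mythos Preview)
Your proposal is correct and follows essentially the same approach as the paper: the paper's proof is literally the single sentence ``Together with Theorem \ref{classical-estimate}, we can observe that\ldots'', relying on the preceding derivations of $\|p_{1,2}-p_1\otimes p_2\|_2 = \|\rho_{1,2}-\rho_1\otimes\rho_2\|_2/((d_1+1)(d_2+1))$ and $\|p_{1,2}\|_2,\|p_1\otimes p_2\|_2 \le 2/((d_1+1)(d_2+1))$, and you have simply spelled out the rescaling bookkeeping in detail. The only cosmetic difference is that the paper has already assumed $d_1,d_2$ are powers of two earlier in the subsection, so it writes $(d_i+1)$ directly rather than introducing $d_i'$.
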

Algorithm \ref{algo:Conditional-Independence-Test-Independent} below is based on this estimator.

\begin{algorithm}[H]
	\Input{Access to classical-quantum-quantum states $\rho=\sum_{c}p_c\op{c}{c}\otimes\rho^c_{AB}\in \tau_{ABC}$ and $\epsilon> 0$.}	
    \Output{"Yes" with a probability of at least $\frac{2}{3}$ if $\rho$ is conditional independent; and "No" with a probability of at least $\frac{2}{3}$ if there is no conditionally independent $\sigma$ such that $||\rho-\sigma||_1\leq \epsilon$.}
Choose $L>0$ be a sufficient large constant\;
\If{$n\geq \frac{d_1^6d_2^6}{\epsilon^8}$}{
$m\leftarrow L\frac{d_1^{\frac{6}{7}}d_2^{\frac{6}{7}}n^{\frac{6}{7}}}{\epsilon^{\frac{8}{7}}}$\;
}
\If{$\frac{d_1^{\frac{10}{3}}d_2^{\frac{10}{3}}}{\epsilon^{\frac{8}{3}}}\leq n<\frac{d_1^4d_2^4}{\epsilon^8}$}{
$m\leftarrow L\frac{d_1^{\frac{3}{4}}d_2^{\frac{3}{4}}n^{\frac{7}{8}}}{\epsilon}$\;
}
\If{$0<n< \frac{d_1^{\frac{10}{3}}d_2^{\frac{10}{3}}}{\epsilon^{\frac{8}{3}}}$}{
$m\leftarrow L\frac{\sqrt{n}d_1^2d_2^2}{\epsilon^2}$\;
}
\tcc{Equivalently, $m\leftarrow L\max\{\frac{\sqrt{n}d_1^2d_2^2}{\epsilon^2},\min\{\frac{d_1^{\frac{3}{4}}d_2^{\frac{3}{4}}n^{\frac{7}{8}}}{\epsilon},\frac{d_1^{\frac{6}{7}}d_2^{\frac{6}{7}}n^{\frac{6}{7}}}{\epsilon^{\frac{8}{7}}}\}\}$.}
$\xi\leftarrow\frac{1-\frac{5}{2e}}{2}\min\{\frac{m\epsilon^2}{4d_1d_2},\frac{m^4\epsilon^4}{32d_1^2d_2^2n^3}\}$\;
Set $M$ according to $\mathrm{Poisson}(m)$ distribution\;
Draw $M$ copies of $\rho$, measure the system  $C$ on a computational basis for each copy and measure the $\C^{d_1\times d_1}$, $\C^{d_2\times d_2}$ systems with the measurements $\mathcal{M}_1$ and $\mathcal{M}_2$, respectively. Let $S$ denote the multi-set of measurement outcome\;
\For{all $c\gets 0$ \KwTo $\lceil\log_2 n\rceil$}{
Let $S_c$ be the $|S_c|$ copies of $\rho^c_{AB}$\;
\If{$|S_c|\geq 4$}{
$A_c\leftarrow |S_c|\times \mathrm{Estimator-independent}(\rho^c_{AB},|S_c|)$\;
}
\Else
{
$A_c\leftarrow 0$\;
}
}
    \If{$A=\sum_{c} A_c>\xi$}{
     Returned “NO”\;}
     \Else{
    Return "Yes"\;
    }
	\caption{\textsf{A Conditional Independence Test with Independent Measurement}}
	\label{algo:Conditional-Independence-Test-Independent}
\end{algorithm}
Similar to Proposition \ref{ciev}, we have:
\begin{Prop}
If $\rho$ is conditionally independent, $$\mathbb{E}(A)=0.$$
If $\rho$ is $\epsilon$-far from being independent,
$$\mathbb{E}(A)\geq (1-\frac{5}{2e})\min\{\frac{m\epsilon^2}{4d_1d_2},\frac{m^4\epsilon^4}{32d_1^2d_2^2n^3}\}.$$
In both cases,
$$
\mathrm{Var}(A)=O(d_1^2d_2^2\min\{m,n\}+d_1d_2\mathbb{E}(A)).
$$
\end{Prop}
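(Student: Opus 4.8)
The plan is to run the same analysis as for Proposition \ref{ciev} (and, behind it, the computation of \cite{Canonne:2018:TCI:3188745.3188756} reproduced in the appendix), replacing the per-class $\ell_2$-estimator guarantee of Lemma \ref{collective-testing} by that of Lemma \ref{variance-independent} wherever a variance of the estimator enters. First I would Poissonize: since $M\sim\mathrm{Poi}(m)$ and the $C$-register is measured in the computational basis, the class multiplicities satisfy $|S_c|\sim\mathrm{Poi}(mp_c)$ and are mutually independent, so the $A_c$ are independent and $\mathbb{E}(A)=\sum_c\mathbb{E}(A_c)$, $\mathrm{Var}(A)=\sum_c\mathrm{Var}(A_c)$. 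Writing $\delta_c:=\|\rho^c_{AB}-\rho^c_A\otimes\rho^c_B\|_2$, unbiasedness in Lemma \ref{variance-independent} gives $\mathbb{E}[A_c\mid|S_c|=k]=k\delta_c^2$ for $k\ge4$ and $0$ otherwise, hence $\mathbb{E}(A_c)=\delta_c^2\,\mathbb{E}\!\left[|S_c|\,\mathbf{1}_{\{|S_c|\ge4\}}\right]$.

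For the two expectation claims: if $\rho$ is conditionally independent, matching coefficients against the description of $\mathcal{P}_{A,B|C}$ forces each $\rho^c_{AB}$ to be a product state, so every $\delta_c=0$ and $\mathbb{E}(A)=0$. If $\rho$ is $\epsilon$-far from $\mathcal{P}_{A,B|C}$, then since $\tilde\rho_{ABC}\in\mathcal{P}_{A,B|C}$ the definition of being $\epsilon$-far gives $\epsilon<\|\rho_{ABC}-\tilde\rho_{ABC}\|_1=\sum_c p_c\|\rho^c_{AB}-\rho^c_A\otimes\rho^c_B\|_1$, and $\|\cdot\|_1\le\sqrt{d_1d_2}\,\|\cdot\|_2$ then yields $\sum_c p_c\delta_c>\epsilon/\sqrt{d_1d_2}$. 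Feeding this into the Poisson-truncation bound $\mathbb{E}\!\left[K\mathbf{1}_{\{K\ge4\}}\right]\ge(1-\tfrac{5}{2e})\min\{\lambda,\lambda^4\}$ (with $\lambda=mp_c$) and the same convexity/case split on whether $mp_c\ge1$ used in \cite{Canonne:2018:TCI:3188745.3188756} reproduces the bound $\mathbb{E}(A)\ge(1-\tfrac{5}{2e})\min\{\tfrac{m\epsilon^2}{4d_1d_2},\tfrac{m^4\epsilon^4}{32d_1^2d_2^2n^3}\}$ verbatim, because Estimator-independent and Estimator-collective share the same mean.

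For the variance I would use $\mathrm{Var}(A_c)=\mathbb{E}\!\left[\mathrm{Var}(A_c\mid|S_c|)\right]+\mathrm{Var}\!\left(\mathbb{E}[A_c\mid|S_c|]\right)$. Scaling Lemma \ref{variance-independent} by $|S_c|^2$ gives $\mathrm{Var}(A_c\mid|S_c|=k)=O(d_1d_2\delta_c^2 k+d_1^2d_2^2)$ for $k\ge4$, so the first term is $O(d_1d_2\delta_c^2\,mp_c+d_1^2d_2^2\Pr[|S_c|\ge4])$; the second term is $\delta_c^4\,\mathrm{Var}(|S_c|\mathbf{1}_{\{|S_c|\ge4\}})=O(\delta_c^2\,mp_c+\mathbb{E}(A_c))$ after the elementary Poisson estimates $\mathrm{Var}(K\mathbf{1}_{\{K\ge4\}})=O(\lambda+1)$ for $\lambda\ge1$, $=O(\lambda^4)$ for $\lambda<1$, and the crude bound $\delta_c^2\le(\|\rho^c_{AB}\|_2+\|\rho^c_A\otimes\rho^c_B\|_2)^2=O(1)$. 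Summing over $c$: the constant terms give $O(d_1^2d_2^2\min\{m,n\})$ because $\sum_c\Pr[|S_c|\ge4]\le\min\{\sum_c mp_c,\,n\}=\min\{m,n\}$, while $\sum_c d_1d_2\delta_c^2 mp_c$ splits into $O(d_1d_2\,\mathbb{E}(A))$ over the classes with $mp_c\ge4$ (where $\mathbb{E}[|S_c|\mathbf{1}_{\{|S_c|\ge4\}}]\ge mp_c/4$) plus $O(d_1d_2\min\{m,n\})$ over the classes with $mp_c<4$ (where $\delta_c^2 mp_c=O(mp_c)$ and $\sum_{mp_c<4}mp_c\le\min\{m,4n\}$); absorbing $d_1d_2\min\{m,n\}$ into $d_1^2d_2^2\min\{m,n\}$ gives $\mathrm{Var}(A)=O(d_1^2d_2^2\min\{m,n\}+d_1d_2\,\mathbb{E}(A))$.

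The genuinely new work here is only the bookkeeping of the dimension factors $d_1d_2$ and $d_1^2d_2^2$ contributed by Lemma \ref{variance-independent}, which is exactly what turns the collective-setting $\min\{m,n\}$ into $d_1^2d_2^2\min\{m,n\}$ and the collective-setting $\mathbb{E}(A)$ into $d_1d_2\,\mathbb{E}(A)$. I expect the main obstacle to be the one already hidden inside Proposition \ref{ciev}: establishing the two-term $\min$ in the expectation lower bound, i.e.\ solving the optimization of $\sum_c\delta_c^2\,\mathbb{E}[\mathrm{Poi}(mp_c)\mathbf{1}_{\{\ge4\}}]$ over adversarial $(p_c,\delta_c)$ with $\sum_c p_c\delta_c\gtrsim\epsilon/\sqrt{d_1d_2}$, $\sum_c p_c=1$, and $|C|=n$, and pinning down the constant $1-\tfrac{5}{2e}$ from the Poisson truncation — but that optimization is identical to the classical one and is carried out in the appendix, so nothing beyond careful transcription is required.
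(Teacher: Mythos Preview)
Your proposal is correct and follows the same approach as the paper: the expectation bounds are inherited verbatim from Proposition~\ref{ciev} (the two estimators share the same mean), and the variance is handled by the law of total variance conditioned on $a=(a_c)_{c\in C}$, with Lemma~\ref{variance-independent} in place of Lemma~\ref{collective-testing}. The paper's bookkeeping is slightly more direct than yours: rather than bounding $\mathbb{E}[a_c\mathbf{1}_{a_c\ge4}]\le mp_c$ and then case-splitting on $mp_c\gtrless4$ to recover $\mathbb{E}(A)$, it simply notes that $\sum_c d_1d_2\,b_c\,\mathbb{E}[a_c\mathbf{1}_{a_c\ge4}]=d_1d_2\,\mathbb{E}(A)$ by definition, and for $\mathrm{Var}(\mathbb{E}[A\mid a])$ it invokes Lemma~\ref{Poisson} from \cite{Canonne:2018:TCI:3188745.3188756} directly instead of redoing the Poisson tail estimates.
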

\begin{proof}
Reuse the proof of \cite{Canonne:2018:TCI:3188745.3188756}, the bound of $\mathbb{E}(A)$ here is exactly the same as the bound of $\mathbb{E}(A)$ in joint measurement setting given in the Appendix.

By the law of total variance,
\begin{align*}
\mathrm{Var} A=\mathbb{E}[\mathrm{Var}(A|a)]+\mathrm{Var}(\mathbb{E}[A|a])
\end{align*}
where $a=(a_c)_{c\in C}$.

According to Lemma \ref{variance-independent}, we know that there exists a constant $t>0$ such that
\begin{align*}
&\mathbb{E}[\mathrm{Var}(A|a)]\\
\leq& \mathbb{E}a_c^2[t(\frac{d_1d_2b_c}{a_c}+\frac{d_1^2d_2^2}{a_c^2})\mathbb{1}_{a_c\geq 4}]\\
=&t(d_1d_2\mathbb{E}(A)+d_1^2d_2^2\mathbb{E}\mathbb{1}_{a_c\geq 4})\\
\leq &t(d_1d_2\mathbb{E}(A)+d_1^2d_2^2\mathbb{E}\mathbb{1}_{a_c\geq 1})\\
\leq &t(d_1d_2\mathbb{E}(A)+d_1^2d_2^2\min\{n,m\})
\end{align*}
where we use the fact that
\begin{align*}
\mathbb{E}\mathbb{1}_{a_c\geq 1}=\sum_{c\in C}[1-e^{-mp_c}]\leq |C|=n,\\
\sum_{c\in C}[1-e^{-mp_c}]\leq \sum_{c\in C}mp_c =m.
\end{align*}
The second term satisfies
\begin{align*}
\mathrm{Var}(\mathbb{E}[A|a])=\sum_{c\in C} b_c^2\mathrm{Var[a_c\mathbb{1}_{a_c\geq 4}]}\leq \sum_{c\in C} 2^2\mathbb{E}[a_c\mathbb{1}_{a_c\geq 4}]\leq 4R\mathbb{E}(A).
\end{align*}
\end{proof}

The chosen of $m_0=\max\{\frac{\sqrt{n}d_1^2d_2^2}{\epsilon^2},\min\{\frac{d_1^{\frac{3}{4}}d_2^{\frac{3}{4}}n^{\frac{7}{8}}}{\epsilon},\frac{d_1^{\frac{6}{7}}d_2^{\frac{6}{7}}n^{\frac{6}{7}}}{\epsilon^{\frac{8}{7}}}\}\}$
actually guarantees that
$$
\min\{\frac{m_0\epsilon^2}{d_1d_2},\frac{m_0^4\epsilon^4}{d_1^2d_2^2n^3}\}\geq d_1d_2\sqrt{\min\{m_0,n\}}.
$$
By setting $m=Lm_0$, we always have
\begin{align*}
&\min\{\frac{m\epsilon^2}{d_1d_2},\frac{m^4\epsilon^4}{d_1^2d_2^2n^3}\}\geq \sqrt{L}\times d_1d_2\sqrt{\min\{m,n\}}\\
\Rightarrow &\xi\geq \frac{(1-\frac{5}{2e})\sqrt{L}}{64}\times d_1d_2\sqrt{\min\{m,n\}},\\
&\mathbb{E}(A)\geq \frac{(1-\frac{5}{2e})\sqrt{L}}{32}\times d_1d_2\sqrt{\min\{m,n\}}.
\end{align*}
If $\rho$ is conditionally independent, then
\begin{align*}
\mathrm{Pr}[A>\xi]\leq\frac{\mathrm{Var}(A)}{\xi^2}= O(d_1d_2\min\{m,n\}{\xi^2})\leq \frac{1}{3}.
\end{align*}
If $\rho$ is $\epsilon$-far from conditionally independent, then
\begin{align*}
\mathrm{Pr}[A<\xi]\leq\mathrm{Pr}[A<\frac{\mathbb{E}(A)}{2}]\leq\mathrm{Pr}[|A-\mathbb{E}(A)|\geq
\frac{\mathbb{E}(A)}{2}] \leq\frac{\mathrm{Var}(A)}{4\mathbb{E}^2(A)}= O(\frac{d_1^2d_2^2\min\{m,n\}}{\mathbb{E}^2(A)}+\frac{d_1d_2}{\mathbb{E}(A)})\leq \frac{1}{3}.
\end{align*}
Therefore, the following is valid:
\paragraph{Independent measurement: partial restatement of Theorem \ref{cid}}
$O(\max\{\frac{\sqrt{n}d_1^2d_2^2}{\epsilon^2},\min\{\frac{d_1^{\frac{3}{4}}d_2^{\frac{3}{4}}n^{\frac{7}{8}}}{\epsilon},\frac{d_1^{\frac{6}{7}}d_2^{\frac{6}{7}}n^{\frac{6}{7}}}{\epsilon^{\frac{8}{7}}}\}\})$ copies are sufficient to test whether $\rho_{ABC}$ is conditionally independent or $\epsilon$-far from $\mathcal{P}_{A,B|C}$ with a classical-quantum-quantum state $\rho_{ABC}\in\tau_{ABC}$ where $|C|=n$ using independent measurement.
\section{Discussion and Acknowledgments}
There are many interesting open problems. Developing lower bound techniques for independent measurement and local measurement is of great interest. As we mentioned, there is no much lower bound technique for quantum state property testing, even information-theoretical lower bounds for joint measurement. Also, techniques for classical property testing are not applicable in the independent measurement and local measurement setting because of the rich structure of independent (local) measurement.
 It would also be interesting to consider the problems of this paper for the low-rank quantum states, where some techniques were developed in \cite{kueng2016distinguishing}. 

We thank Steve Flammia for his insightful discussion about the $k$-local tomography.
We thank Youming Qiao for his helpful comments on the previous version of this manuscript. We thank Tongyang Li for pointing out relevant reference \cite{GL19}. We thank Ryan O'Donnell and John Wright for telling us the Sanov's theorem and its relation to \cite{HHJ+16}.
This work was supported by DE180100156.
\bibliographystyle{alpha}
\bibliography{opt-tomo}
\section{Appendix}
The following is an analysis of the conditional independence testing algorithm for probability distributions given in \cite{Canonne:2018:TCI:3188745.3188756}, which is relevant to the correctness our Algorithm \ref{algo:Conditional-Independence-Test-Joint}.

To show the correctness of this algorithm, the following lemma is used.
\begin{lemma}\cite{Canonne:2018:TCI:3188745.3188756}\label{Poisson}
There exists an absolute constant $R> 0$ such that, for any $\lambda>0$ and $N$ distributed according to $\mathrm{Poisson}(\lambda)$,
\begin{align*}
\mathrm{Var}[N\mathbb{1}_{N\geq 4}]\leq R\mathbb{E}[N\mathbb{1}_{N\geq 4}].
\end{align*}
\end{lemma}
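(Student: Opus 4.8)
The plan is to set $Y:=N\,\mathbb{1}_{N\geq 4}$ and show $\mathrm{Var}[Y]\leq R\,\mathbb{E}[Y]$ for an absolute constant $R$, splitting into the regime where $\lambda$ is bounded and the regime where $\lambda$ is large, since no single elementary estimate handles both: for small $\lambda$ a crude second-moment bound is adequate, while for large $\lambda$ one must exploit that $Y$ differs from $N$ only by a bounded amount.

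For the bookkeeping I would first record the size-biasing identity $\mathbb{E}[N\,g(N)]=\lambda\,\mathbb{E}[g(N+1)]$ valid for $N\sim\mathrm{Poisson}(\lambda)$, and apply it twice to get $\mathbb{E}[Y]=\lambda\,\mathrm{Pr}[N\geq 3]$ and $\mathbb{E}[N^{2}\mathbb{1}_{N\geq4}]=\lambda(\lambda\,\mathrm{Pr}[N\geq 2]+\mathrm{Pr}[N\geq 3])$. Then, using the trivial bound $\mathrm{Var}[Y]\leq\mathbb{E}[Y^{2}]=\mathbb{E}[N^{2}\mathbb{1}_{N\geq4}]$, the quotient $\mathbb{E}[Y^{2}]/\mathbb{E}[Y]$ simplifies to $1+\lambda\,\mathrm{Pr}[N\geq 2]/\mathrm{Pr}[N\geq 3]$. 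This is continuous on $(0,\infty)$ and, since $\mathrm{Pr}[N\geq 2]\sim\lambda^{2}/2$ and $\mathrm{Pr}[N\geq 3]\sim\lambda^{3}/6$ as $\lambda\to 0^{+}$, it tends to $4$; hence it is bounded by some $R_{1}<\infty$ on $(0,\lambda_{0}]$ for any fixed $\lambda_{0}$, and I would take $\lambda_{0}=9$.

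For $\lambda>\lambda_{0}$ I would instead write $N-Y=N\,\mathbb{1}_{N\leq 3}$, which takes values in $\{0,1,2,3\}$, and apply the triangle inequality for the $L^{2}$ norm of the centered variables: $\sqrt{\mathrm{Var}[Y]}\leq\sqrt{\mathrm{Var}[N]}+\sqrt{\mathrm{Var}[N-Y]}\leq\sqrt{\lambda}+3$, using $\mathrm{Var}[N-Y]\leq\mathbb{E}[(N\mathbb{1}_{N\leq3})^{2}]\leq 9$. For $\lambda\geq 9$ this gives $\mathrm{Var}[Y]\leq(\sqrt{\lambda}+3)^{2}\leq 4\lambda$, whereas $\mathbb{E}[Y]=\lambda-\mathbb{E}[N\mathbb{1}_{N\leq3}]\geq\lambda-3\geq\lambda/2$, so $\mathrm{Var}[Y]\leq 8\,\mathbb{E}[Y]$ in this range. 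Taking $R:=\max\{R_{1},8\}$ completes the argument.

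The main obstacle, or the point most easily gotten wrong, is the large-$\lambda$ regime: the naive bound $\mathrm{Var}[Y]\leq\mathbb{E}[Y^{2}]$ loses a factor of $\lambda$ there, giving only $\mathrm{Var}[Y]=O(\lambda^{2})$ while the truth and the target are $O(\lambda)=O(\mathbb{E}[Y])$, so one genuinely needs to compare $Y$ against $N$ itself rather than bound it in isolation; the rest is routine Poisson calculus.
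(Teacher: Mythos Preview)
Your proof is correct. The size-biasing identity gives exactly the expressions you claim, the small-$\lambda$ analysis is sound (the ratio $1+\lambda\,\mathrm{Pr}[N\geq2]/\mathrm{Pr}[N\geq3]$ is continuous on $(0,\infty)$ with limit $4$ at $0^{+}$, hence bounded on $(0,9]$), and the large-$\lambda$ argument via Minkowski's inequality for $Y=N-(N\mathbb{1}_{N\leq3})$ is clean and the arithmetic checks out.

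There is nothing to compare against in this paper: the lemma is quoted verbatim from \cite{Canonne:2018:TCI:3188745.3188756} and used as a black box, with no proof supplied here. Your argument is a self-contained and efficient proof of the cited fact; the two-regime split you identify (crude second-moment bound for bounded $\lambda$, comparison with $N$ for large $\lambda$) is indeed the natural route, and your remark that the naive bound $\mathrm{Var}[Y]\leq\mathbb{E}[Y^{2}]$ would lose a factor of $\lambda$ in the large-$\lambda$ regime is exactly the point.
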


An analysis of $\mathbb{E}(A)$ and $\mathrm{Var}(A)$ as follows. Let $a_c=|S_c|$, $b_c=||\rho^c_{AB}-\rho^c_A\otimes\rho^c_B||_2^2$.
From Lemma \ref{joint-testing} and direct observation,
\begin{align*}
\mathbb{E}[A|a_c]=a_cb_c\mathbb{1}_{\{a_c\geq 4\}}.
\end{align*}
According to the Poissonization technique, we know that $a_c$s are independent, and the distribution of $a_c$ is governed by $\mathrm{Poisson}(mp_c)$.

Therefore,
\begin{align*}
\mathbb{E}[A]=\sum_{c\in C}b_c\mathbb{E}[a_c\mathbb{1}_{\{a_c\geq 4\}}]=\sum_{c\in C} b_c\sum_{k\geq 4} k\frac{(mp_c)^k}{k!}=\sum_{c\in C} b_c f(mp_c),
\end{align*}
where
\begin{align*}
f(x)=x-e^{-x}\sum_{k=0}^2 \frac{x^{k+1}}{k!}=e^{-x}\sum_{k=3}^{\infty} \frac{x^{k+1}}{k!}
\end{align*}
Note that $x\in\mathbb{R}^{+}$, and we always have
\begin{align*}
f(x)\geq \gamma\min\{x,x^4\},
\end{align*}
where $\gamma=f(1)=1-\frac{5}{2e}$.

To see this, if $x\geq 1$,
\begin{align*}
\{\frac{f(x)}{x}\}'=e^{-x}\frac{x^2}{2}\geq 0.
\end{align*}
If $x\leq 1$,
\begin{align*}
\{\frac{f(x)}{x^4}\}'=\{e^{-x}\sum_{k=0}^{\infty}\frac{x^k}{(k+3)!}\}'=e^{-x}\sum_{k=0}^{\infty}[\frac{k+1}{(k+4)!}-\frac{1}{(k+3)!}]x^k=-3e^{-x}\sum_{k=0}^{\infty}\frac{x^k}{(k+4)!}\leq 0.
\end{align*}

That is,
\begin{align*}
\mathbb{E}[A]\geq \sum_{c\in C}\gamma\min\{mp_c,m^4p_c^4\}b_c.
\end{align*}

In cases where $\rho_{ABC}$ is conditionally independent, then $b_c=\tr||\rho^c_{AB}-\rho^c_A\otimes\rho^c_B||_2^2=0$,
\begin{align*}
\mathbb{E}[A]=0.
\end{align*}

In cases where $\rho_{ABC}=\sum_{c\in C}p_c\op{c}{c}\otimes \rho^c_{AB}$ is $\epsilon$-far from $\mathcal{P}_{A,B|C}$, let $\tilde{\rho}_{ABC}=\sum_{c\in C}p_c\op{c}{c}\otimes \rho^c_{A}\otimes \rho^c_{B}$, and we have
\begin{align*}
&||\rho_{ABC}-\tilde{\rho}_{ABC}||_1> \epsilon\\
\Leftrightarrow &\sum_{c\in C}p_c ||\rho^c_{AB}-\rho^c_{A}\otimes \rho^c_{B}||_1> \epsilon\\
\Rightarrow &\sum_{c\in C}p_c ||\rho^c_{AB}-\rho^c_{A}\otimes \rho^c_{B}||_2> \frac{\epsilon}{\sqrt{d_1d_2}}\\
\mathrm{that~is,}~&\sum_{c\in C}mp_c \sqrt{b_c}> \frac{m\epsilon}{\sqrt{d_1d_2}}\\
\Rightarrow &\sum_{c\in C,mp_c\geq 1}mp_c \sqrt{b_c}> \frac{m\epsilon}{2\sqrt{d_1d_2}} \\
\mathrm{or}~&\sum_{c\in C,mp_c< 1}mp_c \sqrt{b_c}> \frac{m\epsilon}{2\sqrt{d_1d_2}}\\
\Rightarrow &\sum_{c\in C,mp_c\geq 1}mp_c b_c \geq\frac{(\sum_{c\in C,mp_c\geq 1}mp_c \sqrt{b_c})^2}{\sum_{c\in C,mp_c\geq 1} mp_c}\geq\frac{(\frac{m\epsilon}{2\sqrt{d_1d_2}})^2}{m}>\frac{m\epsilon^2}{4d_1d_2}\\
\mathrm{or}~&\sum_{c\in C,mp_c< 1}(mp_c)^4 b_c\geq \frac{(\sum_{c\in C,mp_c< 1}mp_c b_c^{\frac{1}{4}}b_c^{\frac{3}{4}\times\frac{1}{3}})^4}{(\sum_{c\in C,mp_c< 1}b_c^{\frac{1}{3}})^3}\geq \frac{(\sum_{c\in C,mp_c< 1}mp_c \sqrt{b_c})^4}{2n^3}>\frac{m^4\epsilon^4}{32d_1^2d_2^2n^3}.\\
\Rightarrow&\mathbb{E}[A]> \gamma\min\{\frac{m\epsilon^2}{4d_1d_2},\frac{m^4\epsilon^4}{32d_1^2d_2^2n^3}\}.
\end{align*}
The second last inequalities accord with the Cauchy–Schwarz inequality, the H\"older inequality and $b_c\leq 2$.

By the law of total variance, the bound of the variance of $A$ is
\begin{align*}
\mathrm{Var} A=\mathbb{E}[\mathrm{Var}(A|a)]+\mathrm{Var}(\mathbb{E}[A|a])
\end{align*}
where $a=(a_c)_{c\in C}$.

According to Lemma \ref{joint-testing}, we know that a constant $t>0$ exists such that
\begin{align*}
\mathbb{E}[\mathrm{Var}(A|a)]\leq \mathbb{E}a_c^2[t(\frac{b_c}{a_c}+\frac{1}{a_c^2})\mathbb{1}_{a_c\geq 4}]=t(\mathbb{E}(A)+\mathbb{E}\mathbb{1}_{a_c\geq 4})\leq t(\mathbb{E}(A)+\mathbb{E}\mathbb{1}_{a_c\geq 1})\leq t(\mathbb{E}(A)+\min\{n,m\})
\end{align*}
where we use the fact that
\begin{align*}
\mathbb{E}\mathbb{1}_{a_c\geq 1}=\sum_{c\in C}[1-e^{-mp_c}]\leq |C|=n,\\
\sum_{c\in C}[1-e^{-mp_c}]\leq \sum_{c\in C}mp_c =m.
\end{align*}
The second term can be bounded as
\begin{align*}
\mathrm{Var}(\mathbb{E}[A|a])=\sum_{c\in C} b_c^2\mathrm{Var[a_c\mathbb{1}_{a_c\geq 4}]}\leq \sum_{c\in C} 2^2\mathbb{E}[a_c\mathbb{1}_{a_c\geq 4}]\leq 4R\mathbb{E}(A).
\end{align*}
\end{document}